\newtheorem{thm}{Theorem}
\newtheorem{lem}[thm]{Lemma}
\newtheorem{prop}[thm]{Proposition}
\newtheorem{defin}[thm]{Definition}
\newtheorem{coro}[thm]{Corollary}
\newenvironment{proof}[1][Proof]{\begin{trivlist}
\item[\hskip \labelsep {\bfseries #1}]}{\end{trivlist}}
\begin{document}

\title{Computing Minimum Tile Sets to Self-Assemble Colors Patterns\thanks{This work is supported in part by NSF Grants CCF-1049899 and CCF-1217770 to A.~C.~J.~and M-Y.~K. and HIIT Pump Priming Grants No. 902184/T30606 to S.~S.  It was published at ISAAC 2013, Hong Kong.}}

\author{
Aleck C. Johnsen\thanks{Department of Electrical Engineering and Computer Science, 
Northwestern University, Evanston, Illinois, USA. 
aleck.johnsen@gmail.com, kao@northwestern.edu}
\and 
Ming-Yang Kao\footnotemark[2]
\and 
Shinnosuke Seki\thanks{
Helsinki Institute of Information Technology (HIIT)}
\thanks{
Department of Information and Computer Science, Aalto University, P.~O.~Box~15400, FI-00076, Aalto, Finland. shinnosuke.seki@aalto.fi
}
}


\maketitle

\begin{abstract}
	Patterned self-assembly tile set synthesis (PATS) aims at finding a minimum tile set to uniquely self-assemble a given rectangular color pattern. 
	For $k \ge 1$, $k$-PATS is a variant of PATS that restricts input patterns to those with at most $k$ colors. 
	We prove the {\bf NP}-hardness of 29-PATS, where the best known is that of 60-PATS. 
\end{abstract}

	\section{Introduction}
Self-assembly, deterministic or otherwise, is a powerful force in nature.  Cosmological, biological, and even human sociological systems are all capable of self-assembling with a surprising amount of order on a macro scale, even as the processes used for their construction rely solely on local rules and effects.

We are interested in DNA computation and self-assembly and the tile assembly systems used to model it.  Winfree showed that such DNA computation is Turing-universal \cite{Win1}.  A natural question to then ask is, "what is a minimal tile set necessary for performing a certain computation?"  Barish et al. \cite{Bar} have early results for a binary counter.  Adleman was the first to also consider the time it takes for self-assembly, and also to ask a related question about minimal tile sets for particular patterns \cite{Adle}.

On self-assembly, Brun has introduced a series of tiling models to perform simple computations like multiplication\cite{B1}, and to non-deterministically solve NP-Hard problems \cite{B2,B3,B4}.  Two main goals are ``shape-building" and ``pattern-painting."  Real world molecules can potentially be designed to deterministically form a specific target shape, as implemented as DNA double-crossover molecules by Winfree \cite{Win3}.

This paper focuses on the self assembly of target color patterns rather than shapes.  Various patterns of practical significance were exhibited to self-assemble from DNA tiles, including Sierpinski triangle \cite{RPW} and binary counter \cite{Bar}.  As illustrated in Fig.~\ref{fig:binary_counter}, blue tile types B1, B2 and red tile types R1, R2 implement a half-subtractor as tile copies self-assemble the binary counter. 

Based on the abstract Tile Assembly Model (Winfree again) \cite{Win2}, Ma and Lombardy introduce{\em Patterned self-assembly tile set synthesis} (PATS) \cite{Ma1}.  PATS aims at finding a minimum tile type set with which a rectilinear tile assembly system (RTAS) uniquely self-assembles a given pattern.  {PATS} tries to compute the tile set of minimum size that uniquely self-assembles a given rectangular pattern.  The general case, allowing an arbitrary number of colors for the input pattern, was shown to be NP-hard \cite{Cze}.

For $k \ge 1$, {\em $k$-PATS} is a PATS' subproblem of practical significance in which inputs are restricted to patterns with at most $k$ colors.  Seki proved that 60-PATS is {\bf NP}-hard \cite{Seki}.  He designed a set $T$ of 84 tile types such it self-assembles a 3SAT evaluator circuit pattern $P(\phi)$, which is reduced from a given 3SAT instance $\phi$, if and only if the instance is satisfiable.  Gadget (sub-)patterns of $P(\phi)$ imply the property that $T$ is the unique smallest tile type set for $P(\phi)$ no matter what $\phi$ is.  Tile types in $T$ implement {\tt OR}, {\tt AND}, and {\tt NOT} gates in order to evaluate $\phi$.

Instead of 3SAT, in this paper we employ SUBSET SUM problem, which can be evaluated using just a half-subtractor similar to Brun \cite{B2}, and prove the following stronger result. 
\begin{thm}
\label{thm:main}
	29-PATS is {\bf NP}-hard. 
\end{thm}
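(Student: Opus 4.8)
\noindent The plan is to reduce from SUBSET SUM. Given an instance $I=(a_1,\dots,a_n;t)$, write every $a_i$ and $t$ in binary with a common bit width $w=O(\log\sum_i a_i)$. I will build in polynomial time a rectangular color pattern $P(I)$ using at most $29$ colors, together with an integer threshold $m$, and prove that $I$ is a YES-instance if and only if some RTAS with at most $m$ tile types uniquely self-assembles $P(I)$; since SUBSET SUM is \textbf{NP}-complete, this establishes Theorem~\ref{thm:main}. The main reason $29$ colors suffice, where Seki needed $60$, is that checking a candidate subset requires only a single arithmetic primitive --- a bit-serial subtractor in the style of Brun's half-subtractor tilings \cite{B2}, reused across all $n$ stages --- instead of the three distinct \texttt{OR}/\texttt{AND}/\texttt{NOT} gadgets of a 3SAT evaluator, so a single small palette of ``gate'' colors can be shared throughout.

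The pattern $P(I)$ is a vertical stack of $n$ congruent \emph{stage bands} sitting on top of an input region; the part of $P(I)$ painted by the $L$-shaped seed encodes $t$ in binary (and, where convenient, the numbers $a_i$). A $w'$-bit \emph{running value}, with $w'$ somewhat larger than $w$ so that two's-complement arithmetic never wraps around, enters the bottom of the stack equal to $t$ and is threaded upward; inside stage band $i$, drawn as fixed forcing sub-patterns, sits a bit-serial conditional subtractor that replaces the incoming value $v$ by $v-x_i a_i$, where $x_i\in\{0,1\}$ is read off a single \emph{selector cell} of the band. The decisive design choice is that the cells carrying the running value, the horizontal borrow, and the selector bit are all painted \emph{bit-obliviously} --- the same color is used for such a cell no matter which bit it carries --- so $P(I)$ reveals the verifier's wiring but neither its computation nor the bits $x_i$. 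Only a final \emph{readout row} above the top band is painted non-obliviously, and $P(I)$ pins it to the all-zero string. The remaining colors pay for \emph{frame} and \emph{marker} sub-patterns that delimit the bands, anchor the subtractor cells, and force any thrifty tile set to use pairwise distinct tile types at the distinguished positions; every band is padded so that its two possible behaviours ($x_i=0$ and $x_i=1$) cost the same number of tile types, which makes the ``canonical'' verifier tile set have a fixed size $m$ that is polynomial in $|I|$ and computable by the reduction without solving $I$. A routine count --- oblivious wire colors, subtractor-gate colors, band-frame and marker colors, plus the input-encoding and readout colors --- is arranged to total at most $29$.

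For correctness, the ``only if'' direction is the direct construction: from a solution $S$ with $\sum_{i\in S}a_i=t$, take the canonical tile set whose band-$i$ tiles hard-wire $x_i=1$ for $i\in S$ and $x_i=0$ otherwise; one checks that it has $m$ tile types, that from the prescribed seed it is directed with a unique terminal assembly, and that the coloring of that assembly equals $P(I)$ cell by cell --- the running value descends to $0$ at the top exactly because $S$ is a solution, matching the pinned readout row. The ``if'' direction is the rigidity argument: any tile set $T$ with $|T|\le m$ whose RTAS uniquely self-assembles $P(I)$ must, since the frame and marker gadgets force a fresh tile type at every distinguished cell, spend essentially all of its budget rebuilding the verifier skeleton, so each band is forced to behave as an honest subtract-$x_ia_i$ stage fed by the band beneath it; the pinned readout then forces the running value to vanish at the top, which certifies that $\{\,i : x_i=1\,\}$ is a SUBSET SUM solution for $I$.

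The step I expect to be the main obstacle is exactly this rigidity argument: one must engineer the frame, marker, subtractor, and readout gadgets so that they are simultaneously (i)~cheap enough, in both colors ($\le 29$) and tile types ($\le m$), and (ii)~rigid enough that no size-$\le m$ tile set can cheat --- for instance by coalescing two running-value bits, or a value wire with a borrow wire, by making a band depend on anything other than the band beneath it, by smuggling in nondeterminism that the uniqueness requirement would otherwise have to reject, or by exploiting a degenerate seed. Secondary hurdles are lining up the bit-serial arithmetic --- borrow propagation, the two's-complement slack bits, bit-width padding --- exactly with the chosen rectangle dimensions, and verifying that the canonical assembly is directed; once the gadgets are pinned down, the polynomial-time computability of $P(I)$ and $m$ and the final $29$-color count are bookkeeping.
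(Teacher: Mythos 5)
Your high-level skeleton matches the paper's: reduce from SUBSET SUM, implement the check with Brun-style half-subtractor tiles, hide the running value in glues rather than colors, pin a final all-zero readout, and use auxiliary forcing sub-patterns to make any thrifty tile set rebuild the intended machinery. But the step you defer as ``the main obstacle'' is in fact the paper's central idea, and the specific design choices you make work against it. In the paper the nondeterministic guess is \emph{not} hard-wired into tiles at all: the tile set $T$ is a single fixed set of $46$ types, independent of the instance, and the only freedom a size-$46$ solution has is the choice of seed glues (the south glues below the first column of each element's block, tagged `*' for subtract and `x' for skip); Proposition~\ref{thm:section3main} shows these seed choices correspond exactly to choices of $S'\subseteq S$. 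Because the budget is the constant $46$, the rigidity argument becomes a meticulous bookkeeping of ``free'' tile types over the sub-patterns \texttt{LB1}--\texttt{LB118} (lower bounds per color summing to $46$) and \texttt{GE1}--\texttt{GE12} (forcing the glue names up to isomorphism). Your version instead hard-wires $x_i$ into band-$i$ tiles and lets the threshold $m$ grow polynomially with $|I|$; with band-specific selector tiles and a non-constant budget, none of this counting-style rigidity is available, and you offer no replacement mechanism, so the ``if'' direction of your reduction is unsupported precisely where the paper spends Sections~4 and~5.

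There is also a model-level problem in how you feed the instance into the reduction: you propose that ``the part of $P(I)$ painted by the $L$-shaped seed encodes $t$ (and, where convenient, the $a_i$).'' In PATS the seed is part of the RTAS the black box returns, not part of the reduction's output, so nothing encoded ``in the seed'' is under your control; the instance data must be pinned by the pattern's colors. The paper does exactly this: $n$ is written in the Black/White colors of the first column and each $a_i$ in the RAB/UAB/el-Black versus RAW/UAW/el-White coloring of its block of columns, while the seed's glue freedom is reserved for (and exhausted by) the subset choice. So the two roles are inverted in your proposal: you spend the seed on the input (which you may not) and spend tile types on the guess (which destroys the constant-budget rigidity). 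Fixing this essentially forces you back to the paper's architecture: a fixed $T$ of constant size, instance encoded in colors, choice in seed glues, and a family of lower-bound and glue-enforcement sub-patterns whose color count is what the bound of $29$ is really about.
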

\begin{figure}[tb]
\begin{center}
\includegraphics[width=160pt]{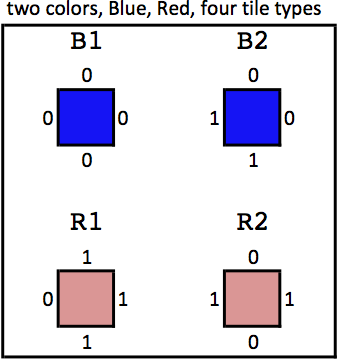}
\hspace{11pt}
\includegraphics[width=290pt]{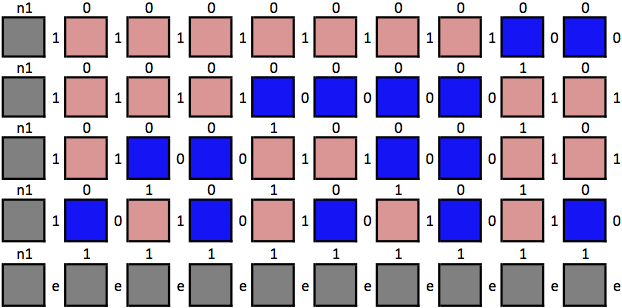}
\end{center}
\caption{We use four tile types, two colored Red and Blue each, to implement a half-subtractor.  For each tile type with glues named (n,s,e,w), we see that its glues perform the operation ``west minus south," then pass the output to the east and the carryout to the north.  Interpreting Red types as 1 bits and Blue types as 0 bits, with MSBs at the top, we see that the seed's north glues in row 0 interact with this tile type set to successively subtract 1 from one column to the next.}
\label{fig:binary_counter}
\end{figure}
\section{The Model and Preliminaries}
We begin our sufficient description of Rectilinear Tile Assembly System (RTAS) with a proxy for DNA proteins and their ``roles," and a method of ``attachment."  For a formal and excellent introduction, see \cite{Roth}.

A \textbf{(tile) type} is a unit-square with five parameters: four glues $g_{n},g_{s},g_{e},g_{w}$ associated with each edge north, south, east, and west; and a color $c$.  A \textbf{tile} is an instance of a tile type.  We assume tiles cannot be rotated or reflected.  By inclusion of a particular tile type in a set, we assume that we have access to infinitely many tiles of the type.  A \textbf{color} is an attribute of a type that can be shared by multiple types within a set- from a certain DNA perspective, it makes tile types indistinguishable and ``serve the same role."  A \textbf{glue} is an attribute of a type that is possibly equal to the glues of other types in a set, in which case the glues \textbf{match}- glues will tell us how we can ``stick together" tiles.

A \textbf{(rectangular) assembly} (of width $w$ and height $h$) is a function from a ``rectangular" subset $\left\{1,...,w\right\}\times \left\{1,...,h\right\}\subset \mathbb{Z}_+^2$ to types such that for any pair of adjacent inputs that share an ``edge," the output types' glues on the shared edge are the same.  A \textbf{(rectangular) color pattern} (of width $w$ and height $h$) is a function from a rectangle $\left\{1,...,w\right\}\times \left\{1,...,h\right\}\subset \mathbb{Z}_+^2$ to colors.

Our model for DNA self-assembly does not take place in a vacuum- we assume the preexistence of an L-shaped \textbf{seed}.  Our assembly function maps indexes in the 0-row and 0-column to seed types, which then have exposed north glues and east glues respectively.  The seed tiles are colored Slate Gray in all of our graphics.  The natural way to understand an assembly is to start with a seed, then the glues work to allow tiles to attach as follows.

A tile can \textbf{attach} at an index $(i,j)$ if and only if both its west glue matches the east glue of a tile already \textbf{placed} at $(i-1,j)$, and its south glue matches the north glue of a tile placed at $(i,j-1)$.
Therefore an assembly builds up from the south-west seed to the north-east.  If an assembly reaches a state such that no more tiles can attach, then it is a \textbf{terminal assembly}.

A \textbf{Rectilinear TAS} is a pair $\mathcal{T} = (T,\sigma_L)$ consisting of a tile type set $T$ and an L-shaped seed $\sigma_L$.  Its \textbf{size} is the cardinality of $T$.  We will use and refer to the requirement that every two tile types must have distinct (south,west) glue-tuples as \textbf{uniqueness}.  Two types that violate uniqueness are said to \textbf{clash}.

For RTASs satisfying uniqueness, we see by induction that fixing the exposed seed glues means that a tile placed at an empty index has necessarily deterministic type, and leads to a deterministic terminal assembly.  As we care mostly for colors, we will also say it \textbf{uniquely assembles a color pattern}.

As illustrated in Fig.~\ref{fig:binary_counter}, we see that \texttt{B2} uniquely attaches at index $(1,1)$ given the exposed seed glues.  The three tiles above it are successive copies of tile type \texttt{R2} and, attaching in turn, none could be replaced by any other tile type.  The assembly as shown is terminal.
The \textbf{Pattern self-assembly tile set synthesis (PATS)} problem \cite{Ma1,Ma2} asks the question, ``given a rectangular color pattern $\mathcal{P}$, what is a RTAS of minimum size that uniquely \textbf{solves} the pattern?"
\begin{defin}
{\scshape $k$-colored Pats ($k$-Pats)} 
\ \\
\begin{tabular}{ll}
{\scshape Input}: & a $k$-colored pattern $\mathcal{P}$; \\
{\scshape Output}:  & a smallest RTAS which uniquely self-assembles $\mathcal{P}$. 
\end{tabular}
\end{defin}
60-PATS was shown to be NP-Hard by Seki \cite{Seki}.  We endeavor to show that 29-PATS is NP-Hard by a polynomial reduction from Subset Sum\footnote{We use the variant of Subset Sum that restricts all elements of a set $S$ to be positive integers, and asks if any subset $S^*\subseteq S$ sums to some target $n\in \mathbb{N}$.} to the size variant of 29-PATS: given a pattern in 29 colors, what is the minimum size of any RTAS that uniquely self-assembles the given pattern.

For the rest of this paper, we assume we have a black box capable of solving the size variant of 29-PATS.
\section{The Circuit}
We reduce an instance $\textbf{SS} = (S,n)$ of Subset Sum to a pattern in 29-colors \texttt{PATTERN} such that
\begin{enumerate}
\item there exists a RTAS $(T,\sigma^*)$ with $|T|=46$ that uniquely assembles \texttt{PATTERN} if and only if \textbf{SS} is solvable;
\item there does not exist any tile type set $T'\neq T$ (ignoring isomorphisms) of size 46 or less such that a RTAS $(T',\sigma')$ can uniquely self-assemble \texttt{PATTERN}.
\end{enumerate}
In this section, we verify (1) for a sub-pattern of \texttt{PATTERN} called \texttt{CIRCUIT} using 26 of the 46 tile types of $T$.\footnote{The other 20 tile types will be necessary and sufficient regardless of the instance \textbf{SS}.}  Specifically, we introduce a subset of the critical tile type set $T$ in Fig.~\ref{fig:tileset1}.  An example of the reduction to \texttt{CIRCUIT} is given in Fig.~\ref{fig:brunexample}.  For solvable \textbf{SS}, $T$ can uniquely assemble \texttt{CIRCUIT}, i.e., $T$ is \textit{sufficient}; therefore, $|T|$ is an upper bound on the output of the black box in these cases.  The reduction works as follows:\\
\begin{figure}[tb]
\begin{center}
\includegraphics[width=450pt]{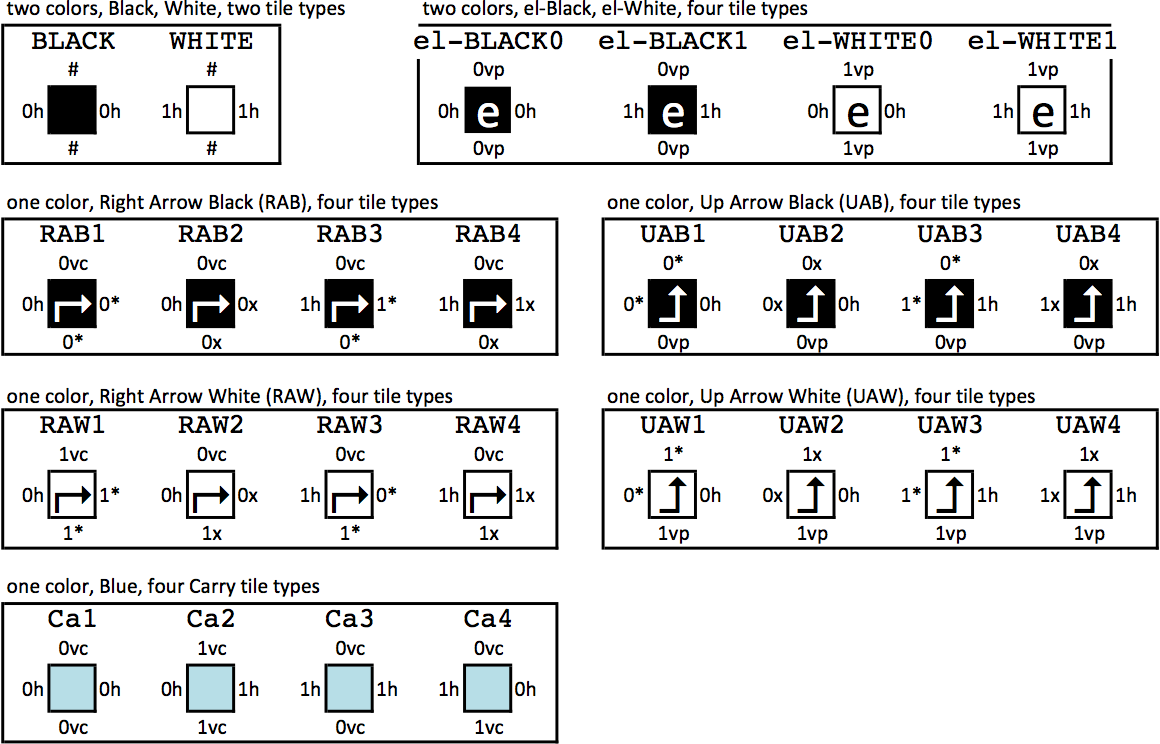}
\end{center}
\caption{These are the 26 tiles types in 9 colors that currently make up the critical set $T$, whether an $S^*$ solving \textbf{SS} exists or not.}
\label{fig:tileset1}
\end{figure}
We determine the height of the circuit to be $\max(\lceil\log_2(n+\Sigma_{\text{all}})\rceil,21)$, where $\Sigma_{\text{all}}$ is the sum of all elements in $S$.\footnote{Later we add a sub-pattern which requires height of at least 21.}  Then $2^{\text{row\#}}$ is greater than both $n$ and $\Sigma_{\text{all}}$. The width of \texttt{CIRCUIT} will be determined online.  Then:
\begin{figure}[tb]
\begin{center}
\includegraphics[width=403pt]{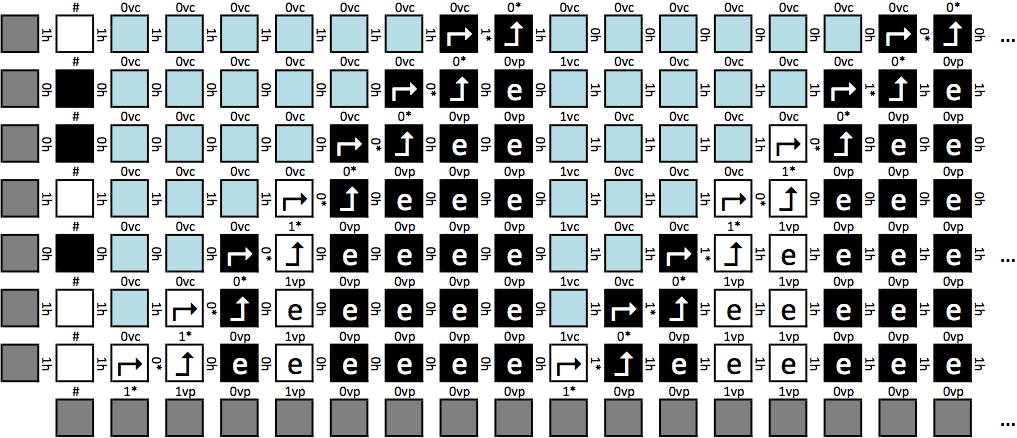}
\includegraphics[width=403pt]{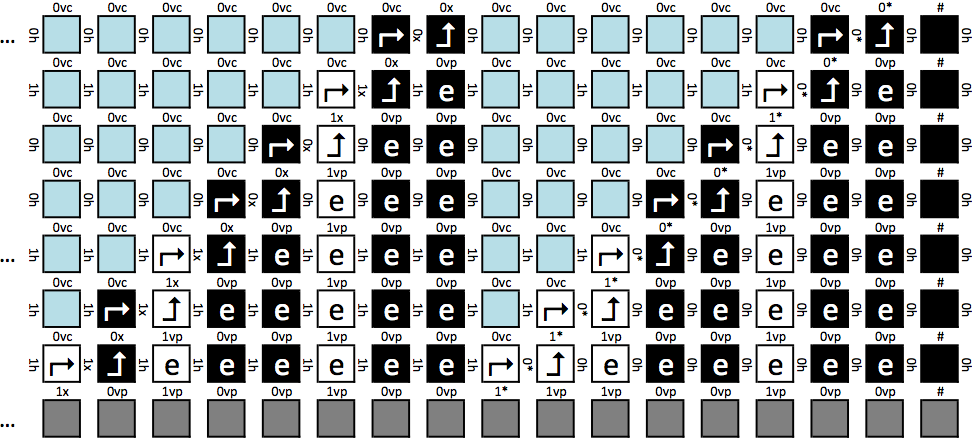}
\end{center}
\caption{For space considerations, and WLOG here, we draw this example with height of 7 rather than 21.  $\textbf{SS} = (S = \left\{11,25,37,39\right\}, n = 75)$, the same example as Brun in \cite{B2} for easy comparison of the tile sets.  $S^* = \left\{11,25,39\right\}$ so we see the seed north glue of 1* in columns 2, 10 and 26; and seed north glue 1x in column 18.}
\label{fig:brunexample}
\end{figure}
\begin{itemize}
\item The target number $n$ from \textbf{SS} is encoded in binary- MSB at the top- as colors of tiles in the first column, Black for 0 and White for 1.  Unique tile types of these colors in $T$ imply that equivalently, $n$ is also encoded in the east glues of this column.
\item Using $(1+\text{row\#})$ columns each, we similarly encode elements of $S$ in succession, MSBs on the right.  A 0 bit is encoded by a column of colors Right-Arrow Black (RAB), Up-Arrow Black (UAB), and el-Black; a 1 bit by RAW, UAW, and el-White.
\item Given our tile type set $T$ and reduction, there are only two colors appearing next to seed tiles in row 1 or column 1 such that the seed has a choice in glues: tile types of colors RAB and RAW have varying south glues.  Compare this to the south glues of UAB for example, whose tile types all have south glue 0vp.
\item Then given $T$, the set of feasible seeds can only vary in their choice of glues in the first column of the encoding of each element of $S$; the choice is to ``tag" the glue `*' to mean we should subtract the current element, or `x' to mean we should not.
\item The Right-Arrow (RA) and Up-Arrow (UA) tile types cooperate to transfer the subtraction ON/OFF signal through an element's columns until it runs off the top.
\item Subtraction of a bit takes place in the RA squares.  Tile types below them preserve bit information of the element vertically.  If subtraction is ON, the RA tile type works as a half-subtractor (like in Fig.~\ref{fig:binary_counter}): it receives a ``running total" bit from the west, subtracts from it the south bit, outputs to the east and ``carries" to the north.  If subtraction is OFF, the RA tile type preserves bit information horizontally (passes it through) and carries 0.
\item All indexes above the squares colored RAB and RAW are required to be color Carry Blue.  These tile types are half-subtractors again, to complete the carry operation correctly from column to column.
\item Below the RA colors, subtraction has already taken place so bit information is preserved horizontally for use by the next element of $S$ to the right.
\item The last column of \texttt{CIRCUIT} is all Black.  We started with $t$, we subtracted some subset of elements of $S$.  For an instance of \textbf{SS} for which $S^*$ exists, we should finish with all-Black 0.
\end{itemize}
\begin{prop}
\label{thm:section3main}
Given an \textbf{SS} and our tile type set $T$, the RTAS $(T,\sigma^*)$ with $|T| = 26$ uniquely assembles \texttt{CIRCUIT} if and only if \textbf{SS} is solvable.
\end{prop}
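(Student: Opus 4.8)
This proposition is property~(1) of the reduction specialized to the sub\-pattern \texttt{CIRCUIT}, and I would prove both implications around a single \emph{simulation lemma} that carries the real content. Write $h=\max(\lceil\log_2(n+\Sigma_{\text{all}})\rceil,21)$ for the circuit height, and for a candidate $S^*\subseteq S$ let $\sigma_{S^*}$ be the L\-shaped seed whose west column exposes the east glues encoding $n$ in binary (MSB on top) and whose south row exposes the forced glues of the element encodings, with the tag \texttt{*} under the first column of the block of each $s_i\in S^*$ and \texttt{x} under the first column of every other block. The lemma I want is: $(T,\sigma_{S^*})$ deterministically fills the whole $w\times h$ rectangle, its colors agree with \texttt{CIRCUIT} in every column, and the east glues leaving a column encode the running total $n$ minus the sum of the elements of $S^*$ whose block lies weakly to the left. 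I would prove it by induction on the column index from left to right. Determinism is automatic from the uniqueness of $T$ (which $T$ satisfies) once the seed is fixed, so the inductive step only has to exhibit, for each incoming $(\text{south},\text{west})$ glue pair, a tile of $T$ that attaches (so the rectangle actually fills, with no stuck position) and then check the glue bookkeeping. That bookkeeping is exactly the itemized behavior of the reduction: a Right\-Arrow tile acts as a half\-subtractor (west bit minus south bit, output east, borrow north) when its incoming signal is \texttt{on} and as a horizontal pass\-through with borrow $0$ when it is \texttt{off}; the Up\-Arrow tiles relay the \texttt{on}/\texttt{off} signal upward through the $1+h$ columns of the block; the Carry\-Blue tiles above the subtracting row finish the borrow propagation column by column; and the remaining tiles copy the instance\-given element bit downward and the running\-total bit rightward, so the interior colors are the element\-determined colors of \texttt{CIRCUIT} irrespective of the running total.

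\emph{From a solution to an assembly ($\Leftarrow$).} If $\textbf{SS}$ is solvable with witness $S^*$, take $\sigma^*=\sigma_{S^*}$. Since $\sum_{s\in S^*}s=n$, after subtracting any prefix of $S^*$ the running total lies in $[0,n]\subset[0,2^{h})$, so no borrow ever leaves the top of the circuit and the simulation lemma applies as stated; moreover the running total leaving the last column is $n-\sum_{s\in S^*}s=0$, so that column is all\-zero, i.e.\ all Black, matching \texttt{CIRCUIT}. Hence $(T,\sigma^*)$ uniquely assembles \texttt{CIRCUIT}.

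\emph{From an assembly to a solution ($\Rightarrow$).} Suppose $(T,\sigma^*)$ uniquely assembles \texttt{CIRCUIT} for some seed $\sigma^*$. Each color occurring in the first column or in an element block of \texttt{CIRCUIT} is realized by a \emph{unique} tile type of $T$, except for the two Right\-Arrow colors whose types differ only in their south glue; consequently the exposed east glues of the west column of $\sigma^*$ are forced to encode $n$, and the exposed north glues of its south row are forced up to the choice, per element $s_i$, of the tag \texttt{*} or \texttt{x}. Let $S^*$ be the elements tagged \texttt{*}; then $\sigma^*$ coincides with $\sigma_{S^*}$ as far as its exposed glues go, and running the simulation lemma for this (now arbitrary) $S^*$, where borrows may leave the top, the last column encodes $n-\sum_{s\in S^*}s$ reduced modulo $2^{h}$. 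Because $h\ge\lceil\log_2(n+\Sigma_{\text{all}})\rceil$, the true value $n-\sum_{s\in S^*}s$ lies in $[n-\Sigma_{\text{all}},\,n]\subseteq(-2^{h},2^{h})$, so it is $\equiv 0\pmod{2^{h}}$ iff it equals $0$. Since \texttt{CIRCUIT} has an all\-Black last column, that residue is $0$; hence $\sum_{s\in S^*}s=n$ and $\textbf{SS}$ is solvable.

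\emph{Where the difficulty lies.} Everything reduces to the simulation lemma, whose proof is a finite but lengthy verification over the $26$ tile types of Fig.~\ref{fig:tileset1}: one enumerates the column types (the first column; inside an element block the subtracting row, the Carry\-Blue region above it, and the pass\-through region below it; the all\-Black last column) together with the possible incoming $(\text{south},\text{west})$ glue pairs, confirms in each case that $T$ supplies a necessarily unique attachable tile so the rectangle fills, and checks that the glues thread the half\-subtractor\-with\-signal semantics and that the colors match \texttt{CIRCUIT}. The seed\-forcing argument and the modular\-arithmetic bound controlling underflow are then short corollaries.
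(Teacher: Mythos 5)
Your proposal is correct and follows essentially the same route as the paper: the paper's (very terse) proof simply appeals to the bullet-point analysis of the reduction and to Brun, asserting that feasible seed-glue choices correspond exactly to subsets $S'\subseteq S$, which is precisely what your column-by-column simulation lemma formalizes. The only difference is one of rigor, not of approach: you make explicit the left-to-right induction and the mod-$2^{h}$ underflow bound coming from the height $\max(\lceil\log_2(n+\Sigma_{\text{all}})\rceil,21)$, a point the paper defers to an informal remark in Section 6.
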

\begin{proof}
The reduction correctly encodes $n$ and the elements of $S$, and correctly does subtraction or not.  The proof is in the previous analysis, for further justification see Brun \cite{B2}.\\
Then our choice over feasible seed glues is exactly equivalent to choices over subsets $S'\subseteq S$ of elements to subtract from $n$ to try to obtain 0.  $\bullet$
\end{proof}
\section{Minimum Tile Complexity}
In the last section, we described a tile type set $T$ and showed that it is \textit{sufficient} to self-assemble the color pattern \texttt{CIRCUIT} that results from reducing any satisfiable instance of Subset Sum, given an appropriate seed $\sigma^*$.

At this point, the size of $T$ is an upper bound on what the black box can return on a satisfiable input \texttt{CIRCUIT}.  But in a trivial example in which the elements of $S$ sum to $n$ exactly, we have no use for any tile types utilizing the glues 0x or 1x so output will be less.  Worse, it is possible that given an unsolvable instance, the black box outputs 26 or less using some other $T'$, making it indistinguishable from a solvable instance of subset sum.

So we have a need to establish a lower bound on the output from the black box for any input.  To do this, we add the tile types in Fig.~\ref{fig:tileset2} to $T$.

\begin{figure}[tb]
\begin{center}
\includegraphics[width=420pt]{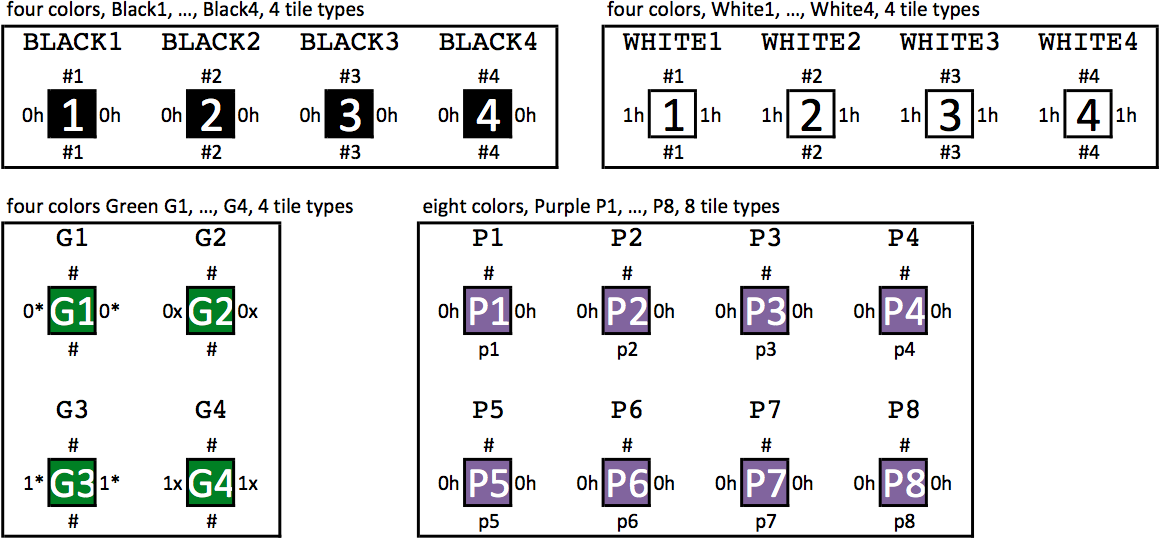}
\end{center}
\caption{We add these 20 tile types to our set $T$.  While they are not necessary to uniquely assemble \texttt{CIRCUIT}, adding them does not affect the sufficiency of $T$, now of size 46; they will be used in auxiliary patterns that will be attached to \texttt{CIRCUIT} such that all 46 tiles are necessary.}
\label{fig:tileset2}
\end{figure}
We will join \texttt{CIRCUIT} with auxiliary patterns into the full input \texttt{PATTERN}; then for each color $c$ that appears in \texttt{PATTERN}, we will prove that the number of tile types colored $c$ needed by any set $T'$ that uniquely assembles \texttt{PATTERN} must be at least the number used by $T$ of color $c$.
Finally, summing over types will give us our lower bound- the black box cannot output less than 46.\\
Auxiliary patterns must meet the following constraints: they must assemble using $T$, i.e., be \textbf{consistent}; they must also ``splice" together using $T$.  Failing either, we lose the sufficiency of $T$.  However, these constraints do endow a benefit: because $T$ is sufficient to assemble \texttt{CIRCUIT} and all auxiliary patterns for reductions from satisfiable instances of Subset Sum, we know for these cases that the black box need not consider tile type sets larger than 46.

So the goal of this section and the next is to design auxiliary sub-patterns such that, including them in \texttt{PATTERN}, the tile set $T$ will be \textit{necessary}.
Sub-patterns in this section are named \texttt{LB\#} for Lower Bound.  We claim that $T$ is sufficient for all auxiliary patterns in Sections 4 and 5 without proof.  The strategy for joining together the various sub-patterns is given as part of correctness in Section 6.

We take it as a given that the 29 colors described in $T$ will be used in some sub-pattern of \texttt{PATTERN}.
There are 22 colors in \texttt{PATTERN} with just one tile type of their color in $T$.\footnote{Colors Black and White from Fig.~\ref{fig:tileset1}, and all 20 colors from Fig.~\ref{fig:tileset2}.}  Each trivially requires one type, so we have finished 22 of our 29 necessary lower bound proofs, leaving 17 tiles unassigned, or \textbf{free}.

Sub-pattern \texttt{LB1} is given in Fig.~\ref{fig:lb1} and is used in the next result.
\begin{figure}[tb]
\begin{center}
\includegraphics[width=420pt]{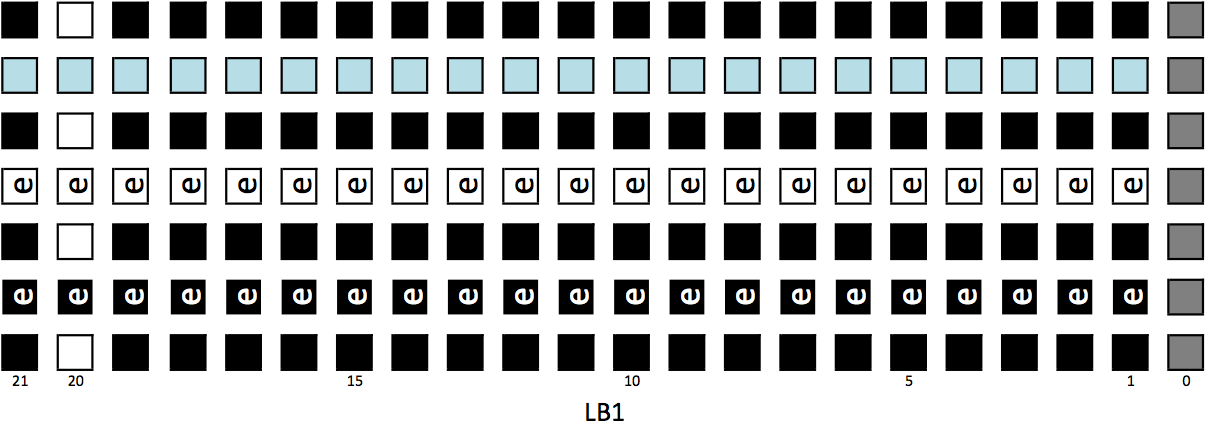}
\end{center}
\caption{\texttt{LB1} is presented sideways for space considerations.  The Slate seed tiles should be interpreted as being the 0-row, then it is the rows that are numbered up to its height of 21.  \texttt{LB1} guarantees that there are at least 2 tile types colored each of el-Black, el-White, and Carry Blue.}
\label{fig:lb1}
\end{figure}
\begin{prop}
\label{thm:lb1}
If \texttt{LB1} appears as a sub-pattern of \texttt{PATTERN}, then in any minimum tile type set solving \texttt{PATTERN}, either there must be at least 2 tile types with distinct east glues of each color el-Black, el-White, and Carry Blue as we suggest in $T$; or the black box must find a minimum tile type set larger than 46.
\end{prop}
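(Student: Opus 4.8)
The plan is to argue by contradiction. Suppose $T'$ is a minimum-size tile type set that uniquely self-assembles \texttt{PATTERN} with $|T'|\le 46$, and suppose that for at least one of the colors el-Black, el-White, Carry Blue the set $T'$ contains fewer than two tile types with distinct east glues playing the role that color plays in $T$. Since \texttt{LB1} occurs as a sub-pattern of \texttt{PATTERN}, the restriction of the deterministic terminal assembly produced by $(T',\sigma')$ to the cells occupied by \texttt{LB1} is itself a unique assembly of \texttt{LB1}; so it suffices to derive a contradiction from the hypothesis that, inside \texttt{LB1}, one of these three colors is realized by a single east glue.

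First I would fix notation and recall the determinism we rely on: under uniqueness, the tile placed at a cell is a function of the north glue of the tile below it and the east glue of the tile to its left, so once the exposed seed glues along the border of \texttt{LB1} are fixed the color of every cell of \texttt{LB1} is determined. I would then read off from Fig.~\ref{fig:lb1} the explicit positions at which \texttt{LB1} forces color el-Black, and exhibit two such positions $p_1,p_2$ whose eastward continuations inside \texttt{LB1} disagree (i.e., the color sequences immediately to the right of $p_1$ and of $p_2$ diverge while still inside \texttt{LB1}). The design of \texttt{LB1} should be such that the tiles directly below $p_1$ and below $p_2$ are forced to carry the \emph{same} north glue — this is the purpose of the height-$21$ vertical strips and of the fixed seed layout of \texttt{LB1} — so that the only parameter able to distinguish the tiles at $p_1$ and at $p_2$, and hence their whole eastward continuations, is their east glue. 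Consequently no single el-Black tile type (which exposes one east glue) can realize both continuations, which yields two el-Black tile types with distinct east glues. This is essentially a run-breaking argument, and I would invoke the same gadgetry inside \texttt{LB1} verbatim for el-White and for Carry Blue.

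To obtain the ``as we suggest in $T$'' refinement together with the ``or larger than $46$'' alternative, I would then show that these forced pairs must also splice with the neighboring portions of \texttt{PATTERN} in exactly the way the two tile types of each color in $T$ do: any deviation either propagates a wrong color across the splice or forces an additional tile type elsewhere in \texttt{PATTERN}, so that $T'$ would need strictly more than $46$ types, contradicting $|T'|\le 46$ unless $T'$ already uses the $T$-configuration. Combining the three colors, either the minimum set contains $\ge 2$ tile types with the suggested east glues for each of el-Black, el-White, Carry Blue, or the black box's minimum exceeds $46$; the sufficiency of $T$ for \texttt{LB1} (claimed without proof in this section) guarantees the first alternative is realizable and that $46$ types suffice.

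The main obstacle is the middle step: verifying that the explicit cell arrangement of \texttt{LB1} genuinely pins down the north glues below $p_1$ and $p_2$, so that the east glue is the sole remaining degree of freedom, and that the two eastward continuations are provably incompatible for a common east glue. This is a finite but delicate case analysis over the (seed-determined) assembly of \texttt{LB1}, and it is where the specific geometry of the gadget — rather than any generic counting — carries the argument; the counting of colors against the budget of $46$ is deferred to Section~6.
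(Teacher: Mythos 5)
Your proposal misses the actual engine of this proposition, and the mechanism you substitute for it does not close the argument. The paper's proof (later generalized as Lemma~\ref{thm:lb0}) is a vertical pigeonhole/cycling argument: assume a single el-Black type; then every tile in the el-Black column of \texttt{LB1} exposes one east glue $e$, so every tile in the adjacent column (Black for many rows, White at row $20$) has constant west glue $e$. By uniqueness, the type placed at each cell of that column is then a function of its south glue alone, and since at most $46-29+1=18$ types can be colored Black, the sequence of vertical glues going up the column must enter a cycle within $18$ steps; the cycle forces the cell at $(3,20)$ to be Black, contradicting the White cell that \texttt{LB1} prescribes. Note that the budget bound is used \emph{inside} this proof to cap the number of Black types at $18$ relative to the height $21$; it is not ``deferred to Section~6'' as you claim. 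Without that cap the statement you are trying to prove is simply false: with enough Black types carrying distinct vertical glues, a single el-Black east glue is perfectly compatible with nineteen Black cells followed by a White cell, which is exactly why the proposition carries the alternative ``or the black box must find a minimum tile type set larger than 46.''

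Your replacement mechanism --- two el-Black positions $p_1,p_2$ with equal north glues below them and ``diverging eastward continuations,'' concluding that their east glues must differ --- does not yield a contradiction on its own. Under the hypothesis of a single el-Black type, the tiles at $p_1$ and $p_2$ are the same type and hence do expose the same east glue; but the tiles to their east, while sharing that west glue, may differ in their south glues, so their types and hence the continuations can legitimately diverge. To rule this out you would have to control the south glues along the continuation, and that is precisely what the paper's counting-plus-cycling argument does in the vertical direction. Your final paragraph about splicing and ``as we suggest in $T$'' is also unnecessary for this proposition, which only asserts the existence of two types with distinct east glues per color; the real work you need, and which is absent from your sketch, is the quantitative step: bound the number of types of the adjacent column's color by the $46$ budget, observe that the vertical glue sequence under a constant west glue must cycle before the height of \texttt{LB1} is exhausted, and contradict the single off-color cell that \texttt{LB1} places beyond that point.
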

\begin{proof}
By symmetry, we only need to prove the case for el-Black.  We assume that the black box can find a solution using at most 46 tile types.\\
By contradiction, assume there is only one tile type el-Black.  Then there is only one glue east of column 2, call it e.  The Black types can have at most 18 unique north glues in column 3, so by the 19th north glue, they must cycle (because the west glue is a constant).  But this requires a Black tile at $(3,20)$, not White.  $\bullet$
\end{proof}
Proposition~\ref{thm:lb1} gave us 2 tile types for Blue Carry tiles, but we need a lower bound of 4.  We will obtain it in Proposition~\ref{thm:lb43prop}, first we need the next lemmas.

Proposition~\ref{thm:lb1} also assigned 3 of our free tile types.  Now of 46 types from our upper bound, we have assigned 32, leaving 14 free.

We start by generalizing the result in Proposition~\ref{thm:lb1}, using \texttt{LB0} in Lemma~\ref{thm:lb0}.
\begin{figure}[tb]
\begin{center}
\includegraphics[width=240pt]{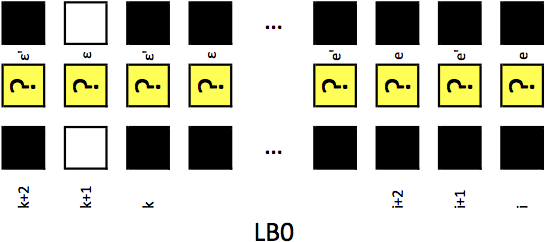}
\end{center}
\caption{\texttt{LB0} is presented sideways to conserve space.  Looking at it from the side, variables on the left are row indexes, with $i$ at the bottom.  There are two alternating glue names between the second and third columns, e and e$'$.  Above the vertical ellipsis, they are represented by $\epsilon$ and $\epsilon'$, such that they still alternate, but whether e is equal to $\epsilon$ or $\epsilon'$ depends on the parity of the number of rows.  It is also possible that there is only one glue. (Then e is equal to e$'$ is equal to $\epsilon$ is equal to $\epsilon'$.)  In either case, given enough consecutive rows of Black tiles, new Black types must eventually be exhausted and the vertical glues in the third column must ``cycle."  Then the White tile at $(3,k+1)$ must clash with a Black type, so there must actually be at least two types of the variable Yellow color with distinct east glues.  As a final note, the seed is not necessarily present- we can start in row $i$.}
\label{fig:lb0}
\end{figure}
\begin{lem}
\label{thm:lb0}
\textit{If \texttt{LB0} appears as a sub-pattern of \texttt{PATTERN}, such that starting in row $i$, there are $(k-i+1)$ consecutive rows of Black-Yellow-Black as shown such that the east glues of the second column are either all the same or alternate between two glues, and the row $(k+1)$ continues the glue pattern but requires White tiles in the first and third columns, and such that $(k-i)$ is at least the maximum number of Black types, then in any minimum tile type set solving \texttt{PATTERN}, either the Yellow tile in row $(k+1)$ must have an east glue distinct from e and e$'$, and therefore be a different type than any appearing below it; or the black box must find a minimum tile type set larger than 46.}
\end{lem}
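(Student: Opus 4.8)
The plan is to mirror the argument of Proposition~\ref{thm:lb1} almost verbatim, but in the more general geometric setting of \texttt{LB0}: a counting/pigeonhole argument on the vertical glues of the Black column, followed by a clash contradiction at the White row. First I would set $e$ and $e'$ to be the (at most two) east glues appearing on the second column in the stretch of rows $i,\dots,k$, and let $m$ denote the maximum number of distinct Black tile types available to $T'$ (a constant depending only on the black box's output being $\le 46$; since $22$ colors are pinned to one type and the total is $\le 46$, $m$ is bounded, and the hypothesis gives $k-i\ge m$). Suppose for contradiction that the Yellow tile in row $k+1$ has east glue in $\{e,e'\}$, i.e. no new Yellow type is forced; then every Black tile in the third column of rows $i,\dots,k+1$ has its west glue drawn from the two-element set $\{e,e'\}$, and its south glue is the north glue of the Black tile directly below it. So the sequence of (south,west) glue-pairs driving the third column is determined, row to row, by a function from (previous north glue, current west glue $\in\{e,e'\}$) to the next north glue.

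The key step is the pigeonhole: walking up the third column from row $i$, the pair (north glue exposed at top of current Black tile, parity-determined west glue of the next row) can take only finitely many values — at most $m$ distinct north glues, and the west glue alternates deterministically with the known parity — so within $m+1$ consecutive rows two rows must expose the same (north glue, upcoming west glue) state. By uniqueness and determinism of RTAS assembly, the column is then eventually periodic: once a state repeats, the Black type placed is forced and the column cycles through Black types forever. Since $k-i\ge m$, this cycling is already in effect at row $k+1$, so the tile forced at $(3,k+1)$ is a Black type. But the pattern \texttt{LB0} demands color White at $(3,k+1)$ — contradiction. The same contradiction can instead be pushed onto the first column if one prefers to argue there; the geometry of \texttt{LB0} is symmetric in the two outer columns, so either suffices. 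Hence either a fresh Yellow type with east glue outside $\{e,e'\}$ is forced in row $k+1$ — and being a new (south,west)-distinct type by uniqueness, it differs from every Yellow type below it — or the black box must exceed $46$, as claimed.

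I would also dispatch the two degenerate cases the figure caption flags. If the second column carries only a single east glue ($e=e'$), the argument is the easier special case of the above with the "alternation" trivialized. The parity bookkeeping for which of $e,e'$ sits adjacent to a given row is a cosmetic annotation ($\epsilon,\epsilon'$ in the figure): it never changes the cardinality of the west-glue state space, so the pigeonhole bound $m+1$ is unaffected. Finally, the remark that "the seed is not necessarily present — we can start in row $i$" is handled by noting the argument only ever uses local attachment rules on rows $i$ through $k+1$, never the seed glues below row $i$; the Black types in rows $i,\dots,k$ are given by the pattern, and all we need is that their north glues feed the cycling count.

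The main obstacle I expect is not the counting — that is routine pigeonhole — but pinning down the constant $m$ cleanly and making sure the bound "$(k-i)$ at least the maximum number of Black types" in the hypothesis is exactly what the pigeonhole needs after accounting for the west-glue alternation (one may need $k-i \ge m$ versus $k-i \ge 2m$ or $m+1$, depending on whether the two-glue alternation effectively doubles the state space). Getting that off-by-a-small-factor right, and confirming it is still implied by the stated hypothesis together with "$\le 46$ tile types," is the delicate bookkeeping step; everything else follows the template of Proposition~\ref{thm:lb1}.
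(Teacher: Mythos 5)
Your proposal takes essentially the same route as the paper: assume the Yellow tile at row $k+1$ reuses $e$ or $e'$, invoke a pigeonhole on the bounded supply of Black types to force the third column into a cycle by row $k$, and conclude that the cycle forces a Black type at $(3,k+1)$ where the pattern demands White. The one point you explicitly leave open --- whether the alternation of $e$ and $e'$ doubles the state space, so that the hypothesis ``$k-i$ at least the maximum number of Black types'' might be short by a factor of two --- is resolved by running the pigeonhole on tile types rather than on (north glue, parity) pairs, which is exactly the paper's device: it builds a directed graph whose nodes are the Black types with west glue $e$ or $e'$ (at most $k-i$ of them), with uniqueness giving out-degree at most one; in the alternating case the graph is bipartite, each type pointing to the type with the opposite west glue and matching south glue. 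Since a type's west glue is part of the type, parity is not extra state: among the $k-i+1$ Black tiles in rows $i,\dots,k$ of the third column some type must repeat, the walk is thereafter periodic, and the stated hypothesis is exactly enough --- no factor of two is needed. One small correction to an aside: the contradiction cannot be ``pushed onto the first column,'' since only the third column has its west glues pinned to $\{e,e'\}$ by the Yellow column; the hypothesis says nothing about what lies west of column 1. Nothing essential in your argument relies on that remark.
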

\begin{proof}
We assume that the black box can find a solution using at most 46 tile types.  Then by contradiction, assume that we can place a Yellow tile at $(2,k+1)$ that has east glue $\epsilon$.  In two cases. First, assume the east glue of every tile in column 2 is known to be the same, e.

Assume $M$ is the Minimum Tile Type Set found by the black box.  Now consider the subset $Be\subset M$ to be the subset restricted to only types colored Black with west glues e.  These are the candidate types whose tiles can be placed in the third column.  Reduce $Be$ to a directed graph $G$ where each type $x$ is represented by a node with out-degree at most 1: each node $v_x$ points to the node of the type in $Be$ whose south glue is equal to $x$'s north glue, if such a type exists.\footnote{Common west glues e, and then uniqueness of the tile type set, guarantees that at most one other type has the critical south glue, hence, out-degree at most 1.}

Now for any tile type $y\in Be$ placed at the ``bottom" index $(3,i)$ of \texttt{LB1}, it must be that the type placed directly north of it can be determined by following the edge out of $y$'s node in $G$, as this neighbor of $y$ must have west glue e (by its inclusion in $Be$), and south glue equal to $y$'s north glue (by design of the edge).

This logic repeats as we move north placing tile types in the third column in \texttt{LB1}, and we see that the order of types placed is determined by following out-edges in $G$.

However, the black box can assign the color Black to at most $(k-i)$ types.  Then we know that for any starting node in $G$ (and corresponding type placed at $(3,i)$), after $(k-i)$ hops we must be in some cycle of $G$.  Why?  Because there can be at most $(k-i)$ unique north glues, therefore the $(k-i+1)$th north glue must repeat.  This cycle corresponds to a cycle of vertical glue names.\footnote{The path can not end prematurely because, by the definition of $M$, it solves the color pattern.}  Further, as we continue to place types moving north, uniqueness requires that we stay in this cycle as long as the west glues of the types must be e.  But then the type at $(3,k)$ must be Black, not White, giving us our contradiction.

In the case that there are two alternating east glues in column 2, we call them e and e$'$, and the proof proceeds much the same.

$Be'$ now contains Black types with west glue either e or e$'$.  The reduction is to a bipartite graph $BG'$ (one partition for types with each west glue), where each type $x'$ has a node $v_x'$ with out-degree at most 1, and points to the node of the type $y'$ such that $y'$'s west glue is different than $x'$'s (because the glues alternate) and $y'$'s south glue is equal to $x'$'s north glue.  Then again edges represent ``stack-ability" in column 3.  And again, there are at most $(k-i)$ nodes, so after starting by placing any type in $(3,i)$ and making $(k-i)$ hops, we are in a cycle.  This cycle guarantees that the north glue at index $(3,k)$, combined with glue $\epsilon$, is known to require placing a Black type at $(3,k+1)$.  This contradiction completes the second case.  $\bullet$
\end{proof}
We return to proving our specific lower bounds.  Recall \texttt{LB1} in Fig.~\ref{fig:lb1} on page~\pageref{fig:lb1}.
\begin{lem}
\label{thm:twoselfstack}
If \texttt{LB1} appears as a sub-pattern of \texttt{PATTERN}, then in any minimum tile type set solving \texttt{PATTERN}, either it is true that if there are exactly 2 Blue Carry tile types, then the north and south glues of both types must all be the same glue; or the black box must find a minimum tile type set larger than 46.
\end{lem}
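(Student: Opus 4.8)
The plan is to argue against the escape clause. Let $M$ be a minimum tile type set solving \texttt{PATTERN}; if $|M|>46$ we are already in the second case, so assume $|M|\le 46$. If $M$ uses three or more Carry Blue types the asserted implication is vacuously true, and it cannot use fewer than two (by Proposition~\ref{thm:lb1}, whose first alternative holds since $|M|\le 46$), so assume $M$ has exactly two Carry Blue types $A$ and $B$. It remains to show that the north glue of $A$, the south glue of $A$, the north glue of $B$, and the south glue of $B$ are one and the same glue, using only that $M$ must self-assemble the fixed sub-pattern \texttt{LB1}.

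The engine is the cycling argument of Lemma~\ref{thm:lb0}. \texttt{LB1} contains tall (height $21$) columns of Carry Blue tiles whose immediately western column has been designed to expose a single east glue (or, in a variant column, two alternating east glues), so that the Carry Blue tiles in such a column see a constant, or two-alternating, west glue, exactly the hypothesis of Lemma~\ref{thm:lb0}. Restricting $M$ to the Carry Blue types carrying a given relevant west glue and forming the out-degree-at-most-$1$ ``stack-ability'' digraph on these (at most two) types, the sequence of types placed going up a Carry Blue column is forced and, since there are only two Carry Blue types and $21$ exceeds twice that, the whole upper part of each such column lies on a cycle of the digraph, and that cycle consists only of Carry Blue types (the pattern leaves no alternative).

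I would then read off the four glue identities from three appropriately placed Carry Blue regions of \texttt{LB1}. A column whose western environment admits only type $A$ is monochromatically $A$, and two vertically adjacent copies of $A$ force the south glue of $A$ to equal the north glue of $A$; a symmetric column gives the same for $B$; and a column whose western environment switches between two consecutive rows so that a type-$B$ tile must be capped by a type-$A$ tile forces the south glue of $A$ to equal the north glue of $B$. Chaining these gives that the north and south glues of $A$ and of $B$ all coincide. In the degenerate case where $A$ and $B$ share all the relevant west glues, both appear along one long Carry Blue column and the two-node digraph is a self-loop or a $2$-cycle: a self-loop on a type again equates its north and south glues, while a $2$-cycle equates the north glue of $A$ with the south glue of $B$ and the north glue of $B$ with the south glue of $A$; combined with whichever monochromatic column can still be realized, the four glues again collapse to one. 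If in some branch the forced placement in \texttt{LB1} cannot be completed by $M$, then $M$ does not solve \texttt{PATTERN}, contradicting the choice of $M$; equivalently, a valid solution would need a third Carry Blue type and hence size exceeding $46$, which is the second alternative.

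The main obstacle is the west-glue bookkeeping: $A$ and $B$ may or may not share west glues, and one must read the geometry of \texttt{LB1} carefully to see which of its Carry Blue columns pins down which type and where the west environment changes between rows. So the real content is a finite but slightly intricate case analysis, a refinement of the two-case split already present in Lemma~\ref{thm:lb0}, verifying that in every case \texttt{LB1} either reproduces all four glue equalities or is unassembleable by $M$. I expect each individual case to be routine; making the split exhaustive and matching it to the exact layout of \texttt{LB1} is where the care goes.
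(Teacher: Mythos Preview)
Your reduction to the case $|M|\le 46$ and exactly two Carry Blue types $A,B$ is correct, and you are right that Lemma~\ref{thm:lb0} is the engine. But your proposed mechanism for extracting the glue equalities does not match what \texttt{LB1} actually provides, and in one branch your argument does not close.

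First, \texttt{LB1} contains a \emph{single} Carry Blue column (column~6), not several with differing western environments. The plan to find ``a column whose western environment admits only type $A$'' and another admitting only $B$ cannot be carried out: the only distinguishing feature of $A$ and $B$ available at this point is that their \emph{east} glues differ (Proposition~\ref{thm:lb1}); nothing is known about their west glues, so there is no column of \texttt{LB1} that a priori selects one of them. The paper's proof never touches west glues of the Carry types at all.

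Second, your ``degenerate'' branch has a real gap. If the stackability digraph on $\{A,B\}$ is a $2$-cycle with no self-loops, you obtain $n(A)=s(B)$ and $n(B)=s(A)$, which leaves two potentially distinct glue values. You appeal to ``whichever monochromatic column can still be realized'' to finish, but no such column is available in \texttt{LB1}. In the paper this case is not resolved by collapsing glues; it is shown \emph{impossible}: alternating $A,B$ up column~6 produces two alternating east glues, so Lemma~\ref{thm:lb0} applies to column~7 and forces a Carry type at $(6,20)$ with a third east glue, contradicting that there are only two.

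The paper's route is a direct case split on the vertical stacking behavior of $A$ and $B$ in column~6: (1) neither self-stacks; (2a)/(2b) exactly one self-stacks, split by which type sits at $(6,1)$; (3) both self-stack but not with each other. In every case one either cannot fill column~6 through row~21, or Lemma~\ref{thm:lb0} (using the White tile at row~20 of column~7) forces a third Carry east glue. The sole surviving possibility is that both self-stack \emph{and} stack with each other, which is exactly the claim. Recasting your argument along these lines, rather than hunting for nonexistent auxiliary columns, will work.
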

\begin{proof}
We assume that the black box can find a solution using at most 46 tile types.  From Proposition~\ref{thm:lb1} we can assume that the 2 Carry types have distinct east glues.  All cases use \texttt{LB1}.\\
\textbf{Case 1.  Neither type self-stacks vertically.}\footnote{By ``self-stack vertically," we mean that a tile type has north glue equal to its south glue.}  In this case the types must alternate vertically- if they do not, then combined with neither type self-stacking, it will not be possible to place types in more than the first two rows of \texttt{LB1} column 6.  However if their north and south glues do allow them to alternate, then we can apply Lemma~\ref{thm:lb0} to see that we need a type at $(6,20)$ with an east glue unique from both types that we have, completing this case.\\
\textbf{Case 2a.  Exactly one type $x$ self-stacks vertically, and $x$ is placed at $(6,1)$ of \texttt{LB1}.}  In this case, Lemma~\ref{thm:lb0} can force our non-self-stacking type $y$, to be placed at $(6,20)$, if not at a smaller row-index in column 6.  This implies that $y$'s south glue equals the north-and-south glue of $x$.  But then neither type can place at index $(6,21)$ (or at some other index directly above $y$ if it is used earlier), as $x$ and $y$ have the same south glues, while $y$'s north glue is distinct.\\
\textbf{Case 2b.  Exactly one type $x$ self-stacks vertically, but $y$ is placed at $(6,1)$.}  In this case $y$ doesn't self-stack so we must place $x$ at $(6,2)$.  Indeed, we must place $x$ at every remaining index of column 6, as neither type can have south glue equal to $y$'s north glue.  Then there are 18 consecutive rows of Black-Blue-Black with one critical glue between the second and third columns and 14 free tiles, so by Lemma~\ref{thm:lb0}, the type at $(6,20)$ must have a different east glue than the consecutive types before it.  But there is only one other type we can place here, $y$, whose south glue does not match, giving us the contradiction.\\
\textbf{Case 3.  Both types self-stack, but do not stack with each other.}  In this case, regardless of which type we place at $(6,1)$, it can only stack with itself all the way up to row 19.  But again we can apply Lemma~\ref{thm:lb0} to force $(6,20)$ to be a distinct type, which is impossible because the second's south glue does not equal the north glue of the first.\\
The only remaining possibility is that both Carry Blue types self-stack, and stack with each other, which is our desired result.  $\bullet$
\end{proof}
\begin{figure}[tb]
\begin{center}
\includegraphics[width=330pt]{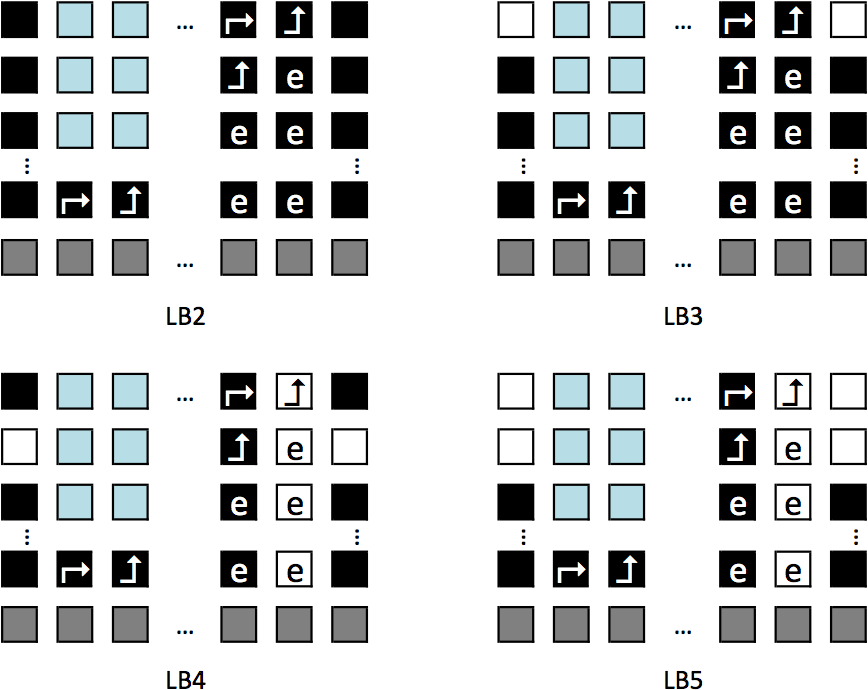}
\end{center}
\caption{It should be understood that the sub-patterns that appear above the ellipsis in \texttt{LB2} through \texttt{LB5} will be appear as the top rows when placed within \texttt{PATTERN}.  These color sub-patterns are the result of subtracting 0 (all element bits 0) in \texttt{LB2} and \texttt{LB3} and $2^{\text{row\#}}$ in \texttt{LB4} and \texttt{LB5} (all element bits 0 except the MSB is 1) from the ``numbers" given in the respective first columns.}
\label{fig:lb2}
\end{figure}
We make a slight detour from the Carry Blue lower bound and turn our attention to sub-patterns that include the UAB, UAW, RAB, and RAW tiles.  We prove preliminary results now, and return to some of these sub-patterns later.  Obtaining the results in this order allows us to ``save" colors.

We present sub-patterns \texttt{LB2} through \texttt{LB5} for Lemma~\ref{thm:lb2} in Fig.~\ref{fig:lb2}.
\begin{lem}
\label{thm:lb2}
If \texttt{LB2} through \texttt{LB5} appears as a sub-patterns of \texttt{PATTERN}, then in any minimum tile type set solving \texttt{PATTERN}, we must have at least 2 types colored UAB or at least 2 types colored Black; and at least 2 types colored UAW or at least 2 types colored White.
\end{lem}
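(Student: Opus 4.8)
The plan is to argue by contradiction and to reduce the two conjuncts to one by symmetry. The statement is symmetric under the colour-swap that exchanges Black with White and UAB with UAW while re-indexing the four sub-patterns appropriately (LB2, LB3 are mirror images, as are LB4, LB5), so it suffices to prove the first disjunct: in any minimum $T'$ solving \texttt{PATTERN} there are at least $2$ UAB types or at least $2$ Black types. Assume not. Since both colours occur in \texttt{PATTERN}, this means $T'$ has \emph{exactly one} UAB type $u$ and \emph{exactly one} Black type $b$; and, as in the earlier proofs, we may assume the black box found a solution with $|T'|\le 46$ (otherwise the lemma holds vacuously). I would then aim to contradict the existence of such a $T'$ using LB2 through LB5.

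First I would extract the two rigidity facts that a single type forces, exactly in the style of Proposition~\ref{thm:lb1} and Lemma~\ref{thm:lb0}. Having only the type $b$ means every maximal vertical run of Black tiles exposes a single east glue, so the column immediately to its right has a constant west glue. Having only the type $u$ forces $u$ to be ``symmetric'': in the element gadgets the UAB tiles must stack on themselves up the signal column (so $u$'s north and south glues agree) and sit side by side across the bit columns below the RA squares (so $u$'s east and west glues agree); hence any UAB region is a single repeated tile and propagates a single glue in both directions. Combined, inside the subtraction gadget — where el-Black/UAB/RAB columns carry the running number horizontally while threading the ON/OFF signal vertically — these two colours feed each other and the relevant sub-rectangle becomes \emph{frozen}: every tile in it is determined, column by column, once $b$'s east glue and $u$'s glues are fixed.

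Next I would run the frozen propagation against the four sub-patterns. The first-column numbers and the subtracted values — $0$ in LB2, LB3 and $2^{\text{row\#}}$ in LB4, LB5 — are chosen precisely so that, whatever glues the adversary's $b$ and $u$ happen to carry, the forced column reaches a row at which the colour pattern demands a White tile inside a Black column (or, symmetrically in LB3, LB5, a non-White colour inside a White column): subtracting $0$ pins the pass-through behaviour of the single types, while subtracting the high power of two injects a borrow that must ripple off the top and shifts which row the clash lands on, which is why a companion sub-pattern is needed in addition to the ``subtract $0$'' one. Each such clash is a contradiction with $b$ or $u$ being the only type of its colour, and one pair of the four patterns settles the UAB/Black disjunct, the mirror pair the UAW/White disjunct.

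The main obstacle I expect is that, unlike the clean one-colour cycling argument of Lemma~\ref{thm:lb0}, the forcing here is a two-colour cascade in which $b$ feeds $u$ which feeds back into $b$; one has to keep careful track of the glue identities across that interface, of the parity of the alternation, and of the fact that $b$ and $u$ need \emph{not} be the types used by $T$, and then show that no single choice of $(b,u)$ survives all of LB2--LB5 — in particular that the $17$ free types cannot be redeployed to dodge every case at once. Verifying that the prescribed heights (at least $21$) leave enough consecutive Black rows for the forced run to exhaust the at most $\approx 18$ possible Black types before the mandated colour flip is the routine but necessary bookkeeping.
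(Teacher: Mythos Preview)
Your plan vastly overcomplicates what is, in the paper, a three-line local argument. The paper's proof: by symmetry treat only UAB/Black; assume a unique UAB type with east glue $e$ and a unique Black type with north glue $n$. In both \texttt{LB2} and \texttt{LB3} the cell at index $(6,4)$ has a UAB tile immediately to its west and a Black tile immediately to its south, so whatever type is placed there must have (south, west) $=(n,e)$. But \texttt{LB2} colours that cell Black while \texttt{LB3} colours it White, so two types of different colours share the same (south, west) pair --- a uniqueness clash. That is the entire argument; \texttt{LB4} and \texttt{LB5} serve the symmetric UAW/White conjunct.

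Your approach instead attempts a global cascade/cycling argument in the style of Lemma~\ref{thm:lb0}, and several of its steps are not justified. You assert that the unique UAB type $u$ must self-stack vertically and horizontally (``$u$'s north and south glues agree'', ``$u$'s east and west glues agree''), but nothing in \texttt{LB2}--\texttt{LB5} places two UAB tiles adjacent to one another in either direction, so neither claim follows. Without those symmetry facts the ``frozen propagation'' you describe does not get started. The subsequent worries --- exhausting the roughly $18$ possible Black types before a colour flip, redeploying the free types, the height-$21$ bookkeeping --- are all irrelevant here: the lemma needs no counting of free types and no long runs, only a single well-chosen cell in two sub-patterns that forces an immediate clash.
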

\begin{proof}
By symmetry, we only prove the case for UAB and Black.  By contradiction, assume there is only one tile type colored UAB with east glue e, and one tile type colored Black with north glue n.  Then for both \texttt{LB2} and \texttt{LB3}, the tile placed at their respective $(6,4)$ indexes must have west glue e and south glue n.  But by uniqueness, it is impossible for two different tiles to do this, one colored Black as needed by \texttt{LB2} and one colored White as needed by \texttt{LB3}.  $\bullet$
\end{proof}
\begin{figure}[tb]
\begin{center}
\includegraphics[width=380pt]{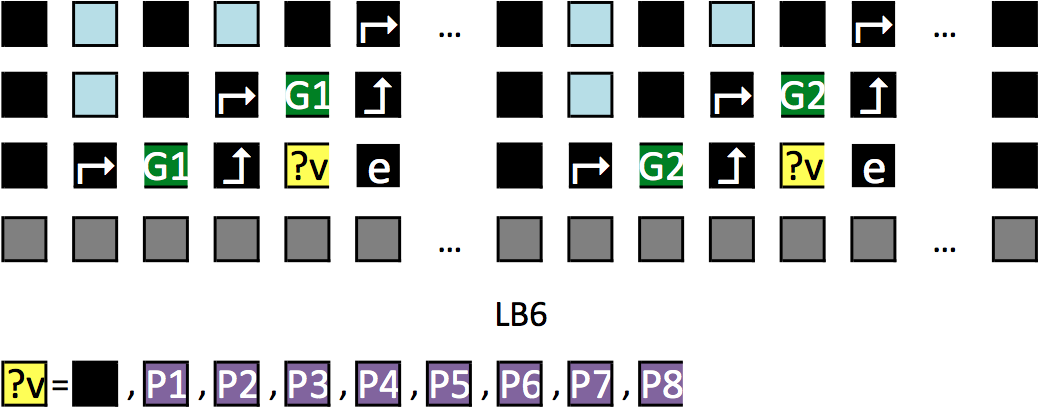}\\
\vspace{15pt}
\includegraphics[width=380pt]{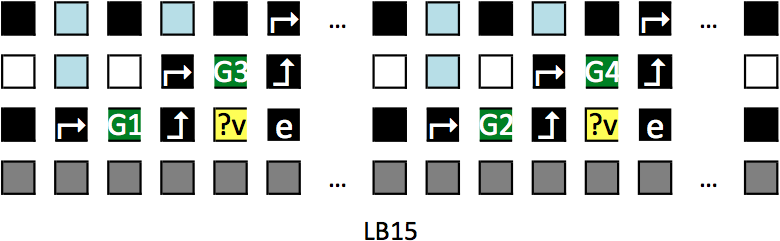}
\end{center}
\caption{There cannot be only one tile type colored RAB, because then the Green types would all need distinct south glues, and then there are not enough free tile types for Purple types to match each Green south glue.}
\label{fig:lb6}
\end{figure}
In Lemma~\ref{thm:lb6} we obtain a result for RAB and RAW similar to Lemma~\ref{thm:lb2}, again to save colors.  
We present sub-patterns \texttt{LB6} and \texttt{LB15} in Fig.~\ref{fig:lb6}, and \texttt{LB24}, and \texttt{LB33} in Fig.~\ref{fig:lb24}, for the results in Lemma~\ref{thm:lb6} and Lemma~\ref{thm:lb6b}.

It should be understood that there are 9 total copies of each sub-pattern, one for each color ``value" that the variable-Yellow indexes should take.  For example, the patterns \texttt{LB6} through \texttt{LB14} are given at the top of Fig.~\ref{fig:lb6} with variable-Yellow replaced by Black and each of P1 through P8 respectively.  These 36 sub-patterns will be used again in Proposition~\ref{thm:lb43} to obtain a lower bound of 4 for RAB and RAW.

Of 46 tile types, we have assigned 32; 1 more is known to be UAB or Black; another 1 more is known to be UAW or White; 12 are free to be any color.

\begin{figure}[tb]
\begin{center}
\includegraphics[width=380pt]{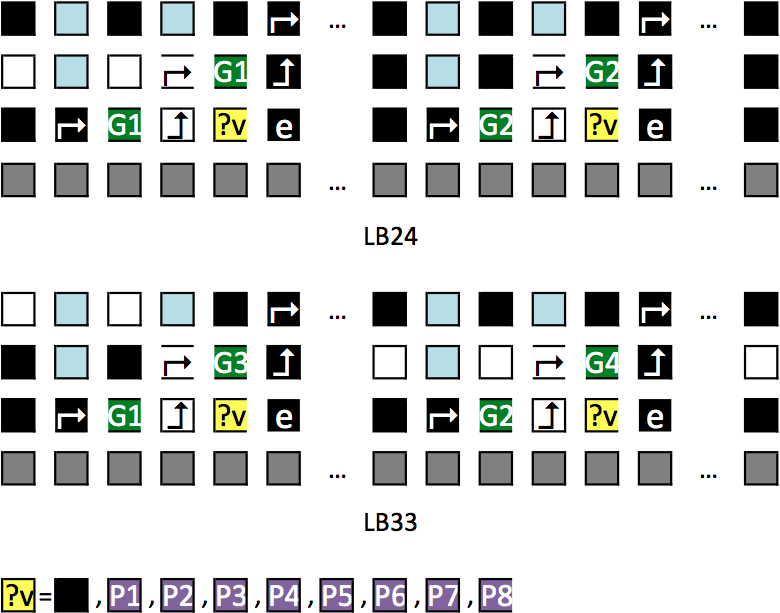}
\end{center}
\caption{\texttt{LB24} through \texttt{LB41} obtain the symmetric result for RAW as described for RAB in Lemma~\ref{thm:lb6}.  In comparison to Fig.~\ref{fig:lb6}, tiles in the first column and the output column have been changed to maintain consistency with $T$, reflecting the different nature of the RAB and RAW tiles.}
\label{fig:lb24}
\end{figure}
\begin{lem}
\label{thm:lb6}
If \texttt{LB1} through \texttt{LB41} appear as sub-patterns of \texttt{PATTERN}, then in any minimum tile type set solving \texttt{PATTERN}, either we must have at least 2 types colored RAB and at least 2 types colored RAW as we suggest in $T$; or the black box must find a minimum tile type set larger than 46.
\end{lem}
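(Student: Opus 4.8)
\textbf{Proof proposal for Lemma~\ref{thm:lb6}.}

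The plan is to mirror the combinatorial argument already used for the ``el'' colors in Proposition~\ref{thm:lb1} and generalized in Lemma~\ref{thm:lb0}, but now applied to the RAB (resp.\ RAW) color using the packet of sub-patterns \texttt{LB6} through \texttt{LB14} (resp.\ \texttt{LB24} through \texttt{LB32}). First I would assume, for contradiction, that the black box finds a solution with at most $46$ tile types in which there is only one tile type colored RAB. The caption of Fig.~\ref{fig:lb6} already names the mechanism: with a single RAB type, every Green tile sitting directly above an RAB tile is forced (by uniqueness, since west glues are fixed by the single RAB east glue) to have a distinct south glue, one for each of the $9$ color ``values'' the variable-Yellow slot ranges over; and below the RAB tiles, each Green south glue must be matched by a Purple type. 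So the argument is a counting argument: a single RAB type propagates a rigid chain of forced Green types, and then the number of Purple types needed exceeds the number of free tile types available.

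The key steps, in order, are: (i) invoke the bookkeeping already established — $32$ types assigned, one more committed to UAB-or-Black, one more to UAW-or-White, leaving $12$ free — so the total budget of $46$ is tight; (ii) show that a single RAB type forces, across the $9$ copies \texttt{LB6}--\texttt{LB14}, that the Green tiles appearing above RAB tiles must collectively use at least some number $g$ of distinct south glues, because the $9$ copies differ precisely in the color they demand in that column and a single west/south glue pair can be realized by only one type (uniqueness); (iii) observe that each such distinct Green south glue, appearing again below the RAB row where bit information is preserved horizontally, forces a correspondingly distinct Purple type to match it; (iv) add up: the Green types plus the Purple types plus the already-committed types exceed $46$, or more precisely exceed the $12$ free slots, the contradiction. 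Then repeat verbatim for RAW with \texttt{LB24}--\texttt{LB32}, and conclude both bounds of $2$ simultaneously since the two color classes are disjoint and their witnesses do not interfere.

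The main obstacle I expect is step (ii)/(iii): making the counting airtight. One has to be careful that the Green types forced above different RAB tiles in different copies are genuinely forced to be pairwise distinct (not merely that the south glues differ — the colors must actually conflict, as in the clash argument of Lemma~\ref{thm:lb0}), and that the Purple types matching them cannot be shared or reused across copies or with types already assigned to other colors. In particular I would need to check that none of the $12$ free types can do double duty as both a Green type in one role and a Purple type in another, which follows because Green and Purple are distinct colors and a type has exactly one color; and that the single-RAB assumption really does cascade all the way down the column rather than being broken by a cycle after a bounded number of rows — here the short height of these sub-patterns (unlike \texttt{LB1}) matters, so I would verify the sub-patterns in Fig.~\ref{fig:lb6} and Fig.~\ref{fig:lb24} are drawn tall enough that the forced chain reaches the White tile that triggers the clash, exactly as in Lemma~\ref{thm:lb0}. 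Once the count ``$9$ Green $+$ enough Purple $> 12$ free'' is pinned down, the rest is immediate.
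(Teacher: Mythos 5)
There is a genuine gap in your steps (ii)--(iv): you have the source of the forced distinctness backwards. Uniqueness does not force the Green tiles to carry nine distinct south glues ``one for each of the 9 variable-Yellow values'' --- two types of different Yellow colors may perfectly well share a north glue, so one and the same Green type can legally sit above all nine Yellow variants, and nothing in the single-RAB hypothesis prevents that. The paper's actual mechanism is the other way around: in these sub-patterns each Green tile is placed directly \emph{east} of an RAB tile (and directly \emph{above} a variable-Yellow tile), so a unique RAB type fixes one common west glue for the tiles of all four Green colors; since the four Green colors give four distinct types with the same west glue, uniqueness of (south,west) pairs forces exactly \emph{four} pairwise distinct Green south glues. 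The contradiction is then multiplicative: across \texttt{LB6} through \texttt{LB23} every Purple color appears beneath every one of the four Green colors, so each of the eight Purple colors would need four types with four distinct north glues, i.e.\ $24$ types beyond the one-per-color baseline, against only $12$ free types. Your count ``$9$ Green $+$ enough Purple $> 12$ free'' is never pinned down, and it cannot be repaired as stated: you restrict attention to \texttt{LB6}--\texttt{LB14}, but that family realizes only two of the four Green colors, and with only two forced south glues the correct mechanism yields only $8$ extra Purple types, which fits comfortably inside the $12$ free slots --- no contradiction. You need the full packet \texttt{LB6}--\texttt{LB23} (and \texttt{LB24}--\texttt{LB41} for RAW), precisely so that all four Green colors, hence all four distinct south glues, press on every Purple color.

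Two smaller points. The worry you raise about cycling and ``reaching the White tile that triggers the clash'' is a red herring: unlike Proposition~\ref{thm:lb1} and Lemma~\ref{thm:lb0}, this lemma is a pure counting-and-uniqueness argument with no tall column and no exhaustion of vertical glues, so none of that machinery is needed. And your geometric description (``Green tile sitting directly above an RAB tile'') is inconsistent with your own justification (``west glues are fixed by the single RAB east glue''); if Green sat above RAB, the single RAB type would fix the Green tiles' \emph{south} glues to a common value, which is the opposite of what the argument requires.
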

\begin{proof}
We prove the implied lower bound for RAB, then RAW by symmetry.  We assume the black box can find a solution using at most 46 tile types.\\
By contradiction, assume there is only one tile type colored RAB.  Then uniqueness requires that there must be four distinct tile types, one with each of the four Green colors, with distinct south glues.  Four distinct south glues at the indexes colored G1, ..., G4 in \texttt{LB6} through \texttt{LB23} would require 24 free tile types for the various Purple colors.  But we only have 12 types free to be any Purple color.  $\bullet$
\end{proof}
We also get the following result from \texttt{LB6} through \texttt{LB41}.
\begin{lem}
\label{thm:lb6b}
If \texttt{LB6} through \texttt{LB41} appear as sub-patterns of \texttt{PATTERN}, then in any minimum tile type set solving \texttt{PATTERN}, either it must be true that if the number of free tile types remaining is insufficient to allow for there to be two types of each of the 9 yellow-Variable colors, then there are at least 4 tile types with distinct east glues colored RAB and at least 4 tile types with distinct east glues colored RAW; or the black box must find a minimum tile type set larger than 46.
\end{lem}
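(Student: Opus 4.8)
The plan is to stay inside the ``or else'' clause: assume the black box returns a tile type set $M$ with $|M|\le 46$ and that the stated shortage holds (fewer types remain free than are needed to give two types to each of the nine yellow-Variable colors Black$,\mathtt{P1},\dots,\mathtt{P8}$), and from this force four RAB types with pairwise distinct east glues and four RAW types with pairwise distinct east glues. As in Lemmas~\ref{thm:lb1}, \ref{thm:lb2} and \ref{thm:lb6}, I would prove the RAB half and obtain the RAW half by running the identical argument on the companion patterns \texttt{LB24}--\texttt{LB41} of Fig.~\ref{fig:lb24}.

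The core is a strengthening of the counting argument of Lemma~\ref{thm:lb6} fused with the cycling argument of Lemma~\ref{thm:lb0}. Suppose for contradiction that $M$ has at most three RAB types with pairwise distinct east glues; by Lemma~\ref{thm:lb6} it then has exactly two or three such east glues. First I would rerun the squeeze of Lemma~\ref{thm:lb6} with ``two or three RAB east glues available'' in place of ``one RAB type'': uniqueness still forces the four Green colors at the indexes $\mathtt{G1},\dots,\mathtt{G4}$ of \texttt{LB6}--\texttt{LB23} to realize enough distinct south glues that the Purple colors $\mathtt{P1},\dots,\mathtt{P8}$ cannot all remain at one type. Sharpening this, I would argue color by color that each of the nine yellow-Variable colors $c$ must contribute a second tile type with a distinct east glue: in the copy of the base pattern in which variable-Yellow is instantiated as $c$, there is a forced run long enough that, with only two or three possible west glues coming off the adjacent RAB column, Lemma~\ref{thm:lb0} applies (RAB column in the role of the left Black column, the $c$-run in the role of the Yellow column) and rules out a single $c$-type reaching the forced color change. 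Then I would close with the ledger: a second type for each of the nine yellow-Variable colors costs at least nine of the types not already pinned down by Proposition~\ref{thm:lb1} and Lemmas~\ref{thm:lb2} and \ref{thm:lb6}, while by hypothesis strictly fewer than that remain free, contradicting $|M|\le 46$. Hence $M$ has at least four RAB types with distinct east glues, and symmetrically at least four RAW types with distinct east glues, which is the claim.

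The step I expect to be the main obstacle is the middle one: verifying that the forced runs inside \texttt{LB6}--\texttt{LB23} are genuinely at least as long as the number of RAB types in $M$, so that Lemma~\ref{thm:lb0} has room to cycle, and that along each such run the RAB east glues — even though $M$ may contain three of them globally — are confined within that run to at most two values, so that the ``all equal or alternating between two'' hypothesis of Lemma~\ref{thm:lb0} is actually met. Tied to this is getting the bookkeeping exact: with only two RAB east glues the Green/Purple squeeze of Lemma~\ref{thm:lb6} is merely tight rather than violated, which is precisely why the extra hypothesis on remaining free types is needed, so the count of how many types each yellow-Variable color is forced to consume must be carried out carefully for that hypothesis to bite.
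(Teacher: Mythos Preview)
Your approach is substantially more complicated than needed and faces the very obstacles you flag. The paper's proof is a short direct argument that exploits the hypothesis in the forward direction rather than as the closer of a long contradiction. From ``not enough free types to give two to each of the nine yellow-Variable colors'' one immediately extracts a single color $C\in\{\text{Black},\mathtt{P1},\dots,\mathtt{P8}\}$ with exactly one tile type, say with north glue $n$. Now look only at the two sub-patterns of Fig.~\ref{fig:lb6} in which variable-Yellow is instantiated as $C$: each of $\mathtt{G1},\mathtt{G2},\mathtt{G3},\mathtt{G4}$ sits above a $C$-tile, so all four Green types share south glue $n$. Uniqueness then forces their west glues to be pairwise distinct, and those west glues are exactly the east glues of the adjacent RAB tiles, giving four RAB east glues. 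No cycling, no Lemma~\ref{thm:lb0}, no per-color argument across all nine colors.

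Your plan, by contrast, tries to prove the contrapositive ``$\le 3$ RAB east glues $\Rightarrow$ every one of the nine colors needs two types,'' and then close against the free-type shortage. The central step---applying Lemma~\ref{thm:lb0} inside \texttt{LB6}--\texttt{LB23} with the RAB column playing the Black column---does not obviously go through: you yourself note that with three RAB east glues globally you cannot guarantee the ``all equal or alternating between two'' hypothesis along any given run, and there is no indication in the construction that these patterns contain runs long enough to cycle. Even the two-east-glue case is, as you say, only tight rather than violated in the Lemma~\ref{thm:lb6} count, so the extra bookkeeping you propose would have to do real work that the direct argument avoids entirely. The key idea you are missing is that the hypothesis hands you one specific single-type color, and a single-type color pins the south glue of all four Greens simultaneously in its own sub-pattern; uniqueness across the four Green \emph{colors} then does the rest in one step.
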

\begin{proof}
We prove the implied lower bound for RAB, then RAW by symmetry.  We assume the black box can find a solution using at most 46 tile types.

From the Lemma statement, we can also assume that there are not enough free tile types to assign them colors Black and the Purple colors P1, ..., P8 such that each of these 9 colors has at least 2 tile types assigned to its color.  Therefore there must be a color in this set such that there is only one type of its color, call it $C$; and the unique type colored $C$ has north glue n.

Now we analyze in the two sub-patterns described in Fig.~\ref{fig:lb6} corresponding to variable-Yellow taking the color $C$.  We see that n must be the south glue of at least one type colored by each of the 4 Green colors.  By uniqueness, these 4 distinct types colored G1, G2, G3, and G4 must have distinct west glues given their common south glue n.  This would require at least 4 tile types colored RAB with distinct east glues.  $\bullet$
\end{proof}
We need sub-pattern \texttt{LB42} in Fig.~\ref{fig:lb42} and one more lemma before proving a minimum of 4 tile types colored Carry Blue.  Lemma~\ref{thm:lb42} is technical and three pages long, and can be skipped.
\begin{lem}
\label{thm:lb42}
If \texttt{LB42} appears as a sub-pattern of \texttt{PATTERN}, along with \texttt{LB1}, then in any minimum tile type set solving \texttt{PATTERN}, either it must be true that if there are less than 4 Blue Carry tile types, then there must be at least 4 tile types colored Black or White combined; or the black box must find a minimum tile set larger than 46.
\end{lem}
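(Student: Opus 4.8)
The plan is to argue by contradiction, in the same spirit as Proposition~\ref{thm:lb1} and Lemma~\ref{thm:lb0}. Assume the black box returns a tile type set $M$ with $|M|\le 46$ that uniquely solves \texttt{PATTERN}, and suppose toward a contradiction that $M$ has at most $3$ tile types colored Carry Blue and at most $3$ tile types colored Black or White combined. Since \texttt{LB1} and \texttt{LB42} are both sub-patterns, $M$ must be consistent with each of them; I will show that some column of \texttt{LB42} then cannot actually be filled in by $M$, contradicting the fact that $M$ solves \texttt{PATTERN}.

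First I would collect the structural facts already in hand. Proposition~\ref{thm:lb1} forces at least $2$ Carry Blue types with distinct east glues, so ``fewer than $4$'' means exactly $2$ or exactly $3$ such types; and when there are exactly $2$, Lemma~\ref{thm:twoselfstack} tells us both of them self-stack and stack with each other, i.e. all their north and south glues coincide. On the other side there are between $1$ and $3$ types colored Black or White in total. The heart of the argument is then a pumping/cycling lemma applied to the vertical strip of \texttt{LB42} in which Carry Blue tiles are stacked between columns of Black/White tiles (the analogue of column $6$ of \texttt{LB1}): the tile placed at each height in the Carry Blue column is pinned down by its west glue (coming from the Black/White column to its left) and its south glue (coming from the tile below), and likewise for the neighbouring Black/White columns. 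Because $M$ has only a bounded number of relevant types -- at most $3$ Carry Blue and at most $3$ Black/White, hence at most a small constant (on the order of $3\cdot 3$) reachable joint glue-states for the strip -- the sequence of joint states going up the strip must become periodic well within the height of \texttt{LB42}, which I take to be at least $21$ as with \texttt{LB1}. Formally I would reuse the functional-digraph reduction of Lemma~\ref{thm:lb0}: restrict to the admissible Carry Blue types, give each a single out-edge to ``the type whose south glue matches my north glue, subject to the west-glue constraint the Black/White column imposes at the next height,'' observe that any walk longer than the node count enters a cycle, and conclude the vertical glue string is eventually periodic with small period. Placing, just above this periodic stretch, the tile that \texttt{LB42} demands -- which by design is a color the cycle forbids (a Black where a White is forced, or vice versa, or a Carry Blue incompatible with the repeated glue) -- produces a clash, which is the desired contradiction, and therefore either there are at least $4$ Carry Blue types or at least $4$ Black/White types combined, or $|M|>46$.

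The main obstacle -- and the reason this lemma is long -- is making the ``bounded number of reachable states'' bookkeeping airtight across all configurations. One must separately treat $2$ versus $3$ Carry Blue types; whether the Black/White column presents a single glue, two alternating glues, or (in the three-type case) a longer eventually-periodic glue string; whether individual Carry Blue types self-stack; which type is forced at the bottom of the strip; and whether the cycle in the digraph has period $1$, $2$, or $3$. In each branch the counting must show the period of the composite glue sequence divides something small -- a least common multiple of numbers all at most $6$ -- so that \texttt{LB42}'s height of $21$ strictly exceeds it, leaving room for the contradicting tile, and one must check that \texttt{LB42} is in fact drawn tall enough and with the right forced colors at its top to defeat every case simultaneously. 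Once the uniform period bound is in place, the clash step is verbatim the one already used in Lemma~\ref{thm:lb0}, so the remaining three pages are essentially this enumeration of configurations and the verification that \texttt{LB42} is engineered to kill all of them.
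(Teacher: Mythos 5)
There is a genuine gap here: your argument is built on a guessed structure for \texttt{LB42} that does not match the actual gadget, and the step that would do all the work is asserted rather than proved. \texttt{LB42} is not a tall Black--Blue--Black strip of height $\geq 21$ like column 6 of \texttt{LB1}; it is a short, wide pattern (only a few rows, roughly ten columns, as the indices $(2,1)$ through $(10,4)$ in the paper's argument show) in which columns of Carry Blue tiles sit between columns of Black and White tiles arranged as a half-subtractor. Consequently the pumping/cycling machinery of Lemma~\ref{thm:lb0} has no room to operate: there is no long run of identical colors in which a glue sequence can be forced to become periodic, and the ``uniform period bound versus height 21'' comparison you propose has nothing to latch onto. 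The paper's contradiction is of a completely different kind: with at most $3$ Black/White types their horizontal glues are pinned down (essentially to 0h, 1h, and possibly one extra), the positions of \texttt{LB42} then force Carry types to carry specific (east,west) pairs, and --- using Lemma~\ref{thm:twoselfstack} to equate vertical glues in the two-Carry case --- one derives clashes of (south,west) pairs, i.e.\ violations of uniqueness, in six separate cases (2 or 3 Carry types crossed with the possible Black/White splits). None of that combinatorial content (e.g.\ ruling out that a third Carry type has east glue 0h or 1h, forcing the second Black or White type into row 1, tracking which Carry type must sit left of every Black or every White tile) appears in your proposal.

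The deeper problem is that your final step --- ``the tile that \texttt{LB42} demands above the periodic stretch is a color the cycle forbids'' --- is exactly the claim that needs proof, and you justify it only by saying \texttt{LB42} is ``engineered to kill all cases.'' That is circular: the gadget was engineered to support the paper's uniqueness-clash argument, not your pumping argument, and you cannot assume a gadget has whatever forcing property your proof template requires. Moreover, even granting a tall column somewhere, a cycling argument in a single Carry Blue column bounds behavior of Carry types; the lemma's conclusion is a lower bound on the \emph{Black/White} count, and getting from ``$\le 3$ Carry and $\le 3$ Black/White'' to a contradiction genuinely requires exploiting how the few Black/White horizontal glues interact with the (east,west) pairs demanded at specific cells of \texttt{LB42}, which is precisely the case analysis your write-up defers. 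As it stands the proposal is not a proof of the lemma but a sketch of a different gadget that would have to be designed and verified from scratch.
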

\begin{figure}
\begin{center}
\includegraphics[width=220pt]{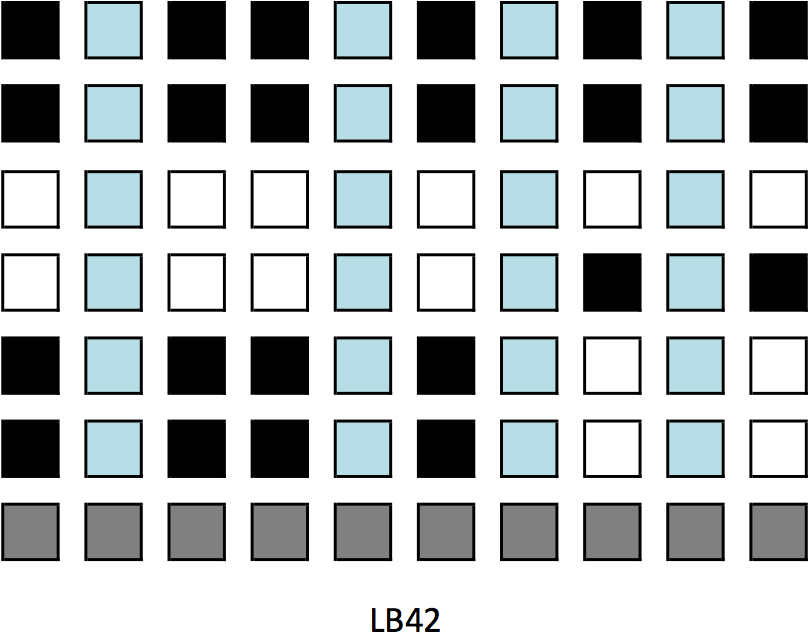}
\end{center}
\caption{Blue Carry tiles act as half-subtractors, an entire column of them can subtract 0 or 1.  If there are less than four Blue Carry tile types, then there must be at least four tile types of colors Black and White combined.}
\label{fig:lb42}
\end{figure}
\begin{proof}
We assume that the black box can find a solution using at most 46 tile types, then we do a case analysis.  From Proposition~\ref{thm:lb1} we know that there must be at least 2 Carry types.  And of course we must have at least 1 Black and 1 White type, 2 total.  So we consider the following six possible combinations and show them all to be impossible: there are either 2 or 3 Carry types; and there are either 1 Black and 1 White, or 1 Black and 2 White types, or 2 Black and 1 White.\\
\textbf{Cases 1\&2.  2 or 3 Carry types, 1 Black, 1 White.}  Because of the Black and White sub-patterns that appear in columns 3 and 4 of \texttt{LB42}, we know that the unique Black and White types can be said to have the glues we suggest in $T$.\footnote{There is only one of each, and both Black and White types self-stack vertically and self-stack horizontally; and also stack vertically with each other forcing equality of all vertical glues.}  Then the horizontal glues of the Black and White types are distinct, which means that we need at least 4 distinct Carry tile types to place at $(5,2)$, $(5,3)$, $(7,2)$, and $(7,3)$, because these indexes require distinct (east,west) glue pairs.

For the more complicated remaining cases, we use the following naming conventions for the types, to the extent that they exist in the Case: Black types are B1 and B2; White types are W1 and W2; Carry types are C1, C2 and C3.  When appropriate, as a convenience for both us and the reader, we will assign glue names to these types as to keep them consistent with $T$.\\
\textbf{Case 3.  2 Carry types, 2 Black, 1 White.}  There is only 1 White tile type W1, and it stacks with itself both vertically and horizontally, so WLOG we can fix its glue names as suggested in $T$.  So W1's (n,s,e,w) is (\#,\#,1h,1h).

Now we know that there must exist a Carry type with east and west glue both 1h from index $(5,3)$ of \texttt{LB42} between two White tiles, so we assign these glues to C2.

Also we see that the Black tile at $(8,3)$ has White tiles above and below it, so we can say the type placed here is B1 and has north and south glues \#.  Also it must now have a west glue unique from W1's, or it will violate uniqueness.  So we can call its west glue 0h.

Now the tile at $(7,3)$ has east glue 0h.  Type C2 has already been assigned east glue 1h, so we must place C1 at $(7,3)$.  We also see that from a White tile at $(6,3)$, this index, and therefore type C1, must have west glue 1h.

But now by Lemma~\ref{thm:twoselfstack}, we know that if there are only two Carry tile types, they must have all four of their north and south glues the same, call it g.  But now types C1 and C2 have the same (south,west) glue pair (g,1h), violating uniqueness.\\
\textbf{Case 4.  2 Carry types, 1 Black, 2 White.}  There is only 1 Black type B1, and it stacks with itself both vertically and horizontally, so we can fix its glue names as suggested in $T$.  So B1's (n,s,e,w) is (\#,\#,0h,0h).

Now we know that there must exist a Carry type with east and west glue both 0h from index $(5,1)$ of \texttt{LB42} between two Black tiles, so we assign these glues to type C1.

Also we see that the White tile at $(10,4)$ has Black tiles above and below it, so we can say the tile placed here is type W1 and has north and south glues \#.  Also it must now have a west glue unique from B1's, or it will violate uniqueness.  So we can call its west glue 1h.

Then the Carry tile at $(9,4)$ must have east glue 1h and be type C2, as C1 has already been assigned east glue 0h.

Now.  Neither the tile at $(8,1)$ nor the tile at $(8,2)$ can be type W1.  Otherwise $(7,1)$ or $(7,2)$ has east glue 1h, which makes one of them type C2, as C1 is known to have east glue 0h.  But then C2 has west glue 0h, and as in the last Case, by Lemma~\ref{thm:twoselfstack} both types must have south glue g, so C1 and C2 have identical (south,west) glue pairs, (g,0h), violating uniqueness.

So instead tiles at both $(8,1)$ and $(8,2)$ must be type W2.  Then W2 both self-stacks vertically and has north glue at $(8,2)$ equal to B1's south glue \#, so W2's south glue must be \# as well.  We must place C1 known to have east glue 0h, or C2 known to have east glue 1h, at $(7,1)$, because we only have two Carry tiles.  But either east glue when it becomes the west glue of W2, will cause W2 to clash with either the (south,west) glue pair of B1 (\#,0h), or W1 (\#,1h), giving us our contradiction.\\
\textbf{Case 5.  3 Carry types, 1 Black, 2 White.}  We start the same as case 4.  There is 1 Black type B1 with glues (n,s,e,w) equal to (\#,\#,0h,0h).  WLOG we will say that a type C1 is placed at $(5,1)$ and has east and west glues 0h.  From $(10,4)$, we can say that type W1 has north and south glue \#, and then must have west glue distinct from 0h, so call it 1h.  The tile at $(9,4)$ has east glue 1h, so we can say that it is C2.

Now we try to rule out the possibility that C3 has east glue 0h or 1h.  By contradiction, assume that it does.  Then every Carry type has east glue 0h or 1h.  As a result, W2 must have a south glue distinct from \#.  To see this, we realize that W2 must be used somewhere in \texttt{LB42} other than the first column.\footnote{If it is not used at all, or is only used in the first column, then the proof against Case 2 still goes through.}  If all Carry tile east glues are 0h or 1h, the west glue of W2 must be 0h or 1h.  Then a W2 \# south glue would clash with either B1 or W1, so instead it must be different.

Then W2 can only appear in the first row.  Its distinct south glue means it can not be placed above the \# north glue of B1 and W1, and it cannot self-stack, or else its distinct north glue at $(8,2)$ or $(10,2)$ would fail to match the south glues of B1 that appear in those columns in row 3.

Therefore every White tile not in row 1 must be type W1.  From $(3,3)$ and $(4,3)$ we see that W1 stacks horizontally with itself, so its east glue must be equal to its known 1h west glue.  We also previously said that the tile at $(9,4)$ is C2.  We now know that, like its 1h east glue, it has west glue 1h from W1 at $(8,4)$.

At this point, assume C3 has east glue 1h.  Then no Carry tile can be placed at $(7,3)$.  C1 has west glue 0h which fails to match the east glue of W1 at $(6,3)$; C2 and C3 have east glue 1h which fail to match the west glue of B1 at $(8,3)$.

Alternatively assume C3 has east glue 0h.  Then no Carry tile can be places at $(7,2)$.  C1 and C3 have east glue 0h which fails to match the west glue of W1 at $(8,2)$.  C2 has west glue 1h which fails to match the east glue of B1 at $(6,2)$.

So it must be that the east glue of C3 is distinctly 2h.  C3 must be used or the proof against Case 4 goes through.  So with B1 and W1 west glues already assigned, it must be W2 that has west glue 2h.

We also realize that it must be C1 placed to the left of every Black tile, as the three Carry tile east glues are all distinct.  Then we know C1 self-stacks vertically at $(2,1)$ and $(2,2)$, so we can say it has north and south glue g.  We also see that the tile at $(9,4)$, previously decided to be C2, has C1 to its north and south.  Therefore its north and south glues must also be g.

Now we must place C3 at $(7,2)$.  C1 cannot be placed there because its 0h east glue cannot match the 1h or 2h west glues of the White types.  C2 cannot be placed there because then it would have a 0h west glue, and its (south,west) glue pair of (g,0h) would clash with C1.

Then what is the south glue of C3?  By placing C3 at $(7,2)$, it has west glue 0h, so it cannot also have south glue g, or it would clash with C1.  So it has distinct south glue g2.  Neither C1 nor C2 has north glue g2, so C3 must also be placed at $(7,1)$, implying that C3 self-stacks and its north glue is also g2.  But then no tile can be placed at $(7,3)$.  The g south glue of C1 and C2 does not match the north glue of C3; and the 2h east glue of C3 does not match the 0h west glue of B1 at $(8,3)$, and we have reached our contradiction.\\
\textbf{Case 6.  3 Carry types, 2 Black, 1 White.}  This case proceeds similarly to Case 5.

This time we start the same as Case 3.  There is 1 White type W1 with glues (n,s,e,w) equal to (\#,\#,1h,1h).  We will say C2 is the type at $(9,4)$ and has east and west glues 1h.  From $(10,3)$, we can say that type B1 has north and south glue \#, and then must have west glue distinct from 1h, so call it 0h.  The tile at $(9,3)$ has east glue 0h, so we can say that it is type C1.

Now we try to rule out the possibility that C3 has east glue 0h or 1h.  By contradiction, assume that it does.  Then every Carry tile has east glue 0h or 1h.  As a result, B2 must have a south glue distinct from \#.  To see this, we realize that B2 must be used somewhere in \texttt{LB42} other than the first column.  If all Carry tile east glues are 0h or 1h, the west glue of B2 must be 0h or 1h.  Then a B2 \# south glue would clash with either B1 or W1, so instead it must be different.

Then B2 can only appear in the first row.  Its distinct south glue means it can not be placed above the \# north glue of B1 and W1, and it cannot self-stack, or else the distinct north glue at $(3,2), (4,2),$ or $(6,2)$ would fail to match the south glues of W1 that appear in those columns in row 3.

Therefore every Black tile not in row 1 must be type B1.  From $(3,2)$ and $(4,2)$ we see that B1 stacks horizontally with itself, so its east glue must be equal to its known 0h west glue.  We also previously said that the tile at $(9,3)$ is C1.  We now know that, like its 0h east glue, it has west glue 0h from B1 at $(8,3)$.

At this point, assume C3 has east glue 1h.  Then no Carry tile can be placed at $(7,3)$.  C1 has west glue 0h which fails to match the east glue of W1 at $(6,3)$; C2 and C3 have east glue 1h which fail to match the west glue of B1 at $(8,3)$.

Alternatively assume C3 has east glue 0h.  Then no Carry tile can be places at $(7,2)$.  C1 and C3 have east glue 0h which fails to match the west glue of W1 at $(8,2)$.  C2 has west glue 1h which fails to match the east glue of B1 at $(6,2)$.

So the east glue of C3 is distinctly 2h.  C3 must be used or the proof against Case 3 goes through.  So with B1 and W1 west glues already assigned, it must be B2 that has west glue 2h.

We also realize that it must be C2 placed to the left of every White tile, as the three Carry tile east glues are all distinct.  Then we know C2 self-stacks vertically at $(2,3)$ and $(2,4)$, so we can say it has north and south glue g.  We also see that the tile at $(9,3)$, previously decided to be C1, has C2 to its north and south.  Therefore its north and south glues must also be g.

Now we must place C3 at $(7,3)$.  C2 cannot be placed there because its 1h east glue cannot match the 0h or 2h west glues of the Black tiles.  C1 cannot be placed there because then it would have a 1h west glue, and its (south,west) glue pair of (g,1h) would clash with C2.

But then what is the south glue of C3?  By placing C3 at $(7,3)$, it has west glue 1h, so it cannot also have south glue g, by uniqueness.  So it has distinct south glue g2.  Neither C1 nor C2 has north glue g2, so C3 must also be placed at $(7,2)$.  But then its east glue of 2h clashes with the 1h west glue needed by $(8,2)$, and we have reached a final contradiction.  $\bullet$
\end{proof}
With \texttt{LB42} and Lemma~\ref{thm:lb42}, we have shown that if we insist on using less than 4 tile types colored Carry Blue, we must pay a price by adding at least two tile types colored Black or White.  We now leverage this result by adding copies of the colors Black and White, and copies of \texttt{LB42} as sub-patterns \texttt{LB43} through \texttt{LB46}.

Of 46 tile types, we have assigned 34; 1 more is known to be UAB or Black; another 1 more is known to be UAW or White; 10 are free to be any color.
\begin{figure}[tb]
\begin{center}
\includegraphics[width=420pt]{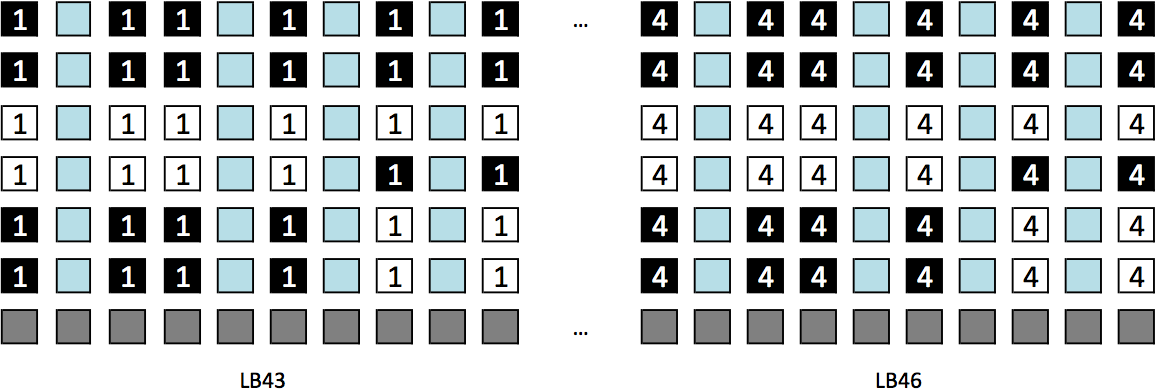}
\end{center}
\caption{Black1 through Black4 and White1 through White4 are copies of the original Black and White colors.  \texttt{LB43} through \texttt{LB46} are copies of \texttt{LB42} with Black and White replaced by respective copies.  Then copies of Lemma~\ref{thm:lb42} apply to each sub-pattern, which implies that if we insist on a small number of Blue Carry tiles, the number of necessary tiles colored by the Blacks and Whites multiplies to the point of impossibility.}
\end{figure}
\begin{prop}
\label{thm:lb43prop}
If \texttt{LB42} through \texttt{LB46} appear as sub-patterns of \texttt{PATTERN}, along with \texttt{LB1} through \texttt{LB41}, then in any minimum tile type set solving \texttt{PATTERN}, either there must be at least 4 Carry tile types colored Blue as we suggest in $T$; or the black box must find a minimum tile type set larger than 46.
\end{prop}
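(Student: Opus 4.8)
The plan is to derive a contradiction by counting ``free'' tile types, combining Lemma~\ref{thm:lb42} (and its four copies on \texttt{LB43}--\texttt{LB46}) with Lemma~\ref{thm:lb6b}. Suppose the black box returns a minimum set $M$ with $|M|\le 46$ in which at most $3$ types are colored Carry Blue. Proposition~\ref{thm:lb1} already forces at least $2$ such types, so Carry Blue receives exactly $2$ or $3$ types, and in both cases strictly fewer than $4$, so the hypothesis of Lemma~\ref{thm:lb42} is satisfied. Recall the bookkeeping just before this proposition: $34$ types are pinned down (the $29$ one-per-color baselines plus one extra each for el-Black, el-White, Carry Blue, RAB, RAW), there are two ``either/or'' floating types (UAB-or-Black and UAW-or-White), and $10$ types are free.

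First I would invoke Lemma~\ref{thm:lb42} on \texttt{LB42} (together with \texttt{LB1}) to obtain that the numbers of Black and White types sum to at least $4$, and then invoke the four copies of that lemma on \texttt{LB43}--\texttt{LB46} to obtain that, for each $i\in\{1,2,3,4\}$, the numbers of Black$_i$ and White$_i$ types sum to at least $4$; the copies are independent because Black$_i$, White$_i$ are distinct new colors and each copy of \texttt{LB42} shares only the single Carry Blue color. Since every copy color currently stands at its baseline $1$, each inequality on Black$_i$ and White$_i$ costs at least $2$ previously free types, so these four constraints alone consume at least $8$ of the $10$ free types (and if Carry Blue has $3$ rather than $2$ types, another free type is already spent, leaving at most $1$). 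The constraint on Black and White can be met most cheaply by letting the two floating types play exactly those two roles, which costs no free types but also provides no slack elsewhere.

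Next I would apply Lemma~\ref{thm:lb6b} via \texttt{LB6}--\texttt{LB41}: in any minimum set, either each of the nine yellow-Variable colors (Black together with P1,\ldots,P8) has at least $2$ types, or RAB and RAW each have at least $4$ types. In the first case, giving a second type to each of P1,\ldots,P8 costs $8$ further free types, so the demand is at least $8+8=16$; in the second case, raising RAB and RAW each from $2$ to $4$ costs $4$ further free types, so the demand is at least $8+4=12$. Either way $M$ would need strictly more than the $10$ available free types, that is, $|M|>46$, contradicting the assumption; hence Carry Blue receives at least $4$ types. That these four types are the ones suggested in $T$ (distinct east glues, half-subtractor behavior) then follows from uniqueness and from Lemma~\ref{thm:twoselfstack} pinning their roles in \texttt{LB1} and \texttt{LB42}.

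I expect the bookkeeping to be the only real obstacle: one must verify that the two floating types can absorb the cost of ``Black plus White $\ge 4$'' and nothing more, that the copy colors Black$_i$, White$_i$ get no analogous help, and that the baselines ($2$ for Carry Blue, RAB, RAW; $1$ for the Purples and the copy colors) are tracked precisely, so that both mutually exclusive consequences of Lemma~\ref{thm:lb6b}---an extra cost of $8$ or of $4$---push the total strictly past $10$ once stacked on top of the $8$ already committed to \texttt{LB43}--\texttt{LB46}.
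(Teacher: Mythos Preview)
Your argument is correct and follows essentially the same route as the paper: assume fewer than four Carry Blue types, invoke Lemma~\ref{thm:lb42} on \texttt{LB42} and its four copies to force ten extra Black/White-type commitments (eight genuinely free plus the two floating UAB-or-Black, UAW-or-White slots), and then apply Lemma~\ref{thm:lb6b} to exhaust the remaining budget. The paper phrases the last step as ``with two free types left, one of RAB, RAW is capped at three, yet the hypothesis of Lemma~\ref{thm:lb6b} holds, so both must reach four''; you instead treat the lemma as a disjunction and show each branch overshoots ten free types. These are logically equivalent.

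One small over-reach: your final sentence, claiming that the four Carry Blue types must be the specific ones in $T$ via uniqueness and Lemma~\ref{thm:twoselfstack}, is not part of this proposition and is not supported by that lemma (which concerns only the two-type case). The identification of the four Carry Blue glues is deferred to Proposition~\ref{thm:blueglues}; here only the count is established.
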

\begin{proof}
We assume that the black box can find a solution using at most 46 tile types.  Now by contradiction, assume that it can do so using fewer than 4 Carry Blue types.

Then from Lemma~\ref{thm:lb42}, we know that it must use at least 4 total tile types colored Black and White.  But by symmetry, Lemma~\ref{thm:lb42} would apply to each pair of colors Black1 and White1 through Black4 and White4 as well, because we include auxiliary patterns \texttt{LB43} through \texttt{LB46}.  Then we must also use at least 4 total types of each of the 4 pairs of Black and White copies.

Between the original sub-pattern \texttt{LB42} and its 4 copies, this will require us to use at least 10 additional free tile types, two of which can be the types that are ``unknown but restricted" to two colors.  This leaves 2 types free.

Now we return to \texttt{LB6} through \texttt{LB41} in Fig.~\ref{fig:lb6} and Fig.~\ref{fig:lb24}.  RAB and RAW already have lower bounds of 2 types each.  With 2 free types remaining available to us, we see that either RAB or RAW must have a maximum of 3 types assigned its color.

But with at most 2 free types that can be assigned to the Purple colors P1, ..., P8, the conditions are in place to apply Lemma~\ref{thm:lb6b}, which would require us to have at least 4 types colored RAB and at least 4 more types colored RAW.  This contradiction shows that we must use at least 4 Carry Blue types, or increase the size of the tile type set.  $\bullet$
\end{proof}
We have reached the following state: of 46 tile types, we have assigned 36; 1 more is known to be UAB or Black; another 1 more is known to be UAW or White; 8 are free to be any color.  With only 8 free types left to take the colors Black or P1, ..., P8, we are in a position to prove the lower bounds of 4 for both RAB and RAW.  To do so we will also need sub-patterns \texttt{LB47} through \texttt{LB82} in Fig.~\ref{fig:lb47}.
\begin{figure}[tb]
\begin{center}
\includegraphics[width=350pt]{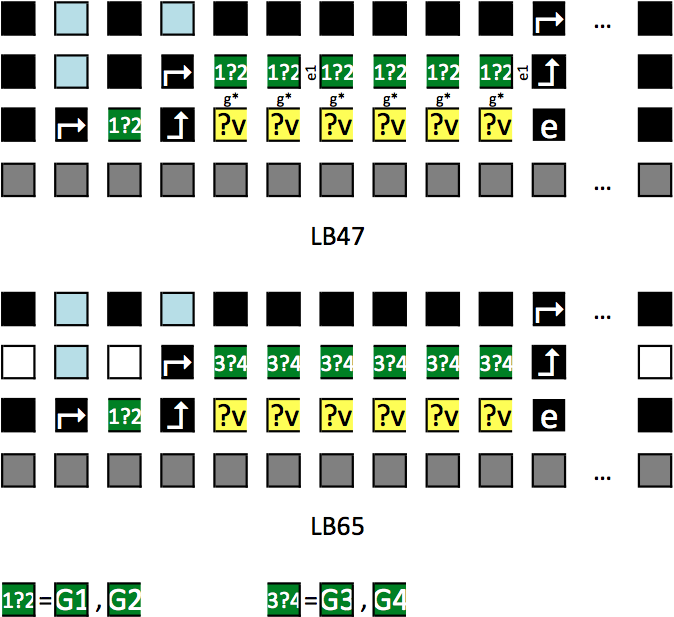}
\end{center}
\caption{There are 18 copies of each sub-pattern shown, for each choice of two Green colors and the 9 colors represented by variable-Yellow.  \texttt{LB47} through \texttt{LB82} are first used in Proposition~\ref{thm:lb43} to show that if Black and Purple colors use all remaining free types, then a unique UAB type leads to a contradiction.  Later they are used again to show that we need fully 4 types of color UAB in Proposition~\ref{thm:lb47ua}.}
\label{fig:lb47}
\end{figure}
\begin{prop}
\label{thm:lb43}
If \texttt{LB1} through \texttt{LB82} appear as sub-patterns of \texttt{PATTERN}, then in any minimum tile type set solving \texttt{PATTERN}, either there must be at least 4 tile types colored RAB with distinct east glues, and at least 4 tile types colored RAW with distinct east glues, as we suggest in $T$; or the black box must find a minimum tile type set larger than 46.
\end{prop}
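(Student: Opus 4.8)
The plan is to combine the running ledger of ``free'' types with Lemma~\ref{thm:lb6b} and the new sub-patterns \texttt{LB47}--\texttt{LB82}. Suppose the black box returns a minimum set $M$ with $|M|\le 46$ (otherwise the ``or'' clause of the statement holds and we are done). Entering this proposition the ledger reads: $36$ types are pinned to colors, one further type is known to be colored UAB or Black, one is known to be colored UAW or White, and $8$ types are free; the nine variable-Yellow colors used in \texttt{LB6}--\texttt{LB41} are Black and $P_1,\dots,P_8$. I want to force at least $4$ types with distinct east glues among those colored RAB and likewise $4$ among those colored RAW.

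First I would ask whether $M$ actually contains two types of each of Black, $P_1,\dots,P_8$. If it does not, then (exactly as in the opening step of the proof of Lemma~\ref{thm:lb6b}) one of these nine colors is represented by a single type, so the hypothesis of Lemma~\ref{thm:lb6b} is in force and the lemma immediately delivers $4$ RAB and $4$ RAW types with distinct east glues; the proposition holds for $M$. Note this branch already absorbs everything ``lossy'': if even one free type is diverted to a third RAB or RAW type, or anywhere other than the $P_i$'s, then the eight free types no longer suffice to give all eight $P_i$'s a second type, and we land here.

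The only surviving possibility is that $M$ does give two types to each of Black and $P_1,\dots,P_8$. Then the ledger is completely rigid: the eight extra $P_i$ types are exactly the eight free types, the extra Black type must be the ``UAB or Black'' type (hence that type is Black, not UAB), so UAB is represented in $M$ by a single type, and $|M|=46$ with no remaining slack; in particular every color besides the handful already pinned to $\ge 2$ is a singleton in $M$, so the Green colors $G_1,\dots,G_4$ each have exactly one, glue-fixed, type. Now I would run the sub-patterns \texttt{LB47}--\texttt{LB82}. Since UAB has a unique type, its east glue $e$ is fixed; in the copy of the sub-pattern keyed to a suitable pair of Green colors and a variable-Yellow color $C$, the column layout forces a chain of placements through the Green layer whose glues are all determined by $e$ together with the (now glue-fixed) singleton $C$-type and singleton Green types, in the style of Lemmas~\ref{thm:lb0} and~\ref{thm:lb42}. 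Because no free type survives to introduce a fresh east or south glue, two of these forced tiles are driven to share a (south, west) glue-tuple, violating uniqueness. This contradiction eliminates the surviving case, so $M$ always falls under the first branch, and the proposition follows.

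The principal obstacle is that last step: reading off from the geometry of \texttt{LB47}--\texttt{LB82} precisely which Green- and $P_i$-colored tiles are forced, and then exhibiting the concrete pair of clashing types. Unlike Lemma~\ref{thm:lb6b}, whose proof needs only a south-glue pigeonhole over the Green layer, here one must propagate east and west glues across the Green layer under the extra (and crucial) assumption that the entire free budget has already been spent, which is exactly what makes the clash unavoidable. A secondary care is the bookkeeping: one must check that ``two types of each of the nine Yellow colors'' really is the unique allocation of the $8$ free types and $2$ semi-pinned types that both escapes the Lemma~\ref{thm:lb6b} hypothesis and keeps UAB at two or more types, so that the case split above is exhaustive.
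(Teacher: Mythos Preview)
Your case split is exactly the paper's: either some variable-Yellow color among Black, $P_1,\dots,P_8$ has only one type (and Lemma~\ref{thm:lb6b} finishes), or all nine get a second type, which forces UAB and each $G_i$ down to a single type. That much is correct and matches the paper.

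The gap is in the second branch, and it is more than a missing detail: your sketch of the contradiction (``two of these forced tiles are driven to share a (south, west) glue-tuple'') is not the mechanism, and invoking the cycling of Lemma~\ref{thm:lb0} points in the wrong direction. The paper's argument is a short counting step, not a clash-finding one. With UAB unique, its \emph{west} glue $e_1$ is fixed (you wrote east; check the layout of \texttt{LB47}: the UAB cell sits to the east of the Green run). In \texttt{LB47}--\texttt{LB82} each $G_i$ has a UAB cell immediately to its east, so every $G_i$ type has east glue $e_1$; moreover each $G_i$ self-stacks \emph{horizontally} in those sub-patterns, so its west glue is also $e_1$. Now all four Green types share the west glue $e_1$, and uniqueness forces their four south glues to be pairwise distinct. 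But each Purple color $P_j$ sits below every one of $G_1,\dots,G_4$ across the family of sub-patterns, so $P_j$ would need at least four types (one per Green south glue), while we have pinned exactly two and no free types remain. That is the contradiction.

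So the missing idea is: the horizontal self-stacking of the Green types in \texttt{LB47}--\texttt{LB82} collapses all four Green west glues to the single UAB west glue, and the contradiction is a type-count on the Purples, not a direct (south, west) clash. Once you see that, the proof is a few lines; there is no need for the Lemma~\ref{thm:lb0}/\ref{thm:lb42} style glue-propagation you anticipated.
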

\begin{proof}
By symmetry we only prove the case for RAB.  We assume that the black box can find a solution using at most 46 tile types.  The current lower bound for every color Black, P1, ..., P8 is 1 tile.  There are 9 total tile types available to be assigned these colors- one is restricted to be Black or UAB, and 8 more free.

Therefore if we do not assign all 9 of these tile types to be second types of colors Black, P1, ..., P8, then Lemma~\ref{thm:lb6b} would be immediately applicable to give us the desired result.  So assume these previously unassigned types are colored as necessary to maintain a contradiction of the stated result.

Now there are no more free tile types, except one type restricted to be White or UAW.  Therefore there is only one tile type of each of the colors UAB, G1, G2, G3, and G4.  Consider the unique tile type colored UAB and call its west glue e1 as labeled in Fig.~\ref{fig:lb47}.

By the uniqueness of types of the listed colors, this must be the east glue of every type colored Green.  We also see that for colors G1 through G4, the unique tile type of each color self-stacks horizontally in \texttt{LB47} through \texttt{LB82}.  Therefore each of these tile types also has west glue e1.

If all four tile types colored G1 through G4 have west glue e1, then uniqueness requires that they have distinct south glues.  This would in turn require at least four tile types of each Purple color, which is impossible- there are theoretically two tile types of each Purple color and we have zero free types remaining.  Having reached a contradiction, we can apply Lemma~\ref{thm:lb6b} directly.  $\bullet$
\end{proof}
Proposition~\ref{thm:lb47ua} proves the lower bound of 4 for each of the colors UAB and UAW, the last lower bound proofs needed for Section 4.  It uses \texttt{LB47} through \texttt{LB82} in Fig.~\ref{fig:lb47} again for UAB, and \texttt{LB83} through \texttt{LB118} shown in Fig.~\ref{fig:lb83} and Fig.~\ref{fig:lb101} for the symmetrical argument for UAW.

It should be understood that there are 18 copies of \texttt{LB47}, one for each combination of: `1?2' equal to G1 or G2, and variable-Yellow assigned each of its 9 possible colors.  For instance, we say that \texttt{LB47} is the copy with color G1 appearing in indexes $(5,2)$ through $(10,2)$, and color Black in indexes $(5,1)$ through $(10,1)$.  And similarly there are 18 copies of \texttt{LB65}.  Then there are 9 copies of each UAW sub-pattern \texttt{LB83}, \texttt{LB92}, \texttt{LB101}, and \texttt{LB110}.

Of 46 tiles, we have assigned 40; 1 more is known to be UAB or Black; another 1 more is known to be UAW or White; 4 are free to be any color.
\begin{prop}
\label{thm:lb47ua}
If \texttt{LB47} through \texttt{LB118} appear as sub-patterns of \texttt{PATTERN}, along with \texttt{LB1} through \texttt{LB46}, then in any minimum tile type set solving \texttt{PATTERN}, either there must be at least 4 tile types colored UAB with distinct west glues and at least 4 tile types colored UAW with distinct west glues as we suggest in $T$; or the black box must find a minimum tile type set larger than 46.
\end{prop}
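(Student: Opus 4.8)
The plan is to argue by contradiction, carrying out the case of UAB in detail; the UAW case is identical after replacing \texttt{LB47}--\texttt{LB82} by \texttt{LB83}--\texttt{LB118}. So suppose the black box returns a set $M$ with $|M|\le 46$ that uniquely solves \texttt{PATTERN}, yet $M$ contains at most three tile types colored UAB with distinct west glues.

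First I would freeze the global bookkeeping supplied by the earlier results. By Propositions~\ref{thm:lb1}, \ref{thm:lb43prop} and~\ref{thm:lb43}, the set $M$ already commits $40$ of its types: one for each of the $22$ trivial colors, two each for el-Black and el-White, four for Carry Blue, four RAB types with distinct east glues, and four RAW types with distinct east glues; one further type is restricted to UAB or Black, one more to UAW or White, and exactly $4$ types remain entirely free. Under the hypothesis ``at most three distinct UAB west glues'', at most two of those free types can be absorbed by UAB beyond its one guaranteed type and the possible ``UAB or Black'' type, so at least two free types --- or else the ``UAB or Black'' type itself --- are still available to be pushed elsewhere.

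Next I would replay the mechanism from the proof of Proposition~\ref{thm:lb43} on \texttt{LB47}--\texttt{LB82}. In every one of these copies a UAB tile is forced immediately east of a horizontally self-stacking run of Green tiles, so each Green type that appears has east glue (equal to its west glue) in the set $W$ of west glues used by UAB types, and by hypothesis $|W|\le 3$. Since each of G1--G4 must contribute at least one such self-stacking type, $M$ holds at least four self-stacking Green types, so by pigeonhole two of them carry the same glue $e\in W$; by uniqueness these two Green types then have distinct south glues $s\ne s'$. Letting the variable-Yellow color run over its nine values Black, $P_1,\dots,P_8$, the geometry of the sub-patterns forces, in the copies using each of these two Green types, the variable-Yellow column directly below to expose north glue $s$ in one family of copies and $s'$ in another; hence every one of the nine variable-Yellow colors needs at least two types. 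That asks for $9$ types beyond the $9$ already committed (one per color), which is impossible with only $4$ free types remaining. This contradiction proves the UAB bound, and the symmetric argument on \texttt{LB83}--\texttt{LB118} proves the UAW bound.

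The step I expect to be the real obstacle is exactly the passage from ``few UAB west glues'' to a pigeonhole collision among self-stacking Green types that also survives the variable-Yellow counting: in contrast to Proposition~\ref{thm:lb43}, where no free types remained and every Green color therefore had a single type, here up to $4$ free types could be funneled into the Green colors (or split between UAB and the Green or Purple colors), so one must run a short case analysis over those distributions. In each case I would either extract the collision above or else fall back on the Purple-multiplication bound from Proposition~\ref{thm:lb43} --- four distinct Green south glues forcing at least four types of every Purple color, via Lemma~\ref{thm:lb6b} --- always closing the case by observing that the at most $4$ remaining free types cannot cover the resulting deficit. The cycling argument of Lemma~\ref{thm:lb0} may also be needed to rule out degenerate placements of the Green runs.
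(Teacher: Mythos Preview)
Your approach diverges from the paper's and, as written, has a real gap. You start by assuming at most three distinct UAB west glues and try to push this scarcity back onto the Green types, claiming that ``each Green type that appears has east glue (equal to its west glue) in the set $W$.'' That equality holds only when a Green color has a single type and is therefore forced to self-stack across the six-cell run; once even one free type is spent on a Green color, the run can cycle through two or more types, no individual Green type need have east glue equal to its west glue, and the intermediate east glues need not lie in $W$ at all (only the east glue at $(10,2)$ must). You recognize this and defer to a case analysis over how the four free types might be distributed among UAB, the Greens, and the Purples, but that analysis is not carried out, and it is not obvious that every branch closes.

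The paper's proof runs in the opposite direction and sidesteps the case split entirely. It does \emph{not} begin by assuming a shortage of UAB west glues. Instead it first applies pigeonhole to the nine variable-Yellow colors: with only four free types remaining, some color $C$ has a unique type, say with north glue $g^*$. Then it exploits the \emph{length} of the Green run. In each $G_i$--$C$ copy the six consecutive $G_i$ tiles all have south glue $g^*$, but there can be at most five $G_i$ types in total (one baseline plus at most four free), so the chain of east glues must cycle and the east glue $e_i$ at $(10,2)$ repeats the east glue at some earlier position, say $(6,2)$. The four tiles at $(7,2)$ across $i=1,2,3,4$ then share south glue $g^*$ and have west glues $e_1,\dots,e_4$, but carry four different Green colors; uniqueness forces $e_1,\dots,e_4$ to be pairwise distinct. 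Since each $e_i$ is also the west glue of the UAB tile at $(11,2)$ in the $G_i$--$C$ copy, this yields four distinct UAB west glues directly. The device you are missing is that the run length six is deliberately chosen to exceed the maximum possible number of types per Green color, which is what makes the cycling argument work uniformly, independent of how the free types are allocated.
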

\begin{figure}[tb]
\begin{center}
\includegraphics[width=330pt]{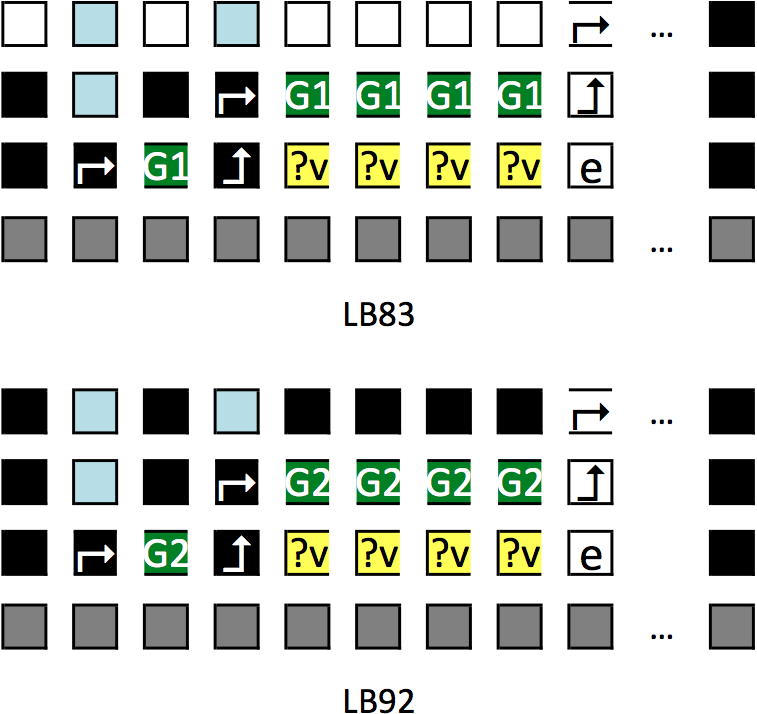}
\end{center}
\caption{\texttt{LB83} through \texttt{LB100} present half of the sub-patterns for UAW that are analogous to those for UAB in Fig.~\ref{fig:lb47}.  The other half are in Fig.~\ref{fig:lb101}.  The sub-patterns are presented 18 at a time for UAB and 9 at a time for UAW because the roles of the different tile types in subtraction require more arrangements of the colors Black and White in the first, third, and last columns across the various pattern designs for UAW than for UAB.}
\label{fig:lb83}
\end{figure}
\begin{figure}[tb]
\begin{center}
\includegraphics[width=330pt]{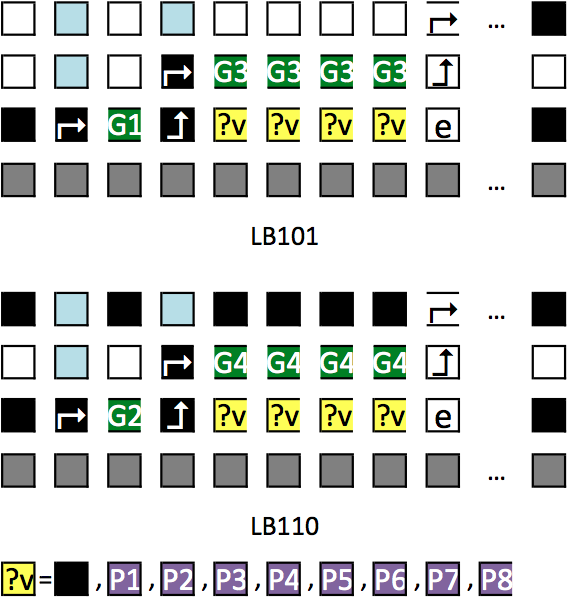}
\end{center}
\caption{The other half of the UAW patterns, referenced in Fig.~\ref{fig:lb83}.  The colors represented by variable-Yellow are shown here reflect all 72 sub-patterns including those in Fig.~\ref{fig:lb47} and Fig.~\ref{fig:lb83}.}
\label{fig:lb101}
\end{figure}
\begin{proof}
By symmetry we only prove the case for UAB, the UAW proof is the same using the sub-patterns in Fig.~\ref{fig:lb83} and Fig.~\ref{fig:lb101} and the remaining free tile types after the proof of the UAB lower bound.  We assume that the black box can find a solution using at most 46 tile types.

Because we only have 4 free types left that can be any of the 9 variable-Yellow colors, there must be at least one color in the set such that there is only 1 tile type of that color.  Call it $C$, and we analyze the four sub-patterns in which $C$ appears.  Also, to assist clarity in this proof, some glue labels that appear in our analysis are shown in Fig.~\ref{fig:lb47}.

Say the unique tile type colored $C$ has north glue g$^*$.  Now for each Green color G1, G2, G3, G4, there is a sub-pattern where it self-stacks horizontally such that it appears in 6 consecutive indexes, each above a tile colored $C$.  We now proceed similarly to the proof of Lemma~\ref{thm:lb0}.

Consider G1.  Because we only have 4 free tile types left that can take the Green colors, we know that there can be at most 5 types colored R1 with south glue g$^*$, with at most 5 distinct east glues.

Because we have 6 consecutive indexes colored R1 such that each must have a tile with south glue g$^*$ placed in it, it must be the case that the 6th tile colored R1 placed at index $(10,2)$ has an east glue e1 that is a repeat of the east glue of one of the tiles colored R1 somewhere to its left.  WLOG say it is the same as the east glue of $(6,2)$ as shown in Fig.~\ref{fig:lb47}.

We do the same analysis for the three other sub-patterns with $C$ in row 1 and respectively each of R2, R3, and R4 in row 2.  Specifically, the respective east glues e2, e3, e4 of their tiles at $(10,2)$ must each be copies of the east glue of a tile of the same color that is placed farther left.  Again, WLOG, the tiles at $(10,2)$ can all be copies of the tiles at $(6,2)$ of their respective sub-patterns.

Now consider the indexes $(7,2)$ of the 4 sub-patterns with $C$ in row 1.  A tile placed in each sub-pattern at $(7,2)$ must have the same south glue g$^*$, and has a Green color distinct from the others, so they must be different tile types.  Then uniqueness requires that the west glues of the tiles placed at $(7,2)$ in each sub-pattern must be distinct. But the west glues of these tiles are exactly e1, e2, e3, e4, so these 4 glues must be distinct.

Finally, we see that going back to e1, e2, e3, e4 as the east glues of the G1, G2, G3, G4 tiles placed at index $(10,2)$ in each of the 4 sub-patterns, that they must be 4 distinct west glues of tile types colored UAB.  $\bullet$
\end{proof}

Of 46 tile types, we have assigned all 46, allowing Proposition~\ref{thm:section4main}.
\begin{prop}
\label{thm:section4main}
If \texttt{LB1} through \texttt{LB118} appear as sub-patterns of \texttt{PATTERN}, then for any minimum tile type set solving \texttt{PATTERN}, it must have total size at least 46, with the desired lower bounds holding for each color respectively, as we suggest in $T$.
\end{prop}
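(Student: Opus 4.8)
The plan is to treat this as a pure accounting argument that simply aggregates every per-color lower bound established earlier in the section. First I would dispose of the trivial case: if the minimum tile type set $M$ solving \texttt{PATTERN} already has $|M| > 46$, the claimed inequality $|M| \ge 46$ holds immediately, so I may assume $|M| \le 46$. This assumption is exactly what is needed to discharge the ``or the black box must find a minimum tile type set larger than $46$'' escape clause in Proposition~\ref{thm:lb1}, Proposition~\ref{thm:lb43prop}, Proposition~\ref{thm:lb43}, and Proposition~\ref{thm:lb47ua}; since \texttt{LB1} through \texttt{LB118} all appear as sub-patterns, the non-escape alternative of each of these statements must hold.

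Next I would itemize the resulting per-color lower bounds. The $22$ colors that have a single tile type in $T$ (Black, White, and the $20$ colors of Fig.~\ref{fig:tileset2}) each occur in \texttt{PATTERN}, hence each forces at least one tile type: $22$ in total. Proposition~\ref{thm:lb1} forces at least $2$ tile types of each of el-Black and el-White. Proposition~\ref{thm:lb43prop} forces at least $4$ tile types colored Carry Blue. Proposition~\ref{thm:lb43} forces at least $4$ tile types colored RAB and at least $4$ colored RAW. Proposition~\ref{thm:lb47ua} forces at least $4$ tile types colored UAB and at least $4$ colored UAW. These are $29$ distinct colors, and since every tile type carries exactly one color, tile types counted for different colors are genuinely distinct, so the counts add disjointly.

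Summing gives $22 + 2 + 2 + 4 + 4 + 4 + 4 + 4 = 46$, hence $|M| \ge 46$; combined with the assumption $|M| \le 46$ this yields $|M| = 46$. Because the number of tiles of each color in $M$ is at least the number in $T$ of that color, and the two families of per-color counts both sum to $46 = |M| = |T|$, every one of these inequalities must in fact be an equality — that is, the per-color lower bounds suggested by $T$ all hold, which is the ``with the desired lower bounds holding for each color respectively'' clause. I do not expect a genuine obstacle here; the only points requiring care are (i) checking that the sub-pattern hypotheses of each invoked proposition are subsumed by ``\texttt{LB1} through \texttt{LB118}'', and (ii) verifying that the seven multi-type colors together with the $22$ single-type colors exhaust the $29$ colors and that their $T$-multiplicities sum to exactly $46$, so that no slack is lost in the aggregation.
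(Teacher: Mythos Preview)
Your proposal is correct and follows essentially the same approach as the paper: the paper's own proof is a two-sentence ``aggregate the per-color lower bounds from this section and sum to $46$,'' and you have simply spelled out that aggregation explicitly, including the case split on $|M|\le 46$ needed to discharge the escape clauses. Your version is more detailed than the paper's but not substantively different.
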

\begin{proof}
\label{thm:section4final}
Aggregating the results in this Section 4 have established the desired lower bound for each color.  Summing over colors gives the overall lower bound of 46.  $\bullet$
\end{proof}

\section{Glue Interpretations}
In this section, we show that if $T$ has size 46, then the glues of some tile types in $T$ must be isomorphic to the names we give to them in Fig.~\ref{fig:tileset1}.\\
It will be sufficient to only prove the glue scheme for tile types used in \texttt{CIRCUIT}\footnote{Though we do prove glues for the Green tile types to facilitate later results.} because then it is clear that unsolvable \textbf{SS} cannot solve \texttt{CIRCUIT} in 26 tile types.
For convenience, and WLOG, we use the type-naming and glue-naming schemes in $T$.  New patterns are named \texttt{GE\#}, for Glue Enforcement.

We re-use  \texttt{LB42} in Fig.~\ref{fig:lb42} on page~\pageref{fig:lb42} for the next two propositions.
\begin{prop}
\label{thm:bwglues}
If \texttt{LB1} through \texttt{LB118} appear as sub-patterns of \texttt{PATTERN} (and establish a minimum tile type set size of 46),
then in any minimum tile type set solving \texttt{PATTERN}, either the glues of the Black and White tile types are as we suggest in $T$; or the black box must find a minimum tile set larger than 46.\end{prop}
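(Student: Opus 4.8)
The plan is to start from the counting already done. Since \texttt{LB1} through \texttt{LB118} are present, Proposition~\ref{thm:section4main} says that any tile type set solving \texttt{PATTERN} of size at most $46$ has size exactly $46$, and moreover uses each color exactly as many times as $T$ does. In particular such a set contains exactly one tile type $B$ colored Black and exactly one tile type $W$ colored White. If the black box has no solution of size $\le 46$ we are already in the ``or'' branch of the statement and there is nothing to prove, so assume one exists and fix such a minimum set, with its unique $B$ and $W$.

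Next I would read the relevant local adjacencies off of \texttt{LB42} (Fig.~\ref{fig:lb42}), exactly as in the footnote to Cases~1 and~2 of Lemma~\ref{thm:lb42}: its third and fourth columns contain a pair of vertically adjacent Black cells, a pair of horizontally adjacent Black cells, a pair of vertically adjacent White cells, a pair of horizontally adjacent White cells, and a vertical Black--White adjacency, all in non-seed positions. Because $B$ is the only Black type, the vertically adjacent Black cells force $B$'s north glue to equal its south glue (call the common value $v_B$), and the horizontally adjacent Black cells force $B$'s east glue to equal its west glue (call it $h_B$); symmetrically $W = (v_W, v_W, h_W, h_W)$. The Black--White vertical adjacency forces $v_B = v_W =: v$. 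Finally, by uniqueness $B$ and $W$ cannot share a $(\mathrm{south},\mathrm{west})$ pair, and their south glues are both $v$, so $h_B \neq h_W$. Thus $B = (v,v,h_B,h_B)$ and $W = (v,v,h_W,h_W)$ with $h_B \neq h_W$, which up to renaming $v$, $h_B$, $h_W$ (Black receiving the ``$0$'' horizontal glue and White the ``$1$'' horizontal glue) is precisely the glue scheme of the Black and White tile types in Fig.~\ref{fig:tileset1}; hence we may henceforth assume $B$ and $W$ carry exactly those glues, or else no solution of size $\le 46$ exists.

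The only real work, and the step I expect to be the main obstacle, is the figure inspection in the second paragraph: one must check that \texttt{LB42} genuinely lays out its Black and White sub-patterns so that all four ``self-stacking'' adjacencies and the one cross adjacency occur among cells that are actually adjacent and correctly colored, and that the tiles the seed and the neighbouring columns force into those cells really are $B$ and $W$ rather than tiles of some other color that would happen to be consistent there. This is routine but is where all the content lives; once it is verified, the proposition follows from a single application of uniqueness, as above. $\bullet$
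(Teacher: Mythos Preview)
Your proposal is correct and follows essentially the same approach as the paper: reduce to a unique Black type and a unique White type via the Section~4 counting, then read off the self-stacking and cross-stacking adjacencies from \texttt{LB42} (the paper does this in one paragraph, citing the same columns you do), and finish with a single application of uniqueness to separate the two horizontal glues. The only cosmetic difference is that the paper derives $W$'s equal north and south glues from the observation that White sits both above and below Black in \texttt{LB42}, whereas you use a direct White--White vertical adjacency together with one Black--White adjacency; both are present in the sub-pattern and yield the same conclusion.
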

\begin{proof}
We assume that the black box can find a solution using at most 46 tile types.  Then there must be exactly 1 type of each color Black and White, as we exhaust 46 tile types to satisfy all lower bounds established in Section 4.

Now we see in \texttt{LB42} that \texttt{BLACK} self-stacks both vertically and horizontally, therefore we know that it has the same north and south glues, call it \#, and the same east and west glues, call it 0h.\footnote{We use ``h" for horizontal; and soon ``v" for vertical.}  \texttt{WHITE} stacks both above and below \texttt{BLACK}, so its vertical glues are the same.

\texttt{WHITE} also self-stacks horizontally, so it also has its east glue equal to its west glue, call it 1h.  However by uniqueness, the glues 1h and 0h must be distinct, because of the common \# south glues of the tile types.  $\bullet$
\end{proof}

\begin{prop}
\label{thm:blueglues}
If \texttt{LB1} through \texttt{LB118} appear as sub-patterns of \texttt{PATTERN},
then in any minimum tile type set solving \texttt{PATTERN}, either the glues of the Carry Blue tile types are as we suggest in $T$; or the black box must find a minimum tile set larger than 46.
\end{prop}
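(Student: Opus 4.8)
The plan is to pin down the four Carry Blue glues in three stages: count the types, fix their horizontal glues from the already understood Black and White gadgets, and then fix their vertical glues by tracing the half-subtractor behaviour inside \texttt{LB42}. First, assume the black box returns a minimum tile type set $T'$ solving \texttt{PATTERN} with at most $46$ types; since \texttt{LB1} through \texttt{LB118} are present, Proposition~\ref{thm:section4main} forces $|T'|=46$ and every per-color lower bound to be met with equality, so by Proposition~\ref{thm:lb43prop} the set $T'$ has \emph{exactly} four tile types colored Carry Blue. By Proposition~\ref{thm:bwglues} I may also assume, up to renaming, that the unique Black type is $(n,s,e,w)=(\#,\#,0h,0h)$ and the unique White type is $(\#,\#,1h,1h)$ with $0h\neq 1h$; these are the only colors ever horizontally adjacent to a Carry Blue cell inside \texttt{LB42}.

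Second, the horizontal glues. Inside \texttt{LB42} the Carry cells used in the proof of Lemma~\ref{thm:lb42} (its Cases~1 and~2) are each flanked by a Black/White tile on the west and by the Black/White column that bounds the Carry column on the east, so every Carry type occurring there has both its west glue and its east glue in $\{0h,1h\}$. Those four cells demand four pairwise-distinct $(\text{east},\text{west})$ glue pairs, and only four Carry types exist, so the four types realise exactly the pairs $(0h,0h),(1h,1h),(1h,0h),(0h,1h)$; write $C_{00},C_{11},C_{10},C_{01}$ for them, the subscript recording $(w,e)$. Since $C_{00},C_{10}$ share west glue $0h$ and $C_{01},C_{11}$ share west glue $1h$, uniqueness forces $C_{00}$ and $C_{10}$ to have distinct south glues, and likewise $C_{01}$ and $C_{11}$; call the south glue of $C_{00}$ $0v$ and that of $C_{10}$ $1v$.

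Third, the vertical glues, obtained by walking up the Carry columns of \texttt{LB42}, which subtract $0$ or $1$ from the Black/White number sitting in the column immediately to their left (see Fig.~\ref{fig:lb42}). A column subtracting $0$ passes that number upward unchanged with zero borrow, so only $C_{00}$ and $C_{11}$ occur in it; across the number patterns realised in \texttt{LB42} these two must each self-stack and must stack on one another in both vertical orders, and chasing the forced north/south equalities collapses all of those glues to the single glue $0v$ --- in particular $C_{11}=(0v,0v,1h,1h)$, consistent with uniqueness against $C_{10}$. A column subtracting $1$ whose number has least-significant bit $0$ places $C_{01}$ at the bottom, self-stacks $C_{01}$ through a run of $0$-bits, and at the first $1$-bit is forced to place $C_{10}$ directly above a $C_{01}$; this forces the south glue of $C_{10}$ to equal the (self-stacked) south glue of $C_{01}$, so both equal $1v$ and $C_{01}=(1v,1v,1h,0h)$, while the north glue of $C_{10}$ must equal the $0v$ south glue of the tile (necessarily $C_{00}$ or $C_{11}$) placed above it, so $C_{10}=(0v,1v,0h,1h)$. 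The four tuples $(0v,0v,0h,0h)$, $(0v,0v,1h,1h)$, $(1v,1v,1h,0h)$, $(0v,1v,0h,1h)$ are exactly the Carry Blue scheme of Fig.~\ref{fig:tileset1} up to renaming; whenever one of the forced equalities fails, the same reasoning instead exhibits an uncoverable cell, so the black box would have to output more than $46$ types, which is the alternative in the statement.

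The step I expect to be the main obstacle is this last, vertical one: one must verify that the column layouts actually packed into \texttt{LB42} really do realise every stacking adjacency invoked above ($C_{00}$ on $C_{00}$, $C_{11}$ on $C_{11}$, $C_{11}$ on $C_{00}$, and $C_{10}$ on $C_{01}$), since these adjacencies are precisely what forbids $C_{01}$ or $C_{10}$ from smuggling in a spurious third vertical glue; and one must keep the glue bookkeeping honest, so that the names $0h,1h,0v,1v$ constitute a genuine isomorphism onto the labels of Fig.~\ref{fig:tileset1} rather than an accidental agreement of a handful of glues.
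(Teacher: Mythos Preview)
Your approach is essentially the paper's: use the four forced $(\text{east},\text{west})$ pairs to identify the four Carry types, then read off the vertical glues from how those identified types stack in the Carry columns of \texttt{LB42}.  The paper carries out the third step by pointing to concrete indices --- column~2 gives $C_{00}$ self-stacked under $C_{11}$ self-stacked, fixing one vertical glue $0vc$; column~7 gives $C_{01}$ self-stacked under $C_{10}$ under a $0vc$-south tile, fixing $1vc$ and the north of $C_{10}$ --- whereas you wrap the same adjacencies in half-subtractor language.  That is fine, and you correctly flag that the adjacencies must actually be checked against \texttt{LB42}; once they are, the two arguments coincide.

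One bookkeeping slip: with your subscript convention ``$(w,e)$'', the pairs sharing west glue $0h$ are $C_{00}$ and $C_{01}$ (not $C_{00}$ and $C_{10}$), and those sharing $1h$ are $C_{10}$ and $C_{11}$.  Uniqueness therefore separates the south glues within \emph{those} pairs; the distinctness $0v\neq 1v$ that you need at the end is really the statement $C_{00}.s\neq C_{01}.s$, which follows from this corrected pairing (both have west $0h$) rather than from the one you wrote.  The final tuples you obtain are unaffected, but the justification for $0v\neq 1v$ should go through $C_{00}$ versus $C_{01}$.
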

\begin{proof}
We assume that the black box can find a solution using at most 46 tile types.  Then there must be exactly 4 tile types of color Blue, as we exhaust 46 tile types to satisfy all lower bounds established in the last section.

Now using Proposition \ref{thm:bwglues} we see in \texttt{LB42} that four Blue tile types necessarily have (east,west) glue pairs known to be distinct.  For example, tiles placed at indexes $(2,2), (7,2), (2,3),$ and $(7,3)$ require their glues to be (0h,0h), (0h,1h), (1h,1h), and (1h,0h) respectively.  This exhausts the Blue types available.  Name the distinct types that end up at these indexes \texttt{CA1} through \texttt{CA4} in order.

We notice that \texttt{CA1} stacks with itself at $(2,1)$ and $(2,2)$ (because we have shown that Carry tile types are identifiable by their west-east glue pairs), and \texttt{CA3} stacks with itself at $(2,3)$ and $(2,4)$, and they stack both above and below each other.  Therefore the north and south glues of both types must all be equal, call it 0vc.\footnote{Along with ``v" for vertical, we use ``c" for carry; and soon ``p" for passthrough.}  Note that this glue 0vc must also be present between $(7,3)$ and $(7,4)$, so it is also the north glue of \texttt{CA4}.

Now we look at \texttt{CA2} and notice that it stacks with itself at $(7,1)$ and $(7,2)$.  Therefore its north and south glues must be equal; and additionally they must be the same as the south glue of \texttt{CA4}.  However, they cannot be 0vc, otherwise the \texttt{CA2} tile attaching at $(7,2)$ would not do so uniquely (it would clash with \texttt{CA1} at $(3,2)$).  So this second group of glue assignments must be distinct, call it 1vc.  Neither 0vc nor 1vc can be the same as \#, by uniqueness.  $\bullet$
\end{proof}
Propositions~\ref{thm:bwglues} and~\ref{thm:blueglues} have established that all tile types of colors Black, White, and Carry Blue must have glue names matching those we assign in $T$ (gnoring isomorphisms) for all tile sets able to self-assemble \texttt{CIRCUIT}.  The next step is to obtain the same result for el-Black and el-White.
\begin{prop}
\label{thm:elglues}
If \texttt{LB1} through \texttt{LB118} appear as sub-patterns of \texttt{PATTERN},
then in any minimum tile type set solving \texttt{PATTERN}, either the glues of the el-Black and el-White tile types are as we suggest in $T$; or the black box must find a minimum tile type set larger than 46.
\end{prop}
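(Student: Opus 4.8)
The plan is to mirror the structure of Propositions~\ref{thm:bwglues} and~\ref{thm:blueglues}. Assume the black box returns a tile type set $M$ with $|M|\le 46$. Then by Proposition~\ref{thm:section4main} every color meets its lower bound with equality, so $M$ has exactly two tile types of color el-Black and exactly two of color el-White; moreover the glues of the unique Black and White types and of the four Carry Blue types are already fixed up to isomorphism to the names of Fig.~\ref{fig:tileset1} by Propositions~\ref{thm:bwglues} and~\ref{thm:blueglues}. By the left--right symmetry between the Black-flavored and White-flavored families it suffices to pin down the two el-Black types; the el-White argument is identical.

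First I would use \texttt{LB1}, together with \texttt{CIRCUIT} (both sub-patterns of \texttt{PATTERN}), to locate the el-Black tiles in configurations that anchor their glues against already-enforced ones. In \texttt{LB1} the two el-Black types must fill a height-$21$ column; in \texttt{CIRCUIT} they lie in the region below the right-arrow diagonal of an element block, preserving a difference bit horizontally while the subtraction on/off signal is routed through the neighboring tiles. In both places an el-Black tile is, at some index, vertically adjacent to a tile whose glues are already known (Black, White, or Carry Blue) and horizontally adjacent either to such a tile or to another el-Black tile; and by Proposition~\ref{thm:lb1} the two el-Black types have distinct east glues. Feeding these adjacencies into uniqueness --- two types that share a (south, west) pair clash, so whenever they share a west glue they must split on the south glue --- forces the exact stacking and routing behavior of the two types and thereby names all four glues of each, matching Fig.~\ref{fig:tileset1}.

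A final uniqueness sweep checks that every glue just named is distinct from each previously named glue ($\#$, $0\mathrm{h}$, $1\mathrm{h}$, $0\mathrm{vc}$, $1\mathrm{vc}$): for each potential coincidence one exhibits an index in \texttt{LB1} or \texttt{CIRCUIT} at which an el-Black tile is adjacent to a tile carrying the older glue and shares with it the critical (south) glue, so that equality of the two glues would produce a clash.

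The step I expect to be the main obstacle is assembling this list of witnessing indices while avoiding circularity. The glues of the right-arrow and up-arrow tiles are only enforced by the propositions that come after this one, so the argument must rely solely on the facts in hand --- the fixed seed glues, the Black/White/Carry glue names, and the now-exhausted count of free tile types --- and must not assume anything about the arrow tiles. Together with the small case analysis on how the two el-Black types can stack in a height-$21$ column, this bookkeeping is the bulk of the work; the rest is routine.
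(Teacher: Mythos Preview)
Your outline is on the right track but substantially overcomplicates what the paper does. The paper's proof uses \texttt{LB1} alone and is only a few lines; \texttt{CIRCUIT} is never invoked, so your circularity worry about the arrow tiles simply does not arise.

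The simplification you are missing is that in \texttt{LB1} the el-Black column is flanked on both sides by columns consisting solely of Black and White tiles, whose horizontal glues are already fixed as $0\mathrm{h}$ and $1\mathrm{h}$ by Proposition~\ref{thm:bwglues}. Hence the (east,\,west) pair of each el-Black tile placed there is read off directly from its neighbors: the rows with Black on both sides force an el-Black type with (east,\,west) $=(0\mathrm{h},0\mathrm{h})$, and the rows with White on both sides force one with $(1\mathrm{h},1\mathrm{h})$. That already exhausts the two available el-Black types and names their horizontal glues --- no ``case analysis on how the two el-Black types can stack in a height-21 column'' is needed. For the vertical glues, \texttt{LB1} exhibits one el-Black type self-stacking and the other type stacking both above and below it, so all four vertical glues coincide; call that glue $0\mathrm{vp}$. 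The distinctness of $0\mathrm{vp}$ (and symmetrically $1\mathrm{vp}$) from $\#$, $0\mathrm{vc}$, $1\mathrm{vc}$ is then a one-line uniqueness check.

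Your plan to mine \texttt{CIRCUIT} for witnessing indices is not just unnecessary but unlikely to succeed cleanly: in \texttt{CIRCUIT} the el-Black tiles sit below the right-arrow diagonal and are bordered horizontally by other el-tiles or by arrow tiles, not by Black/White, so the adjacencies there do not anchor to already-named glues in the way you need. Drop \texttt{CIRCUIT} and the stacking case split, and the argument collapses to the paper's short direct proof.
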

\begin{proof}
We assume that the black box can find a solution using at most 46 tile types.  Then there must be exactly 2 tile types of each color el-Black and el-White, as we need 46 types to satisfy all lower bounds established by Section 4.  We name the types as in $T$.  We only prove the el-Black case, the el-White case is symmetric.

Now we see in \texttt{LB1} (Fig.~\ref{fig:lb1}, page~\pageref{fig:lb1}) that the two \texttt{el-BLACK} types necessarily have known (east,west) glue pairs equal to (0h,0h) and (1h,1h) from the known horizontal glues of Black and White neighbors.  Further, \texttt{el-BLACK1} tiles are placed both above and below a tile colored \texttt{el-BLACK0}, which self-stacks vertically, so all of their vertical glues must be the same, call it 0vp.

In the symmetrical case, call the vertical glue of the el-White types 1vp.  0vp and 1vp must be distinct from \#, 0vc, and 1vc, by uniqueness.  $\bullet$
\end{proof}
The last goal is to prove that the glue schemes of the RA and UA tiles must match the names in $T$.  The next two results make it much more straightforward.  Now that we have Proposition~\ref{thm:elglues} proving the glues for the el-Black and el-White tile types, we can prove the following Lemma describing their behavior as horizontal ``pass-throughs."  
\begin{lem}
\label{thm:ellipsis}
If \texttt{LB1} through \texttt{LB118} appear as sub-patterns of \texttt{PATTERN},
then in any minimum tile type set solving \texttt{PATTERN}, either it is true that for any number $k\geq1$ of tiles colored el-Black and el-White placed consecutively in a horizontal row, the east glue of the eastern most tile and the west glue of the western most tile must be the same; or the black box must find a minimum tile type set larger than 46.
\end{lem}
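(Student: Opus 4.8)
The plan is to reduce the statement immediately to the glue structure already pinned down in Proposition~\ref{thm:elglues}. First I would adopt the left branch of the dichotomy and assume the black box returns a tile type set of size at most $46$. Under this hypothesis the color-wise lower bounds of Section~4 are all tight, so there are exactly two tile types of color el-Black and exactly two of color el-White, and Proposition~\ref{thm:elglues} applies to fix their glues (up to isomorphism) as in Fig.~\ref{fig:tileset1}. The one feature I need to extract from that glue scheme is that each of these four types is a horizontal pass-through: its east glue equals its west glue, and this common horizontal glue is one of the two glues 0h, 1h shared with the Black and White types. If instead the black box's output has size larger than $46$, the lemma's conclusion holds vacuously by its second disjunct, so nothing is required.

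Next I would run the trivial induction along the horizontal run. Let the $k\ge 1$ consecutive tiles colored el-Black or el-White be $t_1,\dots,t_k$ from west to east, and write $w_m$ and $e_m$ for the west and east glues of $t_m$. By the previous paragraph $w_m=e_m$ for every $m$. The definition of a (rectangular) assembly forces the glues on each shared vertical edge to agree, i.e. $e_m=w_{m+1}$ for $1\le m<k$. Chaining these equalities gives $w_1=e_1=w_2=e_2=\cdots=w_k=e_k$; in particular the east glue $e_k$ of the easternmost tile equals the west glue $w_1$ of the westernmost tile, which is exactly the claim. The base case $k=1$ is just the single identity $w_1=e_1$ already recorded.

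I do not expect a genuine obstacle here: essentially all of the content is front-loaded into Proposition~\ref{thm:elglues}, and what remains is a one-line induction on the run length. The only points that need a little care are to keep the size-$\le 46$ hypothesis in force throughout (so that el-Black and el-White really contribute only the four identified types, and the run cannot contain some stray type whose east and west glues differ), and to note that the horizontal glue alphabet of those types is literally the two-element set $\{0h,1h\}$, so that ``the same glue on both sides of each tile'' is an honest equality of glue names rather than merely equality up to the tile's internal behavior. Both facts are immediate from Section~4 and Proposition~\ref{thm:elglues}.
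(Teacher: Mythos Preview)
Your proposal is correct and follows essentially the same approach as the paper: assume a solution of size at most $46$, invoke Proposition~\ref{thm:elglues} to conclude that every el-Black and el-White type has equal east and west glues, and then chain these equalities along the run by a trivial induction. The paper's proof is just a terser version of exactly this argument.
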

\begin{proof}
We assume that the black box can find a solution using at most 46 tile types.  Then by Proposition~\ref{thm:elglues}, the tiles colored el-Black and el-White are correctly described in $T$.  A quick check shows us that every tile of these colors ``preserves" glues horizontally, i.e., has the same east and west glues.  A trivial induction gives the result.  $\bullet$
\end{proof}
We can now claim that horizontal glues are the same on both sides of an ellipsis in a sub-pattern if it is implied that all hidden tiles are el-Black or el-White.  Proving the glues of the Green tiles is the next supporting step.
\begin{prop}
\label{thm:greenglues}
If \texttt{LB1} through \texttt{LB118} appear as sub-patterns of \texttt{PATTERN},
then in any minimum tile type set solving \texttt{PATTERN}, either the glues of the tile types of Green colors G1, G2, G3, G4 are as we suggest in $T$; or the black box must find a minimum tile type set larger than 46.
\end{prop}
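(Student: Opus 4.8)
The plan is as follows. By Proposition~\ref{thm:section4main} together with Propositions~\ref{thm:bwglues}, \ref{thm:blueglues}, and~\ref{thm:elglues}, once we assume the black box returns a set of size at most $46$, every lower bound of Section~4 is tight; in particular there is exactly one tile type of each Green color G1, G2, G3, G4, and every Black, White, Carry Blue, el-Black, and el-White type already carries the glue names of Fig.~\ref{fig:tileset1} (up to isomorphism). I would then pin the Green glues down one edge at a time, using either the existing sub-patterns of Fig.~\ref{fig:lb6} and Fig.~\ref{fig:lb47} or a short family of fresh \texttt{GE} sub-patterns, chosen so that each Green tile is sandwiched, both horizontally and vertically, between tiles of colors whose glues are already settled.

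First I would fix the horizontal glues. In the appropriate sub-pattern each G$i$ tile sits in a row whose remaining entries on both sides are el-Black/el-White tiles, each such run terminating in a Black or White tile; Lemma~\ref{thm:ellipsis} then forces the west glue of the G$i$ tile to equal the known east glue bounding that run on the left, and its east glue to equal the known west glue bounding the run on the right, so the (east, west) pair of each G$i$ tile is expressed in the already-used names 0h, 1h. Next I would fix the vertical glues: in the same (or a companion) sub-pattern each G$i$ tile self-stacks over several consecutive rows, forcing its north and south glues equal, while sitting directly above a tile of an already-determined color, which pins that common vertical glue; wherever a genuinely new name is required, uniqueness against all glue names introduced so far (\#, 0vc, 1vc, 0vp, 1vp, 0h, 1h) shows it must be fresh, exactly as $T$ has it. Finally, uniqueness across the four Green types completes the bookkeeping: because the four G$i$ tiles share a common neighbor on one edge (hence a common glue there) while being distinct types, the remaining shared coordinate must take four pairwise-distinct values, and matching these against the four names of Fig.~\ref{fig:tileset1} finishes the argument; if any forced equality is incompatible with uniqueness, then no $46$-type set realizes the sub-patterns, which is the second horn of the statement.

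I expect the main obstacle to be the \emph{insulation} of the Green tiles. In the sub-patterns used so far, the neighbors of a Green tile include RAB/RAW/UAB and Purple tiles, whose glue names are not yet available (the RA/UA proofs come afterwards, and indeed this proposition is meant to ease them). So the real content is to verify that, among the many copies of each LB sub-pattern (one per value of the variable-Yellow color) or in newly designed \texttt{GE} sub-patterns, one can always exhibit a witness in which every edge of every Green tile abuts --- possibly across an ellipsis of el-tiles handled by Lemma~\ref{thm:ellipsis} --- a tile of a color from the already-settled list (Black, White, Carry Blue, el-Black, el-White). Producing this insulation, and then checking that the resulting system of glue equalities is consistent under uniqueness, is the crux; the rest is the same self-stacking and pass-through bookkeeping already used in Propositions~\ref{thm:bwglues}--\ref{thm:elglues}.
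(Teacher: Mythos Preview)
Your plan has a concrete error in the horizontal step. You claim that each G$i$ tile is flanked on both sides by el-Black/el-White runs terminating in Black or White, and hence via Lemma~\ref{thm:ellipsis} its east and west glues lie in $\{0h,1h\}$. But look at Fig.~\ref{fig:tileset1}: the Green types carry \emph{fresh} horizontal glues ($0{*},0x,1{*},1x$), not $0h$ or $1h$; and in the sub-patterns of Figs.~\ref{fig:lb6} and~\ref{fig:lb47} the horizontal neighbors of a Green tile are RAB on the left and UAB on the right, not el-tiles. So the pass-through argument is simply unavailable here, and the conclusion you draw (Green horizontal glues $\in\{0h,1h\}$) is false. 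Your own final paragraph already flags the ``insulation'' worry, and this is exactly where it bites.

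The paper sidesteps the missing insulation differently on each axis. For the vertical glues it takes the copies of \texttt{LB6}/\texttt{LB15} in which the variable-Yellow color is Black itself: there each G$i$ tile sits directly above and below the unique Black type, forcing north and south glues equal to~$\#$ without ever touching RA/UA or Purple neighbors. For the horizontal glues it does \emph{not} try to read them off a settled neighbor at all; instead it uses the self-stacking of each G$i$ in \texttt{LB47}/\texttt{LB65} to get east $=$ west, then invokes the argument already carried out inside the proof of Proposition~\ref{thm:lb47ua} (the four Green tiles at a common-south-glue position must have pairwise distinct west glues) to conclude the four horizontal glues are distinct, and finally uses uniqueness against Black and White (all now sharing south glue~$\#$) to see they are also distinct from $0h$ and $1h$. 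That is what licenses introducing four new names, matching~$T$. If you rework your horizontal step along these lines, the rest of your outline is fine.
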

\begin{proof}
We assume that the black box can find a solution using at most 46 tile types.  Then there must be exactly 1 type of each of the four Green colors G1, G2, G3, G4, as we need 46 types to satisfy all lower bounds.  We name the types as in $T$.

Consider \texttt{LB6} and \texttt{LB15} in Fig.~\ref{fig:lb6}.  Because Black is one of the variable-Yellow colors, we can see that the unique tile type of each Green color stacks both above and below Black's unique tile type and its \# north and south vertical glues.  Therefore every Green type must have north and south glue \#.

Then in \texttt{LB47} and \texttt{LB65} in Fig.~\ref{fig:lb47}, we can see that each of the four types stacks horizontally with itself, so each has east glue equal to west glue.  However, we also know from Proposition 14 that their west glues are all distinct.  Uniqueness also requires that they are all distinct from 0h and 1h, the only other two horizontal glues we have identified so far.  Therefore we can give to them the distinct names as in $T$.  $\bullet$
\end{proof}
With Lemma~\ref{thm:ellipsis} and Proposition~\ref{thm:greenglues} in place, we can get one more result using the \texttt{LB\#} patterns before we need new sub-patterns to finish.  Now Proposition~\ref{thm:threearrowglues} should be intuitive here because we can look at the RA and UA tiles in the second row of the necessary sub-patterns and see that recent results tell us the exact tiles placed at three of the neighboring indexes.
\begin{figure}[tb]
\begin{center}
\includegraphics[width=420pt]{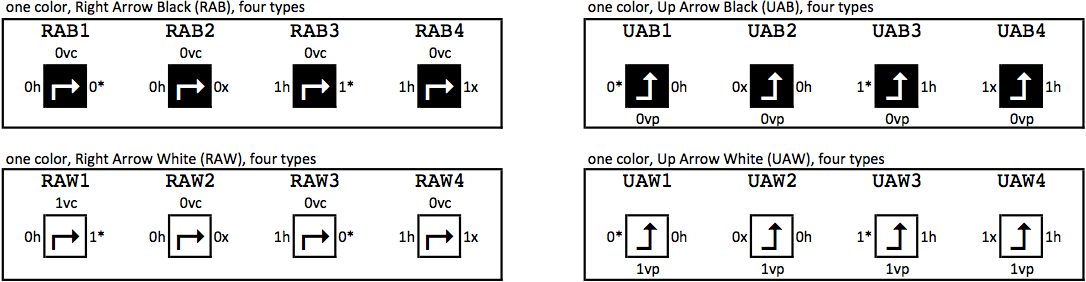}
\end{center}
\caption{After Proposition~\ref{thm:threearrowglues}, the glue names appearing above are necessary for tile type sets $T$ of size 46 that solve our reduction \texttt{PATTERN}.}
\label{fig:recaparrows}
\end{figure}
\begin{prop}
\label{thm:threearrowglues}
If \texttt{LB1} through \texttt{LB118} appear as sub-patterns of \texttt{PATTERN},
then in any minimum tile type set solving \texttt{PATTERN}, either the (south,east,west) glues of all UAB and UAW tile types, and the (north,east,west) glues of all RAB and RAW tile types, are as we suggest in $T$; or the black box must find a minimum tile type set larger than 46.
\end{prop}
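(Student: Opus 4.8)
The plan is to bootstrap off the glue identifications already in hand. By Propositions~\ref{thm:bwglues}, \ref{thm:blueglues}, \ref{thm:elglues} and~\ref{thm:greenglues}, every tile type colored Black, White, Carry Blue, el-Black, el-White, or one of G1--G4 has all four of its glues fixed to the names used in $T$ (up to the global isomorphism built so far); and since the lower-bound proofs of Section~4 are now tight, there are exactly $4$ types each of RAB, RAW, UAB and UAW, with the RA types having pairwise distinct east glues (Proposition~\ref{thm:lb43}) and the UA types pairwise distinct west glues (Proposition~\ref{thm:lb47ua}). The key observation, as the remark preceding the statement indicates, is that in the sub-patterns \texttt{LB6}--\texttt{LB118} each arrow tile occurs, in at least one copy, in a row-$2$ position whose west, south and north neighbors are all ``structural'' tiles whose relevant glues have just been pinned; reading those neighbors off---using Lemma~\ref{thm:ellipsis} to transport a known horizontal glue across any intervening run of el tiles---determines three of the four glues of the arrow tile, and Fig.~\ref{fig:recaparrows} records the target scheme to check against.

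First I would treat the Up-Arrow types. In the column encoding reproduced inside the LB sub-patterns a UAB tile sits directly above an el-Black tile (possibly through a stack of UAB tiles, which then forces its north glue $=$ south glue), so its south glue must equal the fixed vertical glue 0vp of el-Black; by symmetry UAW has south glue 1vp. The west glues are already nailed down: Proposition~\ref{thm:lb47ua} identifies the four UAB west glues with the four east glues of G1--G4, and by Proposition~\ref{thm:greenglues} each Green type self-matches horizontally with the glue name $T$ assigns, so the UAB west glues are precisely those four distinct names. The east glue of each UAB type then comes either from a sub-pattern in which the tile is horizontally self-matched (forcing its east glue $=$ the known west glue) or from an adjacent Black, White or el tile of known horizontal glue, and in each case it is the value $T$ uses for the bit being passed through. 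Repeating for the four UAW types with the sub-patterns \texttt{LB83}--\texttt{LB118} and the glue names not yet used completes the $(s,e,w)$-assignment for all eight Up-Arrow types.

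Next the Right-Arrow types. An RAB tile in the relevant sub-patterns has a Carry Blue tile immediately above it --- recall every index above an RAB square is required to be Carry Blue --- so its north glue is forced to one of the two fixed carry glues 0vc, 1vc; it abuts a Green tile on one side, the very configuration exploited in Lemma~\ref{thm:lb6}, which together with Proposition~\ref{thm:greenglues} pins the corresponding horizontal glue of the RAB tile to one of the four Green glue names, while a structural tile of known horizontal glue on the other side pins the remaining horizontal glue. Since Proposition~\ref{thm:lb43} already forces the four RAB types to have pairwise distinct east glues, and uniqueness forbids a collision with 0h, 1h or a Green name, the four RAB types land on exactly the four $(n,e,w)$-combinations $T$ lists, up to the global isomorphism. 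The RAW case is identical with \texttt{LB24}--\texttt{LB41} and the remaining copies of \texttt{LB47}--\texttt{LB118}, using the glue names still available.

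The step I expect to be the real work is combinatorial bookkeeping rather than anything conceptual: one must check that for \emph{every} one of the $16$ arrow types there genuinely is a copy of some sub-pattern placing it next to the right triple of already-pinned structural tiles, so that \emph{all} of its $(s,e,w)$, respectively $(n,e,w)$, glues get determined rather than only some of them; and then verify that the distinctness facts of Propositions~\ref{thm:lb43} and~\ref{thm:lb47ua}, together with uniqueness and the glue names already committed to for Black, White, Carry Blue, el-Black, el-White and Green, leave no freedom beyond the intended isomorphism. A secondary care point is the systematic application of Lemma~\ref{thm:ellipsis}, since in several sub-patterns the structural tile whose glue identifies an arrow tile's east or west glue is separated from it by an ellipsis of el-Black/el-White tiles.
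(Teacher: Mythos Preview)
Your approach is essentially the paper's: read off the three glues of each arrow type from three already-pinned neighbors in the \texttt{LB47}--\texttt{LB118} family, using Lemma~\ref{thm:ellipsis} to carry horizontal glues across el-runs. The paper executes this more concretely than your sketch: it works entirely inside \texttt{LB47} (and \texttt{LB83} for the W variants), pointing to the UAB tile at $(11,2)$ and the RAB tile at $(4,2)$, and simply reads the three glues off the el-Black tile below, the G1 tile adjacent, and the el-run or Carry tile on the remaining side. Two small tightenings you should make: the ``horizontally self-matched'' option for the UAB east glue never arises (UAB east and west glues differ in $T$), so drop it and rely only on the el-run via Lemma~\ref{thm:ellipsis}; and for the RAB north glue, ``one of 0vc, 1vc'' is not enough---you need that once the RAB east and west glues are fixed, the Carry tile above is identified by its $(e,w)$ pair (Proposition~\ref{thm:blueglues}), which pins the specific carry glue.
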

\begin{proof}
We assume that the black box can find a solution using at most 46 tile types.  Then there must be exactly 4 tile types of each of the colors UAB, UAW, RAB, and RAW, as we need 46 types to satisfy all lower bounds.  We name the tiles as in $T$.  We only prove the cases for the UAB and RAB tiles, the others are symmetric in patterns \texttt{LB83} through \texttt{LB118} in Fig.~\ref{fig:lb83} and Fig.~\ref{fig:lb101}.

Consider the UAB tile at $(11,2)$ in \texttt{LB47} in Fig.~\ref{fig:lb47} on page~\pageref{fig:lb47}.\footnote{The corresponding index in \texttt{LB83} will be $(9,2)$ for the UAW tiles.}  Its east glue must be 0h by Lemma~\ref{thm:ellipsis}.  To its left is the unique G1 tile type with known east glue 0*, which becomes the west glue of this UAB tile.  Both el-Black types have the same 0vp north glue, so the UAB south glue is 0vp, regardless of what el-Black tile is placed at $(11,1)$.  The same direct analysis lets us solve for the (south,east,west) glues of the three other UAB types in the other three sub-patterns with their respective, distinct Green tiles at $(10,2)$.

Now again in \texttt{LB47}, consider the RAB type at $(4,2)$.  Its west glue must be 0h.  From its east G1 neighbor, its east glue is 0*.  And because Carry Blue types can be determined by their (east,west) glue pairs, we know that this RAB tile's northern neighbor at $(4,3)$ is of type \texttt{Ca1} with its 0vc south glue.  The same direct analysis lets us solve the (north,east,west) glues of the three other RAB types in the other three patterns with their respective, distinct Green tiles at $(4,2)$.  $\bullet$
\end{proof}

Fig.~\ref{fig:recaparrows} gives a summary of the glue names just confirmed by Proposition~\ref{thm:threearrowglues}.  We need 12 more sub-patterns to prove the glues missing in Fig.~\ref{fig:recaparrows}.

\texttt{GE1} through \texttt{GE6} presented in Fig.~\ref{fig:ge1} will be used to force the last glue names of the RAB and UAB tile types.  And symmetrically once again, \texttt{GE7} through \texttt{GE12} in Fig.~\ref{fig:ge7} will serve the same purpose for RAW and UAW.\\
It should be understood  that \texttt{GE4} is then a copy of \texttt{GE1}, except that it has G1 replaced by G2, and G3 replaced by G4.  Similarly, \texttt{GE5} copies \texttt{GE2}.
\begin{figure}[tb]
\begin{center}
\includegraphics[width=350pt]{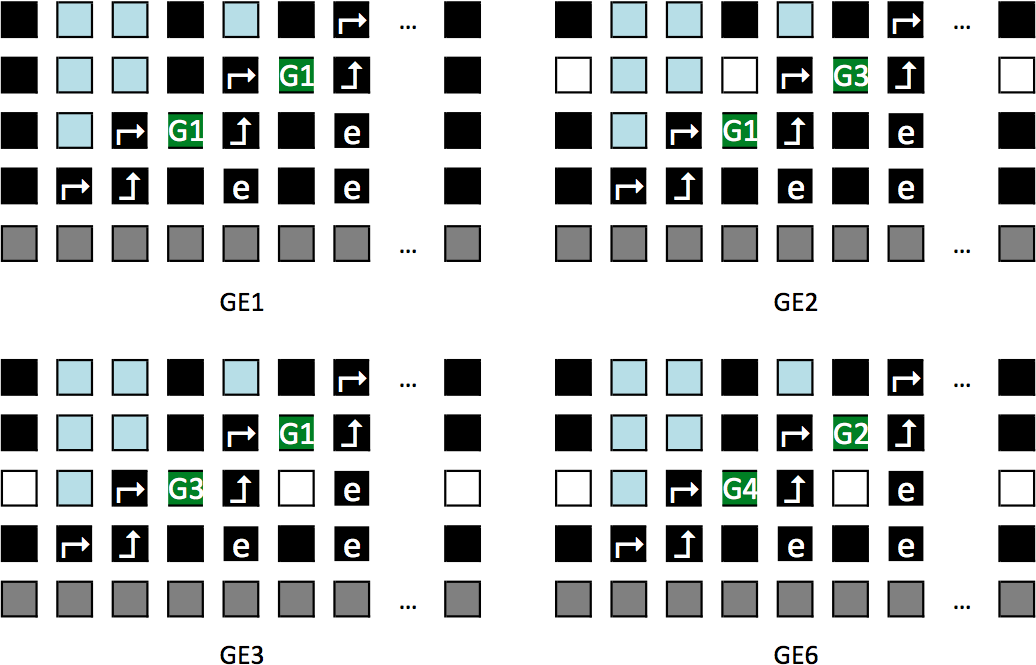}
\end{center}
\caption{Green colors G1 and G3 are associated with subtraction ON signal.  In, sub-patterns \texttt{GE1}, \texttt{GE2} and \texttt{GE3}, we can identify RAB and UAB types by their neighbors, and force the vertical glue name 0* as needed.  Then we use three more patterns replacing G1 and G3 with G2 and G4 respectively to force 0x (with \texttt{GE6} a ``copy" of \texttt{GE3}).}
\label{fig:ge1}
\end{figure}
\begin{prop}
\label{thm:lastarrowglues}
If \texttt{GE1} through \texttt{GE12} appear as sub-patterns of \texttt{PATTERN}, along with \texttt{LB1} through \texttt{LB118},
then in any minimum tile type set solving \texttt{PATTERN}, either
\begin{itemize}
\item the south glues of \texttt{RAB1} and \texttt{RAB3} and the north glues of \texttt{UAB1} and \texttt{UAB3} are all the same, and
\item the south glues of \texttt{RAB2} and \texttt{RAB4} and the north glues of \texttt{UAB2} and \texttt{UAB4} are all the same, and
\item the south glues of \texttt{RAW1} and \texttt{RAW3} and the north glues of \texttt{UAW1} and \texttt{UAW3} are all the same, and
\item the south glues of \texttt{RAW2} and \texttt{RAW4} and the north glues of \texttt{UAW2} and \texttt{UAW4} are all the same,
\end{itemize}
as we suggest in $T$; or the black box must find a minimum tile type set larger than 46.
\end{prop}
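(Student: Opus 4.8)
The plan is to imitate the chain of glue-enforcement arguments in Propositions~\ref{thm:bwglues}--\ref{thm:threearrowglues}. Assume the black box returns a set $T'$ with $|T'|\le 46$. Then the per-colour lower bounds established in Section~4 are all tight, so $T'$ contains exactly four tile types of each of the colours \texttt{RAB}, \texttt{RAW}, \texttt{UAB}, \texttt{UAW}, which we name \texttt{RAB1} through \texttt{RAW4} as in $T$. By Proposition~\ref{thm:threearrowglues} the (north, east, west) glues of every \texttt{RAB} type and the (south, east, west) glues of every \texttt{UAB} type are already pinned down (up to isomorphism), the four \texttt{RAB} types being distinguished by their east glues; matching these east glues against the already-fixed west glues of \texttt{G1} through \texttt{G4} from Proposition~\ref{thm:greenglues} identifies \texttt{RAB1},\texttt{RAB3} (and \texttt{UAB1},\texttt{UAB3}) as the types carrying the subtraction-ON (``$*$'') glues and \texttt{RAB2},\texttt{RAB4} (and \texttt{UAB2},\texttt{UAB4}) as those carrying the subtraction-OFF (``$x$'') glues. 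Hence for these eight types the only glues still free are the south glue of each \texttt{RAB} and the north glue of each \texttt{UAB}, and since in \texttt{PATTERN} every \texttt{RAB} tile is placed with a Carry Blue tile immediately above it (glues pinned by Proposition~\ref{thm:blueglues}) and a \texttt{UAB} tile immediately below it, once we know which \texttt{RAB} type and which \texttt{UAB} type occupy a given pair of stacked indices we conclude that their shared vertical glue is equal. So it suffices to design sub-patterns that (i) force the intended \texttt{RAB}/\texttt{UAB} types into stacked positions and (ii) tie all four of the relevant types to one glue.

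This is the role of \texttt{GE1}--\texttt{GE6}; by symmetry the \texttt{RAW}/\texttt{UAW} statements follow from \texttt{GE7}--\texttt{GE12} with the ``$1$''-flavoured glue names and the remaining unused glue names. In \texttt{GE1}, \texttt{GE2} and \texttt{GE3} the indices of interest have, to their east, the \texttt{G1} or \texttt{G3} tile (east/west glue 0* or 1*, fixed by Proposition~\ref{thm:greenglues}); to their west, a Black or White tile (east glue 0h or 1h, by Proposition~\ref{thm:bwglues}); above, a Carry Blue tile (south glue 0vc or 1vc, by Proposition~\ref{thm:blueglues}); and below, an \texttt{el-Black} or \texttt{el-White} tile (north glue 0vp, by Proposition~\ref{thm:elglues}), with Lemma~\ref{thm:ellipsis} used to carry horizontal glues across the ellipses. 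These three fixed neighbours force a unique \texttt{RAB} type and a unique \texttt{UAB} type at each such index, so reading off the shared vertical glue in each sub-pattern, together with one placement in which both ON-type \texttt{UAB} tiles (or both ON-type \texttt{RAB} tiles) are forced into the same column, yields that the south glues of \texttt{RAB1} and \texttt{RAB3} and the north glues of \texttt{UAB1} and \texttt{UAB3} all coincide, which is the first bullet. Then \texttt{GE4}, \texttt{GE5}, \texttt{GE6}, obtained from \texttt{GE1}, \texttt{GE2}, \texttt{GE3} by swapping \texttt{G1},\texttt{G3} for \texttt{G2},\texttt{G4} (with \texttt{GE6} a copy of \texttt{GE3}), run the same argument with the ``$x$'' glues and give the second bullet; \texttt{GE7}--\texttt{GE12} give the last two by symmetry.

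The routine obligations---that each \texttt{GE\#} assembles under $T$, so that the sufficiency of $T$ is not lost, and that it splices into \texttt{PATTERN}---are asserted here and checked in Section~6. The step I expect to be the real obstacle is the joint design of the six sub-patterns: one has to arrange the coloured neighbours so that at every index where we want to read a vertical glue all three of its north/east/west neighbours are \emph{forced} (not merely restricted) by the earlier glue-enforcement results, so that the \texttt{RAB} or \texttt{UAB} type there is uniquely determined rather than just one of two; and one has to lay out the columns so that the resulting equalities chain across all four types in each bullet instead of collapsing into two disconnected pairs. Granted that geometry, each individual deduction is a one-line ``inspect the neighbours and invoke uniqueness'' step of the kind already carried out in Proposition~\ref{thm:threearrowglues}.
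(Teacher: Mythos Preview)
Your proposal is correct and follows essentially the same approach as the paper: assume $|T'|\le 46$, invoke the already-pinned glues from Proposition~\ref{thm:threearrowglues}, and then use \texttt{GE1}--\texttt{GE3} to force specific \texttt{UAB}/\texttt{RAB} types into stacked positions so that the shared edge reads off the desired equalities, with the remaining bullets by symmetry. The paper's concrete chain is \texttt{GE1}: \texttt{UAB1} below \texttt{RAB1}; \texttt{GE2}: \texttt{UAB1} below \texttt{RAB3}; \texttt{GE3}: \texttt{UAB3} below \texttt{RAB1} (identifying each \texttt{UAB} type by its south and west glues and each \texttt{RAB} type by its east glue alone), which is a slightly tighter version of the chaining you sketch but not a different idea.
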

\begin{figure}[tb]
\begin{center}
\includegraphics[width=350pt]{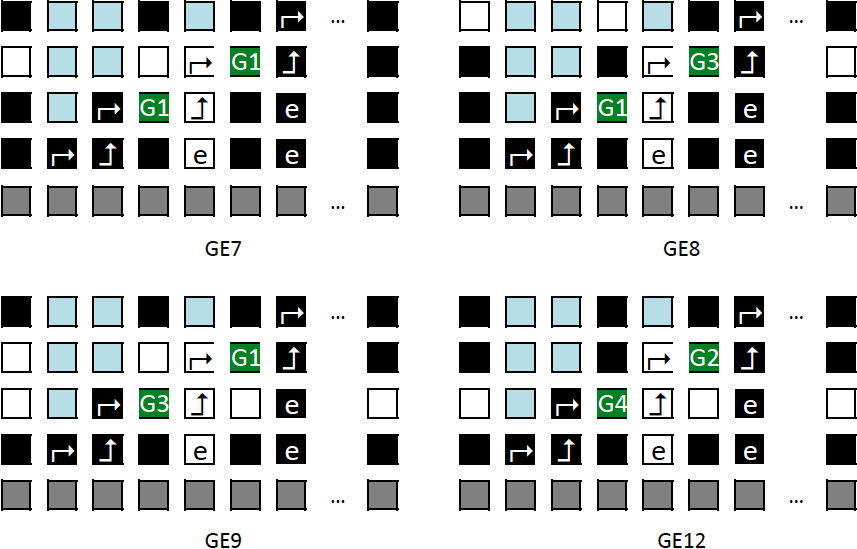}
\end{center}
\caption{\texttt{GE7} through \texttt{GE12} are the analogues to Fig.~\ref{fig:ge1} for RAW and UAW.}
\label{fig:ge7}
\end{figure}
\begin{proof}
By symmetry with all three other cases, we only prove the case for the 0* glues (the first claim).  We assume that the black box can find a solution using at most 46 tile types, and we start with the glue scheme as in Fig.~\ref{fig:recaparrows}.

Now there must be some north glue for tile type \texttt{UAB1}, by choice call it 0*.  We know a tile of type \texttt{UAB1} must be placed at $(5,2)$ of \texttt{GE1} in Fig.~\ref{fig:ge1} because the south and west glues match.  Then we also know that 0* can't be the same as any of the previously existing vertical glues.  If it was, then we wouldn't be able to place a tile colored RAB at $(5,3)$, as its (south,west = 0h) glue pair would require that a tile of a previously described color attach at this index.

In fact, we know exactly what tile type is placed at $(5,3)$- it is \texttt{RAB1} because of its east glue.  So we see that the south glue of \texttt{RAB1} is also 0*.  If we look at sub-pattern \texttt{GE2}, we see that a tile of type \texttt{UAB1} is again placed at $(5,2)$, this time with type \texttt{RAB3} above it, so the south glue of \texttt{RAB3} is 0*.  Finally we consider sub-pattern \texttt{GE3}.  This time it is type \texttt{UAB3} that must be placed at $(5,2)$, and match its north glue to the 0* south glue of \texttt{RAB1}.

We also note that 0*, 0x, 1*, and 1x must all be distinct, or tiles clash in all sub-patterns at $(5,3)$.  $\bullet$
\end{proof}
\begin{prop}
\label{thm:section5main}
If \texttt{GE1} through \texttt{GE12} appear as sub-patterns of \texttt{PATTERN}, along with \texttt{LB1} through \texttt{LB118},
 then in any minimum tile type set solving \texttt{PATTERN}, the glues of all tile types that can be used in \texttt{CIRCUIT} must be the same as we suggest in $T$, up to isomorphic symmetry; or the black box must find a minimum tile type set larger than 46.
\end{prop}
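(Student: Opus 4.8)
The plan is to assemble this proposition directly from the chain of glue-identification results already proved in this section. As in every earlier statement, I would begin by assuming the black box returns a tile type set of size at most $46$; then Proposition~\ref{thm:section4main} forces every lower bound from Section~4 to hold with equality, so the set contains exactly one tile type of each of Black and White, exactly two of each of el-Black and el-White, and exactly four of each of Carry Blue, RAB, RAW, UAB, and UAW. These nine colors are precisely the colors occurring in \texttt{CIRCUIT}, so a tile type can be used in \texttt{CIRCUIT} only if its color is among them.

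Next I would invoke the earlier propositions in the order in which they were established, noting that at each step the freshly introduced glue names are, by uniqueness, forced distinct from all names fixed before. Proposition~\ref{thm:bwglues} pins down $\#$ and $0h,1h$ and hence all four glues of the Black and White types; Proposition~\ref{thm:blueglues} then pins down $0vc,1vc$ and the complete glue assignment of the four Carry Blue types; Proposition~\ref{thm:elglues} pins down $0vp,1vp$ and all glues of the el-Black and el-White types; Lemma~\ref{thm:ellipsis} lets horizontal glues be read across ellipses; Proposition~\ref{thm:greenglues} pins down the Green glues (needed only later, harmless here); Proposition~\ref{thm:threearrowglues} pins down the (south,east,west) glues of every UAB and UAW type and the (north,east,west) glues of every RAB and RAW type; and finally Proposition~\ref{thm:lastarrowglues}, using \texttt{GE1} through \texttt{GE12}, pins down the one remaining coordinate of each arrow type --- the north glue of the UAB/UAW types and the south glue of the RAB/RAW types --- as $0*,1*,0x,1x$.

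The remaining work is bookkeeping: I would check that the coordinate/color pairs covered by Propositions~\ref{thm:bwglues}, \ref{thm:blueglues}, \ref{thm:elglues}, \ref{thm:greenglues}, \ref{thm:threearrowglues} and~\ref{thm:lastarrowglues} together exhaust all four glue slots of every tile type of every \texttt{CIRCUIT} color, and that the successive ``call it \dots'' choices of new glue names are mutually consistent. Since each of those earlier proofs already discharges its own uniqueness-based distinctness claims against the previously fixed names, the family of choices glues together into a single global relabeling of glue names; that relabeling is exactly the isomorphism in the statement. Hence, up to isomorphism, the glues of every tile type usable in \texttt{CIRCUIT} agree with those of $T$, unless the black box is forced to use more than $46$ tiles.

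I expect the only real obstacle to be organizational rather than mathematical: ensuring the list of covered coordinate/color pairs is genuinely complete --- in particular that nothing about the RAB, RAW, UAB, or UAW types is left unconstrained after combining Propositions~\ref{thm:threearrowglues} and~\ref{thm:lastarrowglues} --- and that the naming across propositions is coherent. Given the preceding results of this section, the aggregation itself requires no new ideas.
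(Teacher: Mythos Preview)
Your proposal is correct and follows exactly the paper's approach: the paper's own proof is a one-line aggregation (``Aggregating the results in this Section~5 correctly forces all glue names to match those assigned in $T$, up to isomorphic symmetry''), and you have simply written out that aggregation in detail by listing which proposition fixes which glue coordinates for which colors. Your more explicit bookkeeping---checking completeness over all four glue slots of every \texttt{CIRCUIT} color and noting that the successive naming choices cohere into a single relabeling---is a welcome elaboration but not a different argument.
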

\begin{proof}
\label{thm:section5final}
Aggregating the results in this Section 5 correctly forces all glue names to match those assigned in $T$, up to isomorphic symmetry.  $\bullet$
\end{proof}
\section{Integration and Correctness}
In this Section we prove that 29-PATS is NP-hard.  We start by mentioning and dismissing a minor concern with the design- the possibility that the black box could choose a set of elements to subtract such that at some point, the running total ``becomes negative" after subtracting some element; then it could keep subtracting elements and eventually reach 0, though this subset does not actually sum to the original target $n$.

Our design precludes this possibility by requiring the height of the pattern, equal to $\max(\lceil\log_2(n+\Sigma_{\text{all}})\rceil,21)$, to be sufficiently large: if the running total ever ``goes negative," then even by subtracting every remaining element of $S$, we cannot possibly make it back down to 0 again.

When an element is subtracted from the running total, the operation that we are actually performing is subtraction-mod-$2^{\text{row\#}}$, and we use this idea to join together sub-patterns.  
We can verify that every color sub-pattern begins and ends with a column of squares colored by Black and White, or their copies.  Then \textit{every east glue and every west glue of every sub-pattern is either 0h or 1h}.  Taking the italicized text as an invariant gives us the needed result.
\begin{prop}
\label{tim:splice_together}
Given 2 sub-patterns $P_1$ and $P_2$ of the same height $\text{row\#}$ such that all of the east glues of $P_1$ and all of the west glues of $P_2$ are 0h or 1h, we can join them together into one sub-pattern $P^*$ using the tiles of $T$.
\end{prop}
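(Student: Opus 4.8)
The plan is to exhibit an explicit ``gluing column'' of tiles from $T$ that sits between $P_1$ and $P_2$ and mediates the junction. The key observation is that the east boundary of $P_1$ presents, in each row, a glue from $\{0h, 1h\}$, and the west boundary of $P_2$ \emph{requires}, in each row, a glue from $\{0h, 1h\}$; these two glue sequences need not agree row-by-row. Since subtraction is really subtraction-mod-$2^{\text{row\#}}$, the correct bridge is a single column of Carry Blue tiles (the half-subtractor column of Fig.~\ref{fig:lb42}), which can subtract either $0$ or $1$ from the running total presented on its west side and emit the result on its east side, carrying mod $2^{\text{row\#}}$. Concretely, I would place $P_1$, then one column whose tiles are drawn from the four Carry Blue types \texttt{Ca1}--\texttt{Ca4} of $T$ (together with whatever seed tile the column's base requires), then $P_2$; the column's west glues are exactly the $\{0h,1h\}$-sequence exported by $P_1$, and because every Carry Blue type has west glue in $\{0h,1h\}$ and east glue in $\{0h,1h\}$ (Proposition~\ref{thm:blueglues}), such a column can always be assembled and its east side again presents a $\{0h,1h\}$-sequence.

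First I would recall from Proposition~\ref{thm:blueglues} the exact glue table of the four Carry Blue types: their $(\text{east},\text{west})$ pairs realize all four combinations over $\{0h,1h\}$, their north/south glues are $0vc$ or $1vc$, and they act as half-subtractors. Next I would argue that, reading the Carry Blue column from the bottom up, the tile placed at each height is forced: its west glue is dictated by $P_1$'s east glue at that row, and its south glue is the carry $0vc/1vc$ passed up from the tile below (with the base case handled by the seed, which supplies the initial $0vc$ carry). Since the four Carry Blue types are distinguished precisely by their $(\text{east},\text{west})$ glue pairs and also carry a well-defined north glue, the column assembles uniquely and exports on its east side the bit-string (running-total $-\,b$) mod $2^{\text{row\#}}$ for the chosen subtraction bit $b$; in particular that east string again lies in $\{0h,1h\}^{\text{row\#}}$, so it legally abuts $P_2$'s west boundary. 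Then I would note that whether we want the bridging column to subtract $0$ or $1$ is a free choice encoded in the seed glue at the base of the column, exactly as elements of $S$ are toggled on or off in \texttt{CIRCUIT} (Proposition~\ref{thm:section3main}); choosing $b=0$ gives a pure ``pass-through'' bridge when we merely want to concatenate.

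The main obstacle I anticipate is the boundary bookkeeping at the top and bottom of the bridging column: one must check that the seed row supplies the right initial carry and that, because $\text{row\#} \ge \lceil \log_2(n+\Sigma_{\text{all}})\rceil$ (the height was chosen large enough that the running total never genuinely goes negative, as noted at the start of Section~6), the final carry out of the top of the column is $0vc$, so no spurious high-order bit is produced and the exported east string is a faithful mod-$2^{\text{row\#}}$ value. A secondary point to verify is that inserting this column does not create any uniqueness violation with tiles already used elsewhere in \texttt{PATTERN}: this is immediate because the column uses only the already-present Carry Blue types of $T$ and introduces no new tile types. Once these checks are in place, concatenating $P_1$, the Carry Blue bridge, and $P_2$ yields the desired sub-pattern $P^*$ assembled entirely by $T$, which establishes the proposition and, combined with Proposition~\ref{thm:section3main} and the invariant that every sub-pattern begins and ends with a Black/White column, lets us splice \texttt{CIRCUIT} together with all the \texttt{LB\#} and \texttt{GE\#} sub-patterns into a single consistent \texttt{PATTERN}.
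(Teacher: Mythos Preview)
Your bridge is too weak. You correctly note that the east boundary of $P_1$ and the west boundary of $P_2$ are two (generally different) strings in $\{0h,1h\}^{\text{row\#}}$, but then you propose to reconcile them with \emph{one} column of Carry Blue tiles. A single Carry Blue column, fed a seed carry of $0vc$ or $1vc$, subtracts exactly $0$ or $1$ from the number encoded on its west side; its east side therefore differs from $P_1$'s east boundary by at most $1$ (mod $2^{\text{row\#}}$). That is not enough: for the splice to be valid the east glues of the bridge must match $P_2$'s west glues \emph{row by row}, i.e.\ the bridge must realise the specific difference $d=(\text{east}(P_1)-\text{west}(P_2))\bmod 2^{\text{row\#}}$, and $d$ can be any value in $\{0,\dots,2^{\text{row\#}}-1\}$. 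Your ``$b=0$ pass-through'' only works in the degenerate case where the two boundaries already agree.

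The paper's fix is exactly what the sentence before the proposition sets up: view both boundaries as numbers mod $2^{\text{row\#}}$ and insert between $P_1$ and $P_2$ a full \emph{element-subtraction block} of the kind used in \texttt{CIRCUIT} (RA/UA/el/Carry tiles spanning $1+\text{row\#}$ columns), with the ``element'' hard-coded to $d$ and the seed glue in its first column set to the $*$ (subtract ON) choice. That block takes any $\{0h,1h\}$ column on its west and outputs on its east the same number minus $d$ mod $2^{\text{row\#}}$, which by construction equals $P_2$'s west boundary. This uses only tiles already in $T$, is polynomial in size, and is what the paper means by ``design the needed subtraction-mod-$2^{\text{row\#}}$ to take place in between them.'' Your Carry-Blue-only idea could be repaired by stacking many such columns, but that would require up to $2^{\text{row\#}}$ of them---exponential in the bit-length of the Subset Sum instance---so it does not yield a polynomial reduction.
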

\begin{proof}
As just explained, our sub-pattern that represents the subtraction of one element from a number ``written" vertically is actually subtraction-mod-$2^{\text{row\#}}$.  Then we are free to think of both the east glues of $P_1$ and the west glues of $P_2$ as respectively input and output numbers mod $2^{\text{row\#}}$, and we can design the needed subtraction-mod-$2^{\text{row\#}}$ to take place in between them.  $\bullet$
\end{proof}
We have a full description of \texttt{PATTERN}.  It lists \texttt{CIRCUIT}, \texttt{LB1} through \texttt{LB118}, and \texttt{GE1} through \texttt{GE12} in order and then joining successively listed sub-patterns together as described in Proposition~\ref{tim:splice_together}.  
Theorem~\ref{thm:equivalence} combines the goals given at the beginning of Section 3.
\begin{thm}
\label{thm:equivalence}
For any instance of Subset Sum \textbf{SS}, if we reduce it to the 29-color pattern \texttt{PATTERN} and submit it to our black box as input, the size of a minimum tile type set that we receive as output will be equal to 46 if and only if \textbf{SS} is satisfiable.
\end{thm}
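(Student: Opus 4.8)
\section*{Proof proposal for Theorem~\ref{thm:equivalence}}

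The plan is to combine the three main strands already in place: circuit correctness (Proposition~\ref{thm:section3main}), the global size lower bound (Proposition~\ref{thm:section4main}), and glue rigidity for size-$46$ solutions (Proposition~\ref{thm:section5main}), together with the splicing lemma (Proposition~\ref{tim:splice_together}) and the standing claim that $T$ is consistent with every auxiliary pattern. I would prove the two implications separately.

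For the ``if'' direction, assume $\textbf{SS}$ is satisfiable, witnessed by $S^{\ast}\subseteq S$. First observe that the $20$ extra tile types of Fig.~\ref{fig:tileset2} are never placeable inside \texttt{CIRCUIT}, so the size-$46$ set $T$ with the seed $\sigma^{\ast}$ that tags with \texttt{*} exactly the first columns of the elements of $S^{\ast}$ still uniquely self-assembles \texttt{CIRCUIT} by Proposition~\ref{thm:section3main}; the ``negative running total'' objection is void because choosing the height to be $\max(\lceil\log_2(n+\Sigma_{\text{all}})\rceil,21)$ makes $2^{\text{row\#}}$ exceed both $n$ and $\Sigma_{\text{all}}$, so a column whose value once wraps past $0$ can never come back to $0$. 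Next, $T$ is consistent with each of \texttt{LB1}--\texttt{LB118} and \texttt{GE1}--\texttt{GE12}, and since every sub-pattern opens and closes on a Black/White column all of whose horizontal glues are $0h$ or $1h$, Proposition~\ref{tim:splice_together} lets us splice the whole list \texttt{CIRCUIT}, \texttt{LB1}--\texttt{LB118}, \texttt{GE1}--\texttt{GE12} into \texttt{PATTERN} using only tiles of $T$. Hence there is an RTAS of size $46$ that uniquely self-assembles \texttt{PATTERN}, while Proposition~\ref{thm:section4main} shows no smaller one exists; so the black box returns exactly $46$.

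For the ``only if'' direction, suppose the black box returns $46$, i.e.\ there is a minimum RTAS $(T',\sigma')$ with $|T'|=46$ that uniquely self-assembles \texttt{PATTERN}, and none smaller exists. Because \texttt{PATTERN} contains \texttt{LB1}--\texttt{LB118}, Proposition~\ref{thm:section4main} forces the per-color counts of $T'$ to match those of $T$ exactly (there is no slack), and Proposition~\ref{thm:section5main} then forces the glues of every tile type of $T'$ that can appear in \texttt{CIRCUIT} to coincide, up to isomorphism, with the names in $T$. Consequently the restriction of $(T',\sigma')$ to the \texttt{CIRCUIT} block is, up to isomorphism, a $26$-tile RTAS of the form $(T,\sigma^{\ast})$ for some seed-tagging, i.e.\ for some subset $S'\subseteq S$; since it self-assembles \texttt{CIRCUIT}, Proposition~\ref{thm:section3main} gives that $S'$ sums to $n$, so $\textbf{SS}$ is satisfiable. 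Equivalently, by contraposition: if $\textbf{SS}$ is unsatisfiable then by Proposition~\ref{thm:section3main} no $26$-tile set isomorphic to $T$ assembles \texttt{CIRCUIT}, hence by Proposition~\ref{thm:section5main} no $46$-tile set assembles \texttt{PATTERN}, so (a larger set always exists) the black box's output is strictly greater than $46$.

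I expect the main obstacle to be the bookkeeping in the ``only if'' direction: pinning down precisely in what sense a size-$46$ solution ``contains'' a copy of the $26$-tile circuit solver, so that Proposition~\ref{thm:section3main} may legitimately be invoked. One must check that the forced color counts together with the forced glue names leave exactly the seed-tagging freedom that Proposition~\ref{thm:section3main} analyses, and that the auxiliary \texttt{LB}/\texttt{GE} blocks cannot be solved by reusing circuit tiles in unexpected ways or conversely. This is exactly where the consistency and splicing claims for the auxiliary patterns do the work; granting those, the theorem is a short aggregation of the section results.
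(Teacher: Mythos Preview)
Your proposal is correct and follows essentially the same route as the paper: both directions are obtained by aggregating Proposition~\ref{thm:section3main}, Proposition~\ref{thm:section4main}, and Proposition~\ref{thm:section5main}, with the splicing and consistency claims supplying sufficiency of $T$ on the full \texttt{PATTERN} and the glue-rigidity result reducing any size-$46$ solution to a choice of seed tags for \texttt{CIRCUIT}. The paper's own proof differs only cosmetically, phrasing the ``only if'' direction as a direct contrapositive (unsatisfiable $\Rightarrow$ output $>46$) and spelling out explicitly that the sole residual freedom at size $46$ is the south seed-glues for the row-$1$ RAB/RAW positions---exactly the bookkeeping you flag in your final paragraph.
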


\begin{proof}
From Proposition~\ref{thm:section4main} we know that the minimum tile set must have size at least 46; and from Proposition~\ref{thm:section5main} we know that if a minimum tile set has size 46, then it must be isomorphic to $T$, i.e., $T$ describes the \textit{necessary} tile set if the minimum tile set has size 46.

From Proposition~\ref{thm:section3main}, we know that $T$ is sufficient for any solvable \textbf{SS}.  Necessity and sufficiency of $T$ for solvable \textbf{SS} completes the ``if" direction of our result.  Now we show that $T$ is not sufficient if \textbf{SS} is not solvable.

We start by analyzing the extent of the ability of the black box to actually make decisions about glue choice for the sub-pattern \texttt{CIRCUIT}, if it intends to use only 46 tiles.  From our discussion in Section 3, we know that 
it is only for the southern glues of RAB and RAW tiles in the first row that the black box actually gets to make a choice.  For RAB tiles it can choose 0* or 0x; for RAW it can choose 1* or 1x.  Again by Proposition~\ref{thm:section3main} these choices exactly correspond to subtracting an element of $S$ or not.

However, by the assumption that we are working with an instance \textbf{SS} that is unsolvable, there is no sequence of seed glue choices for RAB and RAW tiles such that the east glues of the second-to-last column in \texttt{CIRCUIT} are all 0h, to match the uniformly 0h west glues required by the last column's uniformly Black squares.

Then because $T$ is forced when we are restricted to at most 46 tile types, it must be the case that for inputs \texttt{PATTERN} that result from reducing unsolvable instances of \textbf{SS}, the black box finds a minimum tile set of size strictly larger than 46.  $\bullet$
\end{proof}
Theorem~\ref{thm:equivalence} directly implies Theorem~\ref{thm:main} (our main result which was stated in the Introduction).  An instance of Subset Sum \textbf{SS} is solvable if and only if the black box outputs a minimum RTAS size of 46 for an input of \texttt{PATTERN}, completing the reduction.
\begin{coro}
The original PATS problem cannot be approximated to within a 47/46 ratio.
\end{coro}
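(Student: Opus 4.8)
The plan is to derive the inapproximability result as an immediate consequence of Theorem~\ref{thm:equivalence} together with the {\bf NP}-hardness of Subset Sum, using the standard gap-amplification-free argument that a multiplicative approximation algorithm would decide an {\bf NP}-hard problem. First I would recall the two facts the reduction of Theorem~\ref{thm:equivalence} establishes: if {\bf SS} is solvable, the minimum tile set solving \texttt{PATTERN} has size exactly $46$; and if {\bf SS} is unsolvable, the minimum tile set has size strictly larger than $46$, i.e.\ at least $47$. Since \texttt{PATTERN} has a constant $29$ colors and is produced in polynomial time, this is a polynomial-time reduction with a ``gap'' located exactly at $46$ versus $47$.

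The key step is then the contrapositive. Suppose for contradiction that there is a polynomial-time algorithm $\mathcal{A}$ for PATS that always returns a tile set of size at most $\frac{47}{46}$ times the optimum. Given an instance {\bf SS}, build \texttt{PATTERN} and run $\mathcal{A}$ on it. If {\bf SS} is solvable, the optimum is $46$, so $\mathcal{A}$ returns a set of size at most $\frac{47}{46}\cdot 46 = 47$; but by the lower bound of Proposition~\ref{thm:section4main} no solving tile set has size below $46$, and moreover (as argued in the proof of Theorem~\ref{thm:equivalence}) if the size were exactly $46$ the instance would have to be solvable --- so in the solvable case $\mathcal{A}$'s output has size in $\{46,47\}$. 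If {\bf SS} is unsolvable, the optimum is at least $47$, so $\mathcal{A}$'s output has size at least $47$; in particular it is never $46$. Hence {\bf SS} is solvable if and only if $\mathcal{A}$ returns a tile set of size exactly $46$, which can be checked in polynomial time. This would place the {\bf NP}-hard Subset Sum problem in {\bf P}, so unless ${\bf P}={\bf NP}$ no such algorithm exists, and PATS cannot be approximated within ratio $47/46$.

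The main obstacle, such as it is, is purely a matter of careful bookkeeping at the boundary of the gap: one must make sure that the approximation guarantee $\frac{47}{46}$, applied to optimum $46$, cannot produce a value of $45$ or below (it cannot, since $\frac{47}{46}\cdot 46 = 47 > 46 \ge$ optimum and an approximation algorithm for a minimization problem never returns below the optimum), and that in the unsolvable case the optimum is genuinely $\ge 47$ rather than $= 46$ --- which is exactly what Theorem~\ref{thm:equivalence} gives. I would also note in passing that the statement should really be read as ``no polynomial-time $\frac{47}{46}$-approximation exists unless ${\bf P}={\bf NP}$,'' and that the same argument shows hardness of approximation for $k$-PATS with $k=29$ as well, the original (unbounded-color) PATS following since it generalizes $29$-PATS.
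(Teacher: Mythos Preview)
Your overall strategy matches the paper's intent (the paper states the corollary without proof, as an immediate consequence of Theorem~\ref{thm:equivalence}), but your argument contains a genuine logical error at the decision step.

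You claim that ``{\bf SS} is solvable if and only if $\mathcal{A}$ returns a tile set of size exactly $46$.'' The backward implication is fine, but the forward implication fails: when {\bf SS} is solvable, the optimum is $46$, and a $\tfrac{47}{46}$-approximation is only guaranteed to output a set of size at most $\tfrac{47}{46}\cdot 46 = 47$. Nothing prevents $\mathcal{A}$ from returning a solution of size $47$ in this case. Since the unsolvable case also allows (indeed forces) an output of size $\ge 47$, an output of size $47$ does not let you decide {\bf SS}. Your proposed decision procedure therefore does not work, and no contradiction is reached.

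The gap is exactly at the boundary ratio. The argument you wrote \emph{does} go through verbatim for any ratio $r < \tfrac{47}{46}$: then on a solvable instance $\mathcal{A}$ outputs at most $r\cdot 46 < 47$, hence (by integrality) exactly $46$, and the biconditional holds. So what the reduction actually yields is that no polynomial-time $(\,\tfrac{47}{46}-\varepsilon\,)$-approximation exists for any $\varepsilon>0$ unless ${\bf P}={\bf NP}$; equivalently, PATS cannot be approximated to a ratio strictly better than $\tfrac{47}{46}$. The corollary as phrased in the paper should be read in this sense. If you want to keep the write-up, replace the equality $\tfrac{47}{46}$ by a strict inequality in the hypothesis on $\mathcal{A}$ and the rest of your proof stands.
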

\begin{coro}
$k$-PATS is not in PTAS for any $k\geq29$.
\end{coro}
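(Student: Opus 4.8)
The plan is to derive the corollary directly from the hardness gap produced by Theorem~\ref{thm:equivalence}, combined with the (weak) \textbf{NP}-hardness of Subset Sum and the fact that the reduction $\textbf{SS}\mapsto\texttt{PATTERN}$ runs in polynomial time. First I would record the gap explicitly. By Proposition~\ref{thm:section4main}, every RTAS that uniquely self-assembles \texttt{PATTERN} has size at least $46$; by Theorem~\ref{thm:equivalence}, the minimum size equals $46$ exactly when \textbf{SS} is solvable, and hence is at least $47$ when \textbf{SS} is unsolvable. So the reduction is gap-producing: solvable instances map to patterns of optimum $46$, unsolvable ones to patterns of optimum $\ge 47$, with nothing in between. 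I would also verify polynomiality of the reduction: the height $\max(\lceil\log_2(n+\Sigma_{\text{all}})\rceil,21)$ is polynomial in the bit length of \textbf{SS}, the width of \texttt{CIRCUIT} is $O(|S|\cdot\text{row\#})$, and there are only constantly many \texttt{LB}/\texttt{GE} sub-patterns, each of polynomial size, so $|\texttt{PATTERN}|$ is polynomial in the input.

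Next I would turn the gap into non-approximability. Suppose $A$ were a polynomial-time $\alpha$-approximation algorithm for $k$-PATS with $k\ge 29$ and $1\le\alpha<47/46$; for concreteness a PTAS run with $\varepsilon=1/47$ supplies such an $A$, since $1+\tfrac{1}{47}=\tfrac{48}{47}<\tfrac{47}{46}$. A $29$-colored pattern is in particular a $k$-colored pattern for every $k\ge 29$ (the surplus colors simply go unused), so \texttt{PATTERN} is a legitimate input to $A$ and has the same optimum as under $29$-PATS. On an instance of optimum $46$, $A$ returns a valid tile set whose size is a positive integer at most $46\alpha<47$, hence exactly $46$; on an instance of optimum $\ge 47$, $A$ returns a value $\ge 47$. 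Thus $A$ decides solvability of \textbf{SS} in polynomial time, forcing $\mathbf{P}=\mathbf{NP}$. Consequently $k$-PATS admits no PTAS for any $k\ge 29$ unless $\mathbf{P}=\mathbf{NP}$; running the same argument with the number of colors unrestricted gives the companion statement that PATS cannot be approximated within a $47/46$ ratio unless $\mathbf{P}=\mathbf{NP}$.

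I do not expect a genuine obstacle: the hard work—pinning the optimum to the exact value $46$, and separating $46$ from $\ge 47$—is already carried out in Sections~4--6. The only points that need a touch of care are (i) confirming the reduction is genuinely of polynomial size, which matters because Subset Sum is only weakly \textbf{NP}-hard, so I would state sizes in terms of the bit length of the instance rather than its numerical magnitude; and (ii) the integrality step that upgrades ``output value $<47$'' to ``output value $=46$'', which relies on tile-set cardinalities being integers and on the approximation algorithm always returning a feasible (uniquely self-assembling) tile set.
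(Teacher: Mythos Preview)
Your argument is correct and is precisely the standard derivation of a PTAS impossibility from a constant additive gap; the paper states the corollary without proof, so there is nothing to compare against beyond noting that your reasoning is exactly what the authors leave implicit. Your attention to the polynomiality of the reduction (since Subset Sum is only weakly \textbf{NP}-hard) and to the integrality step that turns ``output $<47$'' into ``output $=46$'' are both appropriate and complete the argument.
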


	\bibliographystyle{splncs}

\end{document}